\newif\ifcameraready
  \camerareadyfalse

\documentclass[a4paper, UKenglish, numberwithinsect, cleveref, autoref, thm-restate, pdfa]{lipics-v2021}
\usepackage{mathtools}
\usepackage{bbding}

\title{Responsibility Measures for Conjunctive Queries with Negation}

\author{Meghyn Bienvenu}{Univ. Bordeaux, CNRS, Bordeaux INP, LaBRI, UMR5800, F-33400 Talence, France\and
\url{https://www.labri.fr/perso/meghyn/}}{meghyn.bienvenu@cnrs.fr}{https://orcid.org/0000-0001-6229-8103}{}
\author{Diego Figueira}{Univ. Bordeaux, CNRS, Bordeaux INP, LaBRI, UMR5800, F-33400 Talence, France\and
\url{https://www.labri.fr/perso/dfigueir/}}{diego.figueira@cnrs.fr}{https://orcid.org/0000-0003-0114-2257}{}
\author{Pierre Lafourcade}{Univ. Bordeaux, CNRS, Bordeaux INP, LaBRI, UMR5800, F-33400 Talence, France\and
\url{https://www.labri.fr/perso/plafourca001/}}{pierre.lafourcade@u-bordeaux.fr}{https://orcid.org/0009-0004-4810-1289}{}
\authorrunning{Meghyn Bienvenu, Diego Figueira, and Pierre Lafourcade}
\Copyright{M. Bienvenu, D. Figueira, P. Lafourcade}

\begin{CCSXML}
  <ccs2012>
     <concept>
         <concept_id>10003752.10010070.10010111.10003623</concept_id>
         <concept_desc>Theory of computation~Data provenance</concept_desc>
         <concept_significance>500</concept_significance>
         </concept>
     <concept>
         <concept_id>10003752.10010070.10010111.10011736</concept_id>
         <concept_desc>Theory of computation~Incomplete, inconsistent, and uncertain databases</concept_desc>
         <concept_significance>100</concept_significance>
         </concept>
     <concept>
         <concept_id>10003752.10010070.10010111.10011734</concept_id>
         <concept_desc>Theory of computation~Logic and databases</concept_desc>
         <concept_significance>100</concept_significance>
         </concept>
   </ccs2012>
\end{CCSXML}
\ccsdesc[500]{Theory of computation~Data provenance}
\ccsdesc[300]{Information systems~Relational database model}
\ccsdesc[100]{Theory of computation~Incomplete, inconsistent, and uncertain databases}
\ccsdesc[100]{Theory of computation~Logic and databases}

\keywords{query responsibility measures, conjunctive queries with negation, non-monotone queries, Shapley value, explainability, weighted sums of minimal supports (WSMS)}

\funding{ANR grants INTENDED (ANR-19-CHIA-0014) and EXPAND (ANR-25-CE23-1215)
}

\nolinenumbers %
\pdfoutput=1 %
\ifcameraready
\else
    \hideLIPIcs  %
\fi

\theoremstyle{claimstyle}

\Crefname{claim}{Claim}{Claims}

\usepackage{dutchcal} %
\usepackage{xspace} %
\usepackage{booktabs} %
\usepackage{marginnote}
\usepackage[bibliography=common]{apxproof} %
\ifcameraready

\else
    
\fi

\usepackage{stmaryrd}

\usepackage[HTML]{xcolor}
\definecolor{Desire}{HTML}{eb3b5a} %
\definecolor{Boyzone}{HTML}{2d98da} %
\definecolor{Royal Blue}{HTML}{3867d6} %
\definecolor{NYC Taxi}{HTML}{f7b731} %
\definecolor{Beniukon Orange}{HTML}{fa8231}
\definecolor{Algal Fuel}{HTML}{20bf6b} %
\definecolor{Innuendo}{HTML}{a5b1c2} %
\definecolor{Twinkle Blue}{HTML}{d1d8e0} %
\definecolor{Blue Horizon}{HTML}{4b6584} %
\definecolor{Gloomy Purple}{HTML}{8854d0} %

\colorlet{cBlue}{Royal Blue}
\colorlet{cYellow}{NYC Taxi}
\colorlet{cOrange}{Beniukon Orange}
\colorlet{cGreen}{Algal Fuel}
\colorlet{cRed}{Desire}
\colorlet{cGrey}{Innuendo}
\colorlet{cDarkGrey}{Blue Horizon}
\colorlet{cLightGrey}{Twinkle Blue}
\colorlet{cPurple}{Gloomy Purple}

\usepackage[xcolor, cleveref, notion, quotation, electronic]{knowledge}
\usepackage{mathcommand}
\knowledgeconfigure{protect quotation={tikzcd}}
\knowledgeconfigure{diagnose line=true, diagnose bar=true}

\definecolor{Dark Ruby Red}{HTML}{580507}
\definecolor{Dark Blue Sapphire}{HTML}{053641}
\definecolor{Dark Gamboge}{HTML}{be7c00}

\IfKnowledgePaperModeTF{
}{
    \knowledgestyle{intro notion}{color={Dark Ruby Red}, emphasize}
    \knowledgestyle{notion}{color={Dark Blue Sapphire}}
    \hypersetup{
        colorlinks=true,
        breaklinks=true,
        linkcolor={cGreen}, %
        citecolor={cGreen}, %
        filecolor={cGreen}, %
        urlcolor={cGreen},
    }
    \IfKnowledgeElectronicModeTF{
    }{
        \knowledgeconfigure{anchor point color={Dark Ruby Red}, anchor point shape=corner}
        \knowledgestyle{intro unknown}{color={Dark Gamboge}, emphasize}
        \knowledgestyle{intro unknown cont}{color={Dark Gamboge}, emphasize}
        \knowledgestyle{kl unknown}{color={Dark Gamboge}}
        \knowledgestyle{kl unknown cont}{color={Dark Gamboge}}
    }
}

\usepackage{soul}%

\renewcommand{\epsilon}{\varepsilon}

\setlength{\marginparwidth}{3.2cm}
\usepackage[backgroundcolor=orange!20, textcolor={Dark Ruby Red}, textsize=tiny
,disable %
]{todonotes}
\definecolor{green}{RGB}{0,120,0}
\definecolor{hlyellow}{RGB}{250, 250, 190}
\definecolor{diegoeditcolor}{RGB}{210,210,255}
\definecolor{pierreeditcolor}{RGB}{210,255,210}
\definecolor{meghyneditcolor}{RGB}{255, 225, 186}
\newcommand{\sidediego}[1]{}
\newcommand{\sidepierre}[1]{}
\newcommand{\sidemeghyn}[1]{}
\newcommand{\pierre}[1]{}
\newcommand{\meghyn}[1]{}

\newcommand{\diego}[1]{}
\definecolor{light-gray}{gray}{0.9}

\newcommand{\siderev}[1]{}
\newcommand{\rev}[1]{}

\newcommand{\proofcase}[1]{\noindent\colorbox{cLightGrey}{#1}~~}

\renewcommand{\leq}{\leqslant}
\renewcommand{\geq}{\geqslant}
\renewcommand{\emptyset}{\varnothing}

\knowledgenewrobustcmd{\dcup}{\mathop{\cmdkl{\uplus}}} %

\newcommand{\set}[1]{\{#1\}}

\newrobustcmd{\defeq}{\mathrel{\hat{=}}}
\newrobustcmd{\defcq}{\mathrel{{\vcentcolon}{\text{--}}}}
\newcommand{\eqdef}{\defeq}
\newcommand{\+}[1]{\mathcal{#1}}
\newrobustcmd{\Nat}{\mathbb{N}}
\newrobustcmd{\Reals}{\mathbb{R}}
\newrobustcmd\pset[1]{\+P(#1)} %
\newrobustcmd{\poly}{\mathop{\textrm{poly}}} %

\knowledgenewrobustcmd{\polyrx}{\mathrel{\cmdkl{\le_{\textit{poly}}}}} %
\knowledgenewrobustcmd{\polyeq}{\mathrel{\cmdkl{\equiv_{\textit{poly}}}}} %

\newrobustcmd\smashxrightarrow[1]{%
  \raisebox{-.04em}{$%
    \xrightarrow{\smash{\raisebox{-.1em}{%
      \tiny{#1}%
    }}}%
  $}%
}%

\knowledgenewrobustcmd{\homto}{\mathrel{\cmdkl{\smashxrightarrow{hom}}}}
\knowledgenewrobustcmd{\Chomto}[1][C]{\mathrel{\cmdkl{\xrightarrow{\smash{{#1}\textit{-hom}}}}}}

\knowledgenewrobustcmd{\Const}{\cmdkl{\textup{Const}}}
\knowledgenewrobustcmd{\Var}{\cmdkl{\textup{Var}}}
\knowledgenewrobustcmd{\ann}{\cmdkl{\nu}}
\knowledgenewrobustcmd{\oneann}{\cmdkl{\mathbf{1}}}
\knowledgenewrobustcmd{\adom}{\cmdkl{\textit{adom}}}

\knowledgenewrobustcmd{\atoms}{\cmdkl{\textit{atoms}}}
\knowledgenewrobustcmd{\vars}{\cmdkl{\textit{var}}}%
\knowledgenewrobustcmd{\const}{\cmdkl{\textit{const}}}
\knowledgenewrobustcmd{\mterms}{\cmdkl{\textit{term}}}
\knowledgenewrobustcmd{\arity}{\cmdkl{\textup{arity}}}
\knowledgenewrobustcmd{\class}{\mathcal{C}}
\knowledgenewrobustcmd{\evalgD}[2]{#1\cmdkl{\langle}#2\cmdkl{\rangle}} %
\knowledgenewrobustcmd{\evalaggD}[2]{#1\cmdkl{\langle}#2\cmdkl{\rangle}}
\knowledgenewrobustcmd{\evalqD}[2]{#1\cmdkl{\langle}#2\cmdkl{\rangle}}
\knowledgenewrobustcmd{\assto}{\mathrel{\cmdkl{\mapsto}}}

\knowledgenewrobustcmd{\CQ}{\cmdkl{\mathsf{CQ}}} %
\knowledgenewrobustcmd{\sjfCQ}{\cmdkl{\mathsf{sjf\text{-}CQ}}} %
\knowledgenewrobustcmd{\CQneq}{\cmdkl{\mathsf{CQ}^{\neq}}} %
\knowledgenewrobustcmd{\CQeq}{\cmdkl{\mathsf{CQ}^{=}}} %
\knowledgenewrobustcmd{\CQeqneq}{\cmdkl{\mathsf{CQ}^{\neq,=}}} %
\knowledgenewrobustcmd{\UCQ}{\cmdkl{\mathsf{UCQ}}} %
\knowledgenewrobustcmd{\CRPQ}{\cmdkl{\mathsf{CRPQ}}} %
\knowledgenewrobustcmd{\UCRPQ}{\cmdkl{\mathsf{UCRPQ}}} %
\knowledgenewrobustcmd{\ACQ}{\cmdkl{\mathsf{ACQ}}} %
\knowledgenewrobustcmd{\sjfACQ}{\cmdkl{\mathsf{sjf\text{-}ACQ}}} %

\knowledgenewrobustcmd{\core}{\cmdkl{\textit{core}}}
\knowledgenewrobustcmd{\emptytup}{\cmdkl{()}}
\knowledgenewrobustcmd{\hyperq}[1][q]{\cmdkl{\mathbf{G}_{#1}}}
\knowledgenewrobustcmd{\primalq}[1][q]{\cmdkl{\mathbf{G}^p_{#1}}}
\knowledgenewrobustcmd{\dimtup}{\cmdkl{\dim}}
\knowledgenewrobustcmd\vertex[1]{\cmdkl{V}(#1)}
\knowledgenewrobustcmd\edges[1]{\cmdkl{E}(#1)}

\knowledgenewrobustcmd{\maxSize}[1]{\cmdkl{\|}#1\cmdkl{\|}}
\knowledgenewrobustcmd{\sizeofD}[1][D]{\cmdkl{|}#1\cmdkl{|}}
\knowledgenewrobustcmd{\normOne}[1]{\cmdkl{\|}#1\cmdkl{\|_1}}
\knowledgenewrobustcmd{\normInf}[1]{\cmdkl{\|}#1\cmdkl{\|_{\infty}}}

\knowledgenewrobustcmd{\sizeofq}[1][q]{\cmdkl{|}#1\cmdkl{|}}
\knowledgenewrobustcmd{\sizeofgamma}[1][\gamma]{\cmdkl{|}#1\cmdkl{|}}

\knowledgenewrobustcmd{\reducesto}{\mathrel{\cmdkl{\leq_{\textit{poly}}}}}

\knowledgenewrobustcmd\bagmap{\cmdkl{\mathbf{b}}}
\knowledgenewrobustcmd\atommap{\cmdkl{\mathbf{a}}}
\knowledgenewrobustcmd\atommaplab{\cmdkl{\mathbf{\tilde a}}}
\knowledgenewrobustcmd\tagmap{\cmdkl{\mathbf{t}}}
\knowledgenewrobustcmd\tagmappath[1]{\cmdkl{\mathbf{t}[#1]}}
\newrobustcmd\tagmappathprime[1]{%
  \withkl{\kl[\tagmappath]}{%
    \cmdkl{\mathbf{t}'[#1]}%
  }%
}
\knowledgenewrobustcmd{\ghw}{\cmdkl{\textit{ghw}}}
\knowledgenewrobustcmd{\fghw}{\cmdkl{\textit{fghw}}}
\knowledgenewrobustcmd{\pghw}{\cmdkl{\textit{pghw}}}
\knowledgenewrobustcmd{\hw}{\cmdkl{\textit{hw}}}
\knowledgenewrobustcmd{\tw}{\cmdkl{\textit{tw}}}

\knowledgenewrobustcmd{\sjoin}[1][]{\mathop{\cmdkl{\ltimes_{#1}}}}
\knowledgenewrobustcmd{\costjoin}{\cmdkl{c_{sj}}}

\knowledgenewrobustcmd{\abotimes}[2]{\cmdkl{\bigotimes_{#1}}#2} %
\knowledgenewrobustcmd{\aboplus}[2]{\cmdkl{\bigoplus_{#1}}#2} %

\knowledgenewrobustcmd{\contr}{\cmdkl{\textit{contract}}}
\knowledgenewrobustcmd{\Mfacts}{\cmdkl{\mathbf{M}}}
\knowledgenewrobustcmd{\DBs}[1][\Sigma]{\cmdkl{\textup{DB}_{#1}}}
\knowledgenewrobustcmd{\evalPb}[1]{\cmdkl{\textup{\textsc{Eval-}}}#1}
\knowledgenewrobustcmd{\aug}[1]{{#1}^{\cmdkl{+}}}%
\knowledgenewrobustcmd{\restrictG}[2][G]{#1\cmdkl{[}{#2}\cmdkl{]}}
\knowledgenewrobustcmd{\evalCounting}[2]{\cmdkl{\#}{#1}\cmdkl{\langle}#2\cmdkl{\rangle}}
\knowledgenewrobustcmd{\counting}[1]{\cmdkl{\#}{#1}}
\knowledgenewrobustcmd{\PRel}{\cmdkl{P}}

\knowledgenewrobustcmd{\SNij}[1][i,j]{\cmdkl{\+S^N_{#1}}} %

\knowledgenewrobustcmd{\countGMS}[1]{\cmdkl{\#_{#1}^{\textsf{gms}}}}
\knowledgenewrobustcmd{\countMS}[1]{\cmdkl{\#_{#1}^{\textsf{ms}}}}
\knowledgenewrobustcmd{\countFMS}[1]{\cmdkl{\#_{#1}^{\textsf{fms}}}}
\knowledgenewrobustcmd{\evalCountMS}[1]{\cmdkl{\textsc{eval-}\#_{#1}^{\textsf{ms}}}}
\knowledgenewrobustcmd{\evalCountFMS}[1]{\cmdkl{\textsc{eval-}\#_{#1}^{\textsf{fms}}}}
\knowledgenewrobustcmd{\countAns}[1]{\cmdkl{\#_{#1}^{\textsf{hom}}}}
\knowledgenewrobustcmd{\evalCountAns}[1]{\cmdkl{\textsc{eval-}\#_{#1}^{\textsf{hom}}}}

\knowledgenewrobustcmd{\Ptime}{\cmdkl{\mathsf{P}}} %
\knowledgenewrobustcmd{\BPP}{\cmdkl{\mathsf{BPP}}} %
\knowledgenewrobustcmd{\FP}{\cmdkl{\mathsf{FP}}} %
\knowledgenewrobustcmd{\sP}{\cmdkl{\mathsf{\#P}}} %
\knowledgenewrobustcmd{\FPsP}{\cmdkl{\mathsf{FP}^\mathsf{\#P}}} %
\knowledgenewrobustcmd{\NP}{\cmdkl{\mathsf{NP}}} %
\knowledgenewrobustcmd{\sNP}{\cmdkl{\mathsf{\#NP}}} %
\knowledgenewrobustcmd{\FPsNP}{\cmdkl{\mathsf{FP}^{\mathsf{\#NP}}}} %
\knowledgenewrobustcmd{\PH}{\cmdkl{\mathsf{PH}}} %
\knowledgenewrobustcmd{\sPH}{\cmdkl{\mathsf{\#PH}}} %
\knowledgenewrobustcmd{\FPsPH}{\cmdkl{\mathsf{FP}^{\mathsf{\#PH}}}} %

\newcommand{\D}{\+{D}} %
\knowledgenewrobustcmd{\Dn}[1][\D]{#1_{\cmdkl{\subendo}}}
\knowledgenewrobustcmd{\Dx}[1][\D]{#1_{\cmdkl{\subexo}}}
\newcommand{\subendo}{\textup{\textsf{n}}}
\newcommand{\subexo}{\textup{\textsf{x}}}

\knowledgenewrobustcmd{\Sh}{\cmdkl{\mathrm{Sh}}} %
\knowledgenewrobustcmd{\Sym}{\cmdkl{\mathbb{P}}} %
\knowledgenewrobustcmd{\scorefun}[1][]{\cmdkl{\mathbf{\xi}_{#1}}} %
   \knowledgenewrobustcmd{\paramscorefun}[2][]{\cmdkl{\mathbf{\xi}}^{#2}_{#1}} %
   \knowledgenewrobustcmd{\PARAMscorefun}[2][]{\cmdkl{\Xi}^{#2}} %
\knowledgenewrobustcmd{\paramShapley}[2]{\cmdkl{\textup{SVC}^{#2}_{#1}}} %
   \knowledgenewrobustcmd{\dscorefun}[1][]{\cmdkl{\mathbf{\xi}}^{\cmdkl{\textsf{dr}}}_{#1}} %
   \knowledgenewrobustcmd{\Dscorefun}[1][]{\cmdkl{\Xi}^{\cmdkl{\textsf{dr}}}} %
   \newcommand{\minsupindex}{\textsf{ms}}
   \newcommand{\drasticindex}{\textup{\textsf{dr}}}
   \newcommand{\minmonsupindex}{\textup{\textsf{mps}}}
   \newcommand{\mondrasticindex}{\textup{\textsf{pdr}}}
   \knowledgenewrobustcmd{\drscorefun}[1][]{\cmdkl{\mathbf{\xi}}^{\cmdkl{\drasticindex}}_{#1}} %
   \knowledgenewrobustcmd{\msscorefun}[1][]{\cmdkl{\mathbf{\xi}}^{\cmdkl{\minsupindex}}_{#1}} %
   \knowledgenewrobustcmd{\monMSscorefun}[1][]{\cmdkl{\mathbf{\xi}}^{\cmdkl{\minmonsupindex}}_{#1}} %
   \knowledgenewrobustcmd{\monDRscorefun}[1][]{\cmdkl{\mathbf{\xi}}^{\cmdkl{\mondrasticindex}}_{#1}} %
   \knowledgenewrobustcmd{\MSscorefun}[1][]{\cmdkl{\Xi}^{\cmdkl{\minsupindex}}} %
   \knowledgenewrobustcmd{\msShapley}[1]{\cmdkl{\textup{SVC}^{\minsupindex}_{#1}}} %
\knowledgenewrobustcmd{\drShapley}[1]{\cmdkl{\textup{SVC}^{\drasticindex}_{#1}}} %
   \knowledgenewrobustcmd{\wscorefun}[2][]{\cmdkl{\mathbf{\xi}}^{#2}_{#1}} %
   \knowledgenewrobustcmd{\Wscorefun}[2][]{\cmdkl{\Xi}^{#2}} %
   \knowledgenewrobustcmd{\wShapley}[2]{\cmdkl{\textup{SVC}^{#2}_{#1}}} %
   \knowledgenewrobustcmd{\sscorefun}[1][]{\cmdkl{\mathbf{\xi}}^{\cmdkl{\textsf{s}}}_{#1}} %
   \knowledgenewrobustcmd{\Sscorefun}[1][]{\cmdkl{\Xi}^{\cmdkl{\textsf{s}}}} %
   \knowledgenewrobustcmd{\sShapley}[1]{\cmdkl{\textup{SVC}^{\textsf{s}}_{#1}}} %
   \knowledgenewrobustcmd{\sharpscorefun}[1][]{\cmdkl{\mathbf{\xi}}^{\cmdkl{\textsf{\#}}}_{#1}} %
   \knowledgenewrobustcmd{\SHARPscorefun}[1][]{\cmdkl{\Xi}^{\cmdkl{\textsf{\#}}}} %
   \knowledgenewrobustcmd{\sharpShapley}[1]{\cmdkl{\textup{SVC}^{\textsf{\#}}_{#1}}} %
   \knowledgenewrobustcmd{\pscorefun}[1][]{\cmdkl{\mathbf{\xi}}^{\cmdkl{\textsf{P}}}_{#1}} %
   \knowledgenewrobustcmd{\Pscorefun}[1][]{\cmdkl{\Xi}^{\cmdkl{\textsf{P}}}} %
   \knowledgenewrobustcmd{\pShapley}[1]{\cmdkl{\textup{SVC}^{\textsf{P}}_{#1}}} %
   \knowledgenewrobustcmd{\rscorefun}[1][]{\cmdkl{\mathbf{\xi}}^{\cmdkl{\textsf{R}}}_{#1}} %
   \knowledgenewrobustcmd{\Rscorefun}[1][]{\cmdkl{\Xi}^{\cmdkl{\textsf{R}}}} %
   \knowledgenewrobustcmd{\rShapley}[1]{\cmdkl{\textup{SVC}^{\textsf{R}}_{#1}}} %
   \knowledgenewrobustcmd{\mcscorefun}[1][]{\cmdkl{\mathbf{\xi}}^{\cmdkl{\textsf{MC}}}_{#1}} %
   \knowledgenewrobustcmd{\MCscorefun}[1][]{\cmdkl{\Xi}^{\cmdkl{\textsf{MC}}}} %
   \knowledgenewrobustcmd{\mcShapley}[1]{\cmdkl{\textup{SVC}^{\textsf{MC}}_{#1}}} %
   \knowledgenewrobustcmd{\sascorefun}[1][]{\cmdkl{\mathbf{\xi}}^{\cmdkl{\textsf{hom}}}_{#1}} %
   \knowledgenewrobustcmd{\SAscorefun}[1][]{\cmdkl{\Xi}^{\cmdkl{\textsf{hom}}}} %
   \knowledgenewrobustcmd{\saShapley}[1]{\cmdkl{\textup{SVC}^{\textsf{hom}}_{#1}}} %

\knowledgenewrobustcmd{\Minsups}[1]{\cmdkl{\mathsf{MS}_{#1}}} %

\knowledgenewrobustcmd{\shapscorefun}[1][]{\cmdkl{\mathbf{\xi}}^{\cmdkl{\textsf{SHAP}}}_{#1}} %

\knowledgenewrobustcmd{\Unif}[1][q]{\cmdkl{\mathbf{M}}_{#1}}

\knowledgenewrobustcmd{\qterms}{\cmdkl{\mathsf{terms}}}
\knowledgenewrobustcmd{\Equivs}{\cmdkl{\+E}}
\knowledgenewrobustcmd{\qE}[1][E]{\cmdkl{q_{#1}}}
\knowledgenewrobustcmd{\qEneq}[1][E]{\cmdkl{q_{#1}^{\neq}}}
\knowledgenewrobustcmd{\Punif}{\cmdkl{\mathbf{P}}_q}
\knowledgenewrobustcmd{\iso}{\cmdkl{\simeq}}
\knowledgenewrobustcmd{\Auto}[1]{\cmdkl{\textnormal{Aut}}(#1)}
\knowledgenewrobustcmd{\elimEq}[1]{\cmdkl{\widehat{#1}}}

\knowledgenewrobustcmd{\localfun}[1][q]{\cmdkl{\scorefun}^{\cmdkl{\textsf{\textup{LMS}}}}_{#1}}
\knowledgenewrobustcmd{\globalfun}[1][q]{\cmdkl{\scorefun}^{\cmdkl{\textsf{\textup{GMS}}}}_{#1}}

\knowledgenewrobustcmd{\LMSShapley}[1]{\cmdkl{\textup{SVC}^{\textsf{\textup{LMS}}}_{#1}}} %
\knowledgenewrobustcmd{\monMSShapley}[1]{\cmdkl{\textup{SVC}^{\minmonsupindex}_{#1}}} %
\knowledgenewrobustcmd{\monDRShapley}[1]{\cmdkl{\textup{SVC}^{\mondrasticindex}_{#1}}} %

\NewCommandCopy{\proofqedsymbol}{\qedsymbol}%
\newcommand{\exampleqedsymbol}{{$\vartriangleleft$}}%
\AtBeginEnvironment{proof}{\renewcommand{\qedsymbol}{\proofqedsymbol}}
\AtBeginEnvironment{nestedproof}{\renewcommand{\qedsymbol}{\nestedproofqedsymbol}}
\AtBeginEnvironment{example}{%
  \pushQED{\qed}\renewcommand{\qedsymbol}{\exampleqedsymbol}%
}
\AtEndEnvironment{example}{\popQED\endexample}

\knowledgenewrobustcmd{\myCheck}[1]{\cmdkl{\check #1}}
\knowledgenewrobustcmd{\myHat}[1]{\cmdkl{\hat #1}}

\knowledgenewrobustcmd{\Sigmapm}[1][\Sigma]{\cmdkl{{#1}^{\pm}}}
\knowledgenewrobustcmd{\spos}{\cmdkl{+}}
\knowledgenewrobustcmd{\sneg}{\cmdkl{-}}
\knowledgenewrobustcmd{\Dpm}[1][\D]{\cmdkl{#1^{\pm}}}
\knowledgenewrobustcmd{\Dpos}[1][\D]{\cmdkl{{#1}^+}}
\knowledgenewrobustcmd{\Dneg}[1][\D]{\cmdkl{{#1}^-}}
\knowledgenewrobustcmd{\querypm}[1][q]{\cmdkl{#1^{\pm}}}
\knowledgenewrobustcmd{\queryDpm}[2][q]{\cmdkl{#1_{#2}^{\pm}}}
\knowledgenewrobustcmd{\entailow}{\mathrel{\cmdkl{\Rrightarrow}}}
\knowledgenewrobustcmd{\queryyDpm}[1][\Dpm]{\cmdkl{q^\pm_{#1}}}
\knowledgenewrobustcmd{\modelspm}{\mathrel{\cmdkl{\models^{\pm}}}}
\knowledgenewrobustcmd{\paramShapleypm}[2]{\cmdkl{\pm\paramShapley{#1}{#2}}}
\knowledgenewrobustcmd{\ShapleypmMinSup}[1]{\cmdkl{\pm\textup{SVC}}^{\cmdkl{\minsupindex}}_{#1}}
\knowledgenewrobustcmd{\ShapleypmDrastic}[1]{\cmdkl{\pm\textup{SVC}}^{\cmdkl{\drasticindex}}_{#1}}

\knowledgenewrobustcmd{\entailcw}[1][C]{\mathrel{\cmdkl{\Rrightarrow_{#1}}}}
\knowledgenewrobustcmd{\modelspmcw}[1][C]{\mathrel{\cmdkl{\models_{#1}^{\pm}}}}

\knowledgenewrobustcmd{\Gaifman}[1]{\cmdkl{\mathcal{G}}(#1)}

\knowledgenewrobustcmd{\psubmu}[2]{\cmdkl{#1_{#2}}} 
\knowledgenewrobustcmd{\Atsubp}[1]{\cmdkl{\mathit{At}_{#1}}}
\knowledgenewrobustcmd{\Atsubpplus}[1]{\cmdkl{\mathit{At}^+_{#1}}}

\knowledgenewrobustcmd{\Dpmq}[1][q]{\cmdkl{\D^\pm_{#1}}}
\knowledgenewrobustcmd{\NegRelsq}[1][q]{\cmdkl{\+N_{#1}}}

\knowledge{text={resp.}, italic}
  | resp

\knowledge{text={i.e.}, italic}
  | ie

\knowledge{text={I.e.}, italic}
  | Ie

\knowledge{text={s.t.}, italic}
  | st

\knowledge{text={e.g.}, italic}
  | eg

\knowledge{text={vs.}, italic}
  | vs

\knowledge{text={w.r.t.}, italic}
  | wrt

\knowledge{text={a.k.a.}, italic}
  | aka

\knowledge{text={w.l.o.g.}, italic}
  | wlog

\knowledge{text={cf.}, italic}
  | cf

\knowledge{text={iff}, italic}
  | iff

\knowledge{text={r.e.}, italic}
  | r.e.
  | re

\knowledge{wrap=\textsf}
  | NL

\knowledge{notion, text={paraNL}, wrap=\textsf}
  | para-NL
  | paraNL

\knowledge{notion, text={\textup{FPT}}, wrap=\textsf}
  | FPT

\knowledge{notion}
  | fixed-parameter tractable

\knowledge{text={ExpSpace}, wrap=\textsf}
  | EXPSPACE
  | ExpSpace

\knowledge{text={\ensuremath{\textsf{\textup{AC}}_0}}, wrap=\textsf}
  | AC0

  \knowledge{text={2ExpSpace}, wrap=\textsf}
  | 2EXPSPACE
  | 2ExpSpace

\knowledge{text={PSpace}, wrap=\textsf}
  | PSpace
  | PSPACE

  \knowledge{text={PTime}, wrap=\textsf}
  | PTime
  | ptime

\knowledge{text={FP}, wrap=\textsf}
  | FP

\knowledge{text={NP}, wrap=\textsf}
  | NP

  \knowledge{text={P}, wrap=\textsf}
  | P

  \knowledge{notion, text={\ensuremath{\textup{W}[1]}}, wrap=\textsf}
  | W1

  \knowledge{text={\ensuremath{\# \textsf{P}}}, wrap=\textsf}
  | shP

  \knowledge{text={\ensuremath{\# {\cdot}\textsf{NP}}}, wrap=\textsf}
  | shNP

  \knowledge{text={\ensuremath{\textsf{P}^{\# \textsf{P}}}}, wrap=\textsf}
  | PshP
  
\knowledge{text={\ensuremath{\textsf{FP}^{\# \textsf{P}}}}, wrap=\textsf}
  | FPshP

\knowledge{text={\ensuremath{\Pi^p_2}}, wrap=\textsf}
  | PiP2
  | Pi2
  | pi2
  | Pitwo
  | PiTwo
  | pitwo

\knowledge{url={https://en.wikipedia.org/wiki/Savitch\%27s_theorem}}
  | Savitch's Theorem

\knowledge{notion}
 | notion@notice
 | definition@notice

\knowledge{notion}
| $1$-Turing reductions

\knowledge{notion}
| acyclic
| acyclic CQ
| acyclic CQs
| Acyclic CQs
| ACQ
| ACQs

\knowledge{notion}
| homomorphic image

\knowledge{notion}
| answer
| answers

\knowledge{notion}
| atom
| atoms
| (relational) atom

\knowledge{notion}
| automorphisms
| automorphism

\knowledge{notion}
| Banzhaf power index

\knowledge{notion}
| b-domination witness
| b-domination witnesses

\knowledge{notion}
| Boolean
| Boolean query
| Boolean queries

\knowledge{notion}
| Boolean CQ

\knowledge{notion}
 | bounded
 | unbounded
 | boundedness

\knowledge{notion}
 | bounded negative arity

\knowledge{notion}
| canonical database

\knowledge{notion}
 | conjunctive queries
 | CQs
 | CQ
 | conjunctive query

\knowledge{notion}
| constants
| constant

\knowledge{notion}
| core
 | cores

\knowledge{notion, text={\ensuremath{\text{CQ}^{=}}}}
| CQeq

\knowledge{notion, text={\ensuremath{\text{CQ}^{\neq,=}}}}
| CQeqneq

\knowledge{notion, text={\ensuremath{\text{CQ}^{\neq}}}}
| CQneq

\knowledge{notion, text={\ensuremath{\text{UCQ}^{\neq}}}}
| UCQneq

\knowledge{notion, text={\ensuremath{\text{CQ}^{\neq}_{\set{\land,\lnot}}}}}
| negCQ

\knowledge{notion, text={\ensuremath{\text{sjf-CQ}^{\lnot}}}}
| sjfCQneg
| sjf-CQneg

\knowledge{notion, text={\ensuremath{\text{CQ}^{\lnot}}}}
| CQneg

\knowledge{notion, text={\ensuremath{\text{CQ}^{\textsf{g}\lnot}}}}
| gCQneg

\knowledge{notion, text={\ensuremath{\text{UCQ}^{\lnot}}}}
| UCQneg

\knowledge{notion, text={\ensuremath{\text{sUCQ}^{\lnot}}}}
| sUCQneg
\knowledge{notion, text={\ensuremath{\text{gUCQ}^{\lnot}}}}
| gUCQneg

\knowledge{notion}
| Gaifman graph

\knowledge{notion}
| database
| databases

\knowledge{notion}
| active domain

\knowledge{notion}
| data complexity

\knowledge{notion}
| inequality atoms
| inequality atom

\knowledge{notion}
| domination witness
| domination witnesses

\knowledge{notion}
| drastic-Shapley value
| drastic-Shapley

\knowledge{notion}
| Drastic-Shapley measure for signed facts

\knowledge{notion}
| endogenous

\knowledge{notion}
| equality atoms
| equality atom

\knowledge{notion}
| equivalence relations

\knowledge{notion}
| exogenous
| exogenous facts

\knowledge{notion}
| fact
| facts

\knowledge{notion}
| features

\knowledge{notion}
| free

\knowledge{notion}
| full
| full CQ

\knowledge{notion}
| fully polynomial randomized approximation scheme
| FPRAS

\knowledge{notion}
| generalized hypertree width

\knowledge{notion}
| homomorphisms
| homomorphism
| homomorphically

\knowledge{notion}
| homomorphism-closed

\knowledge{notion}
| homomorphisms@eq
| homomorphism@eq
| homomorphism-closed@eq
| homomorphically@eq

\knowledge{notion}
| independent set

\knowledge{notion}
| induces@support
| induced@support
 | inducing@support
 | inducing@support

\knowledge{notion}
| inverse weight function
| invw

\knowledge{notion}
| irrelevant
| relevant
| irrelevance
| relevance

\knowledge{notion}
| isomorphism
 | isomorphic

\knowledge{notion}
| join tree

\knowledge{notion}
| \mcscorefun
| \MCscorefun
| \mcShapley

\knowledge{notion}
| minimal supports
| minimal support

\knowledge{notion}
| null@player
 | null player
 | null players

\knowledge{notion}
| numeric
| numeric queries

\knowledge{notion}
| path query
 | path queries

\knowledge{notion}
| perfect matching
| perfect matchings

\knowledge{notion}
| perfect matching counting

\knowledge{notion}
| polynomial hierarchy

\knowledge{notion}
| \pscorefun
| \Pscorefun
| \pShapley

\knowledge{notion}
| queries
| query
 | (non-numeric) query
 | non-numeric queries
| non-numeric

\knowledge{notion}
| regular path query

\knowledge{notion}
| relation name
| RPQ

\knowledge{notion}
| responsibility measure
| responsibility measures
| quantitative measure
| measures

\knowledge{notion}
| reversible
| reversible WSMS

\knowledge{notion}
| \rscorefun
| \Rscorefun
| \rShapley

\knowledge{notion}
| \sascorefun
| \SAscorefun
| \saShapley

\knowledge{notion}
| satisfies

\knowledge{notion}
| satisfying assignments
| satisfying assignment

\knowledge{notion}
| schema
| schemas
| (relational) schema

\knowledge{notion}
| Shapley-like

\knowledge{notion}
| SHAP score

\knowledge{notion}
| self-join free
| self-joins
| sjf-CQ

\knowledge{notion}
| self-join width

\knowledge{notion}
| Shapley value
| Shapley values

\knowledge{notion}
| \sharpscorefun
| \SHARPscorefun
| \sharpShapley

\knowledge{notion}
| \sscorefun
| \Sscorefun
| \sShapley

\knowledge{notion}
| support
| supports

\knowledge{notion}
| terms
| term

\knowledge{notion}
| tractable weight function
| tractable weight functions
| tractable@wf

\knowledge{notion}
| Turing reductions
| polynomial-time Turing-reductions

\knowledge{notion}
| unions of conjunctive queries
| UCQ
| UCQs

\knowledge{notion}
| variables
| variable

\knowledge{notion}
| VertexCover
| vertex cover

\knowledge{notion}
| wealth function
| wealth functions

\knowledge{notion}
| weight function
| weight functions

\knowledge{notion}
| \wscorefun
| \Wscorefun
| \wShapley

\knowledge{notion}
| WSAXp

\knowledge{notion}
| WSMS
| WSMSs
| weighted sums of minimal supports
| weighted sum of minimal supports

\knowledge{notion}
| repair
| repairs

\knowledge{notion}
| subquery

\knowledge{notion}
 | semantics

\knowledge{notion}
 | signed facts
 | signed fact

\knowledge{notion}
 | positive support
 | minimal positive support
 | positive supports

\knowledge{notion}
 | $S$-monotone support
 | $S$-monotone supports
 | $\D$-monotone support
 | $\D$-monotone supports

\knowledge{notion}
 | positive-drastic-Shapley

\knowledge{notion}
 | MPS-Shapley

\knowledge{notion}
 | monotone query
 | monotone queries
 | monotone
 | monotonicity

\knowledge{notion}
 | minimal signed support
 | minimal signed supports

\knowledge{notion}
 | signed support
 | signed supports

\knowledge{notion}
 | counter-support
 | counter-supports

 \knowledge{notion}
 | safe-bounded

 \knowledge{notion}
 | positive atom
 | positive atoms
\knowledge{notion}
 | positive fact
 | positive facts

 \knowledge{notion}
 | negated atom
 | negated atoms
\knowledge{notion}
 | negative atom
 | negative atoms
\knowledge{notion}
 | negative fact
 | negative facts

 \knowledge{notion}
 | impact-relevant
 | impact-relevance
\knowledge{notion}
 | positive impact
\knowledge{notion}
 | positively impactful

\knowledge{notion}
 | negative impact
 | negative impacts
 
\knowledge{notion}
 | negatively impactful

\knowledge{notion}
 | impact
 | impacts

 \knowledge{notion}
 | safe negations
 | safe negation

\knowledge{notion}
 | non-hierarchical neg-path

\knowledge{notion}
 | Global drastic-Shapley

\knowledge{notion}
 | drastic-Shapley

\knowledge{notion}
 | Local drastic-Shapley
 | local drastic-Shapley

\knowledge{notion}
 | Global WSMS

\knowledge{notion}
 | Local WSMS
\knowledge{notion}
  | minimal
 | minimality

\knowledge{notion}
 | relational signature
 | signature

\knowledge{notion}
 | arity

\knowledge{notion}
 | guarded negation

\knowledge{notion}
 | `local' minimal monotone support
 | local minimal monotone support
 | local minimal monotone supports

\knowledge{notion}
 | signed-relevance
 | signed-relevant

\knowledge{notion}
 | positive-relevance
 | positive-relevant
 | non-positive-relevant
\knowledge{notion}
 | domain-independent
 | domain-dependent
 | domain-independence

\knowledge{notion}
 | monotone-bounded
 | monotone-boundedness 
 | monotone-unbounded

\knowledge{notion}
| mergeable atom
| mergeable atoms
| mergeable
| non-mergeable

\knowledge{notion}
| self-join-width
\knowledge{notion}
| combined complexity

\knowledge{notion}
| safe UCQs
| safe UCQ

\knowledge{notion}
| safe@UCQneg
\knowledge{notion}
 | entailment semantics
\knowledge{notion}
 | Local MS-Shapley

\knowledge{notion}
 | MS-Shapley

\knowledge{notion}
 | Global MS-Shapley

\knowledge{notion}
 | MS-Shapley measure for signed facts

\knowledge{notion}
 | Drastic Shapely measure for signed facts

\AtEndPreamble{
\theoremstyle{theorem}

\newtheoremrep{proposition}{\sc Proposition}[section]
\newtheoremrep{lemma}{\sc Lemma}[section]
\newtheoremrep{theorem}{\sc Theorem}[section]
\newtheoremrep{claim}{\sc Claim}[section]
  \Crefname{claim}{claim}{claims}
  \Crefname{claim}{Claim}{Claims}
\newtheoremrep{corollary}{\sc Corollary}[section]
}

\EventEditors{Balder ten Cate and Maurice Funk}
\EventNoEds{2}
\EventLongTitle{29th International Conference on Database Theory (ICDT 2026)}
\EventShortTitle{ICDT 2026}
\EventAcronym{ICDT}
\EventYear{2026}
\EventDate{March 24--27, 2026}
\EventLocation{Tampere, Finland}
\EventLogo{}
\SeriesVolume{365}
\ArticleNo{20}

\begin{document}

\ifcameraready
    \relatedversiondetails{Full version \textnormal{\cite{thispaper-arxiv}}}{https://arxiv.org/abs/2601.04868} 
\fi

\maketitle

\begin{abstract}
    We contribute to the recent line of work on responsibility measures that quantify the contributions of database facts to obtaining a query result. 
In contrast to existing work which has almost exclusively focused on monotone queries, 
here we explore how to define responsibility measures for unions of conjunctive queries with negated atoms (UCQ$^\neg$s). 
After first investigating the question of what constitutes a reasonable notion of qualitative explanation or relevance for queries with negated atoms, 
we propose two approaches, one assigning scores to (positive) database facts and the other also considering negated facts. 
Our approaches, which are orthogonal to the previously studied score of  Reshef et al. \cite{ReshefKL20},
 can be used to lift previously studied scores for monotone queries, known as drastic Shapley and weighted sums of minimal supports (WSMS), 
to UCQ$^\neg$s. %
We investigate the data and combined complexity of the resulting measures, notably showing that %
the WSMS measures are tractable %
in data complexity for all UCQ$^\neg$s and %
further establishing tractability in combined complexity for suitable classes of conjunctive queries with negation.

\end{abstract}

\bigskip

\noindent
\raisebox{-.4ex}{\HandRight}\ \ This pdf contains internal links: clicking on a "notion@@notice" leads to its \AP ""definition@@notice"".

\ifcameraready\else
  \noindent\raisebox{-.4ex}{\HandRight}\ \ This article is based on the ICDT'26 paper \cite{thispaper}.
\fi

\bigskip

\section{Introduction}
\label{sec:intro}
Responsibility measures assign scores to database facts based upon how much they contribute to the obtention of a given query answer, thereby providing a quantitative notion of explanation for query results. There has been significant recent interest in defining and computing responsibility measures \cite{DBLP:journals/pvldb/MeliouGMS11,salimiQuantifyingCausalEffects2016,livshitsShapleyValueTuples2021,ReshefKL20,deutchComputingShapleyValue2022a,KhalilK23ShapleyRPQ,KaraOlteanuSuciuShapleyBack,abramovichBanzhafValuesFacts2024,ourpods24,karmakarExpectedShapleyLikeScores2024,ourpods25}, largely focusing on classes of monotone queries such as (unions of) conjunctive queries and regular path queries. 
In the present paper, we investigate responsibility measures for %
queries \emph{with negations}. 
The most basic such class is that of conjunctive queries allowing for negated atoms or inequalities -- here denoted by "CQneg" -- such as $q(y) = \exists x ~ R(x,y) \land x \neq y \land \lnot R(y,x)$. Our results and definitions will focus on the class "CQneg" and its natural extension "UCQneg" with unions.

We begin our study by first asking a fundamental question (of independent interest): What are the database entities over which the responsibility should be distributed? In other words: What are the so-called `relevant' facts for a given query being true?
Usually, `relevant' facts are defined as those being part of an `explanation'. But then: what is a good notion of an `explanation' for queries with negation?
In the case of (Boolean) monotone queries -- such as conjunctive queries without negated atoms -- a simple yet natural %
notion of qualitative explanation is that of a \emph{minimal support},
"ie", a subset-minimal set of facts making the query true. Through this lens, a fact is relevant to a monotone query whenever it is contained in a minimal support thereof, which witnesses its active participation in making the query true\footnote{This notion of relevance for monotone queries admits multiple equivalent characterizations and notably coincides with the notion of \emph{actual cause} from causal responsibility \cite{DBLP:journals/pvldb/MeliouGMS11}.}. 
However, we shall argue that for non-monotone queries, there is no unique sensible definition neither of `relevance' nor of `explanation', even when restricted to the rather basic class of "CQneg". Instead, we shall discuss several candidate definitions and compare their properties, leading us to focus on two notions of support (each defining a corresponding notion of relevance): "signed supports" and "positive supports". 

Let us begin with a concrete example, depicted in \Cref{fig:ex-recipe}. %
The database $\D$ we consider has a single binary relation $I$ for \emph{`has as Ingredient'}, and for now we focus on the Boolean query $q$ expressing ``there exists a recipe with \textit{fish} but not \textit{meat} as ingredient''.
Note that $\D \models q$ via the (only) "satisfying assignment"
$\set{x \mapsto mm}$.
One might naturally consider %
$\set{I(mm,\textit{fish})}$ as an explanation for $\D \models q$, since it is a minimal sub-database that makes $q$ true.
But then, by the same token, one might be led to conclude that $\set{I(mp,\textit{fish})}$ is an explanation, even if it does not participate in any "satisfying assignment"!
While one may conceive of scenarios in which it may be reasonable to 
regard $\set{I(mp,\textit{fish})}$ as a valid explanation -- and thus to consider $I(mp,\textit{fish})$ as being `relevant' -- %
we would argue that it is desirable 
to define notions of explanation and relevance under which a %
 fact is deemed `non-relevant' (and hence be assigned zero responsibility) whenever it does not participate in the satisfaction of the query, as is the case of  $I(mp,\textit{fish})$.

\begin{figure}[tb]
\centering
\includegraphics[width=\textwidth]{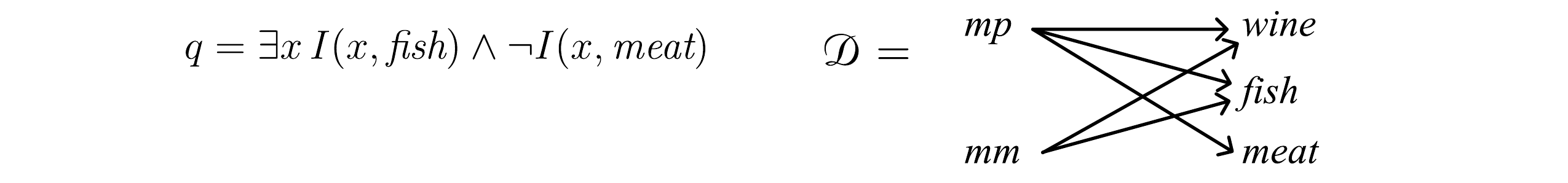}
\caption{Example database and queries, based on recipes from \cite{escoffierGuideCulinaireAidememoire1903}. All edges are $I$-tuples, and 
recipe names \emph{Matelote Meunière} and \emph{Matelote Pochouse} are abbreviated to $mm$ and %
$mp$ respectively.}
\label{fig:ex-recipe}
\end{figure}

The reason why we may consider $\set{I(mm,\textit{fish})}$ to be an explanation but not $\set{I(mp,\textit{fish})}$ lies, of course, in the \emph{absence} of $I(mm,\textit{meat})$ in the database. 
Hence, one way of explaining why the query is true is by using the information of both the presence and the absence of database facts. 
From this perspective, an explanation for the example query would be $\set{+I(mm,\textit{fish}),-I(mm,\textit{meat})}$, which can be read ``there is a fact $I(mm,\textit{fish})$ in the database, but no fact $I(mm,\textit{meat})$''. 
The `"signed facts"' approach thus postulates that, since the satisfaction of the query depends both on the presence and on the absence of facts, the responsibility should be distributed over "signed facts" (rather than just `positive' facts).
This idea can be formalized by viewing a database as a set of signed facts (using signed relations $+ P$ and $- P$, for each original relation $P$)
and defining minimal "signed supports" as minimal sets of signed facts that satisfy the query (also rephrased using the signed relations), where a fact is "signed-relevant" if it belongs to some such set.
Conveniently, this approach yields a direct reduction to the monotone case, thereby enabling the reuse %
of existing results and algorithms for %
monotone queries.

It may not always be desirable, however, to attribute responsibility to entities with no concrete materialization, such as absent facts. 
Further, it may not always be practical or realistic to use this "signed facts" approach, due to the sheer number of negated facts that need to be considered, %
which leads us to our second proposal. Returning to our example, 
we may consider $S=\set{I(mm,\textit{fish})}$  to be an explanation 
because there is a variable assignment that maps the positive query atoms to $S$ in such a way that
the negated atoms are satisfied \emph{in the full database} ("ie", the variable assignment continues to witness  the satisfaction of the query once the remaining database facts are incorporated). %
We call subsets satisfying this condition (minimal) "positive supports" since only the original (positive) database facts may take part in the support. 
Observe that $\set{I(mp,\textit{fish})}$ does not count as a "positive support", since the only way to satisfy $q$ in $\set{I(mp,\textit{fish})}$ is to map $x$ to $mp$, but this assignment does not satisfy $\neg I(x,\textit{meat})$  in 
 $\D=\set{I(mp,\textit{fish}),I(mp,\textit{meat})}$. 
When taking "minimal" "positive supports" as our notion of explanation, we thus obtain a unique explanation for $q$ in $\D$, namely $\set{I(mm,\textit{fish})}$. We will then say that $I(mm,\textit{fish})$ is the sole "positive-relevant" fact for $q$ in $\D$.

These are the two viewpoints advanced in this paper: the "signed facts" approach and "positive support" approach. For both approaches, we show how the notions of support and relevance can be used to adapt existing Shapley-value-based responsibility measures to unions of "CQneg"s. %
We consider in particular the `drastic-Shapley' responsibility measure, a well-studied measure \cite{livshitsShapleyValueTuples2021} for Boolean non-numeric queries which in the case of a monotone query $q$ assigns a fact $\alpha \in \D$ %
 the proportion of the linear orderings $(\D,<)$ for which $\set{\beta \in \D : \beta < \alpha } \not\models q$ and $\set{\beta \in \D : \beta \leq \alpha } \models q$.
We also consider the `"MS-Shapley"' measure,  recently introduced in \cite{ourpods25} as an alternative Shapley-based measure for Boolean monotone queries and shown to enjoy appealing theoretical and algorithmic properties. It assigns a fact $\alpha \in D$  the sum, over all "minimal supports" $S\subseteq \D$ of $q$ containing $\alpha$, of $\frac{1}{|S|}$ and it represents %
the simplest instance of the family of  "weighted sums of minimal supports" (or "WSMS") measures  \cite{ourpods25}. 

\subparagraph*{Related Work}
The most relevant prior work is undoubtedly that of Reshef, Kimelfeld, and Livshits \cite{ReshefKL20}, which studies a responsibility measure that adapts the `drastic-Shapley' for monotone queries to handle conjunctive queries with negation (but without inequalities). 
The notion of relevance underlying their measure %
is in terms of the `impact' that the introduction of the fact makes in a sub-database and is orthogonal to the notions of "signed-relevance" and "positive-relevance" that we propose in this paper. 
In particular, a fact such as $I(mp,\textit{fish})$ (from the example in \Cref{fig:ex-recipe}) would be deemed relevant and receive a non-null responsibility score in their approach. We shall dedicate \Cref{sec:local-monotone} to a comparison with their approach. %

\subparagraph*{Contributions} The foremost contribution of our work is of a conceptual nature, namely, the introduction of novel responsibility measures for (unions of) conjunctive queries with negations, which are grounded in simple and intuitive notions of supports and relevance. Our main technical contribution is the study of the computational complexity of computing the new measures, focusing primarily on the identification of tractable cases.  %

In more detail, 
\Cref{sec:signed}  formalizes the notions of "signed support" and "signed-relevance", where "negative facts" are treated as first-class citizens. We show that for "CQneg"s these notions can be equivalently defined in terms of logical entailment, but that for "UCQneg"s, the entailment-based definition yields different notions that we argue are less intuitive. We therefore choose "signed support"s as the basis for our definitions of novel "drastic-Shapley" and "MS-Shapley"-like measures for "UCQneg"s. In Sections \ref{ss:ms-signed} and \ref{ss:dr-signed}, we present several complexity results that identify classes of queries for which we can tractably compute the latter measures, which leverage the existence of a straightforward reduction to existing measures for queries without negation. In particular, we show that the `signed' version of the "MS-Shapley" measure (and more generally, "WSMS" measures) enjoys tractable ("FP") data complexity for all "UCQneg"s.

\Cref{sec:global-monotone} formalizes the "positive support" approach. Here again we consider and reject an alternative formalization -- based upon monotonicity of supports -- which we show to be less intuitive. We thus adopt "positive supports" (and "positive-relevance") 
as the basis for defining  new "drastic-Shapley" and "MS-Shapley"-like measures for "UCQneg"s, which attribute responsibility only to the (positive) database facts. In \Cref{sec:global-wsms} and \Cref{sec:global-drshapley}, we present complexity results for both measures, including a general tractability result in data complexity for the `positive' variant of the  "MS-Shapley" measure.

Finally, in \Cref{sec:local-monotone},  we discuss the connection with the closest prior work \cite{ReshefKL20} 
and compare their definitions with our own, both conceptually and algorithmically. We prove in particular that the notion of relevance that underlies the measure of \cite{ReshefKL20} is  incomparable with both "positive-relevance" and "signed-relevance".

\section{Preliminaries}
\label{sec:prelims}

\AP
We fix disjoint infinite sets $\intro*\Const$, $\intro*\Var$ of ""constants"" and ""variables"", respectively. 
For any syntactic object $O$ ("eg" database, query), we will use \AP$\intro*\vars(O)$ and $\intro*\const(O)$ 
to denote the sets of "variables" and "constants" contained in $O$.

\AP A ""(relational) schema"" %
is a finite set of relation symbols, each symbol $R$ associated with a (positive) ""arity"" $\intro*\arity(R)$.
\AP
A ""(relational) atom"" over a "schema" $\Sigma$ takes the form $R(\bar t)$ where $R$ is a ""relation name"" from $\Sigma$ of some "arity" $k$, and $\bar t \in (\Const \cup \Var)^k$.
\AP
A ""fact"" is an "atom" which contains only "constants".
\AP
A ""database"" $\D$ over a "schema" $\Sigma$ is a finite set of "facts" over $\Sigma$ we call $\const(\D)$ its ""active domain"".
When we speak of the size of a database or set of atoms, we mean the number of atoms (or facts) it contains. 

\subparagraph*{Conjunctive Queries with Negations}
\AP
A conjunctive query with negated atoms and inequalities, or ""CQneg"", is a first-order query of the form
$\exists \bar x ~ \alpha_1 \land \dotsb \land \alpha_n$
where each $\alpha_i$ can be (1) a (positive) "atom", (2) a ""negated atom"" of the form $\lnot \alpha$ where $\alpha$ is an "atom", or (3) an ""inequality atom"" of the form $t \neq t'$, where $t,t'$ are "terms". We further impose the condition that %
each variable appearing in a "negated@negated atom" or "inequality atom"  must also appear in a positive "atom", commonly known as the restriction to ""safe negations"". 
We shall also consider ""UCQneg""s, which are defined as finite unions of "CQneg"s, as well as  ""CQneq""s and ""UCQneq""s obtained by disallowing "negated atoms" in  "CQneg"s and "UCQneg"s,  and ""CQ"" and ""UCQ"" obtained by further restricting "CQneq"s and "UCQneq"s to  have only "positive atoms". 

\AP
A query is ""Boolean"" if it has no free variables, in which case we use the notation  $\D \models q$ to indicate that $q$ holds (or is satisfied) in $\D$. 
A ""satisfying assignment"" for a "UCQneg" $q$ on a "database" $\D$ is as expected a valuation $\nu : \vars(q) \to \D$ making the query true on $\D$. When $q$ is a "CQ", we shall also call the "satisfying assignment" a ""homomorphism"" (and write $q \intro*\homto \D$) and we extend the definition to sets of "atoms" in the obvious way.
\AP
A "Boolean" query is ""monotone"" if $\D \models q$ and $\D \subseteq \D'$ implies $\D' \models q$ for all "databases" $\D,\D'$. A ""support"" of $q$ in $\D$ is any set $S \subseteq \D$ such that $S \models q$, and such a subset $S$ is a ""minimal support"" if it is further subset-"minimal"\footnote{Throughout the paper, we will always use \AP""minimal"" to mean minimal "wrt" set inclusion.}  among all "supports" of $q$ in $\D$.

\AP
We recall some relevant structural restricted classes of "CQs". The ""generalized hypertree width"" of a "CQ" is a classic measure of tree-likeness. We refer readers to \cite[Definition~3.1]{GottlobGLS16} for a definition but simply note that bounded "generalized hypertree width" is a sufficient condition for tractable "CQ" evaluation and notably generalizes the class of `acyclic' "CQs", which correspond to "CQs" of "generalized hypertree width" 1.
We say that two distinct atoms $\alpha,\beta$ of a "CQ" $q$ are \AP""mergeable""  if there are two "homomorphisms" $h_\alpha : \alpha \homto \D$, $h_\beta : \beta \homto \D$
to an arbitrary "database" $\D$  such that $h_\alpha(\alpha)=h_\beta(\beta)$ (in particular they must have the same "relation name"). 
An "atom" $\alpha$ is (individually) "mergeable" if it is "mergeable" with some other "atom" in $q$.
The \AP""self-join width"" of a "CQ" $q$, defined in \cite{ourpods25}, is the cardinality of $\set{ t \in \qterms(\alpha) : \alpha \text{ is a "mergeable" "atom" of $q$}}$. The class of "CQs" with  "self-join width" 0 generalizes the class of 
""self-join free"" "CQs" (\reintro{sjf-CQ}),  defined as those "CQs" 
which do not contain two distinct atoms with the same relation name. To see why the two classes do not coincide, consider the query $\exists x y \, R(a,x) \wedge R(b,y)$ which has no "mergeable" "atoms" (hence "self-join width" 0) but does have a self-join.

\subparagraph*{Responsibility Measures via Shapley Values}
\AP
We shall be interested in ""responsibility measures"", defined as functions $\phi$ which take as input a "database" $\D$, a (possibly non-Boolean) "query" $q$, an answer $\bar a$ to $q$ in $\D$, and a fact $\alpha \in \D$, and which output a quantitative score measuring how much $\alpha$ contributes to $\bar{a} \in q(\D)$. 
We may however simplify the presentation by replacing the input answer $\bar{a}$ and (possibly non-Boolean) query $q$ by the associated Boolean query $q(\bar{a})$, defined by letting $\D \models q(\bar{a})$ iff $\bar{a} \in q(\D)$. 
In this manner, we can eliminate the answer tuple from the arguments of the responsibility measure and work instead with ternary "responsibility measures", "ie", $\phi(\D,q,\alpha)$ with "Boolean" $q$. 
Henceforth, we shall thus \emph{assume "wlog" that the input query 
is always "Boolean"}.

\AP
We study responsibility measures based upon the well-known "Shapley value", which was originally defined for cooperative games but can be translated into the database setting by modelling the query as a "wealth function". %
In the context of databases, a ""wealth function"" for a ("Boolean") query $q$ and "database" $\D$ is a function $\xi : 2^{\D} \to \Reals$ outputting a number for each set $S$ of "facts" in the database,
which captures in some manner how $S$ contributes to the satisfaction of $q$. 
The "Shapley value" can then be used for transforming such a function on subsets of $\D$ %
into a %
responsibility measure for individual facts in $\D$.
Concretely, the "Shapley value" of a "fact" $\alpha \in \D$ and a "wealth function" $\xi$ (encoding the query), denoted by $\Sh(\D, \xi, \alpha)$, is the average wealth contribution of $\alpha$ over all the linear orderings $\+O$ of $\D$, given by
\[
    \intro*\Sh(\D, \xi, \alpha) \eqdef \frac{\sum_{(\+D,<) \in \+O} \xi(\set{\beta \in \D : \beta < \alpha } \cup \set\alpha) - \xi(\set{\beta \in \D : \beta < \alpha })}{|\+O|}.
\]
Most work to date on Shapley-based responsibility measures employs  %
 the "drastic@drastic-Shapley" "wealth function" $\intro*\drscorefun[q]$ defined by setting %
 $\drscorefun[q](S)=1$ 
 if $S$ is a "support" of $q$ ("ie", $S \models q$) or $0$ 
 otherwise.%
 \footnote{In the original formulation of \cite{livshitsShapleyValueTuples2021}, the "wealth function" outputs $1$ "iff" $S \cup X \models q$ and $X \not \models q$, where $X \subseteq \D$ is a special set of facts called `exogenous'. For simplicity, our work considers plain databases (without exogenous facts), 
see \Cref{rk:exogenous}, so we have adjusted the function accordingly. }
More recently, \cite{ourpods25} introduced the "MS@MS-Shapley" wealth function $\intro*\msscorefun[q]$  which %
outputs, for a "monotone" query $q$,  the number $\msscorefun[q](S)$ of "minimal supports" of $q$ inside $S$.

Given a class of queries $\+C$, we denote by $\intro*\msShapley{\+C}$ ("resp" $\intro*\drShapley{\+C}$) the associated computational problem of computing, for a given "query" $q\in\+C$, "database" $\D$ and "fact" $\alpha \in \D$, the value $\Sh(\D, \msscorefun[q], \alpha)$ ("resp" $\Sh(\D, \drscorefun[q], \alpha)$), which we shall call the ""MS-Shapley"" ("resp" ""drastic-Shapley"") responsibility measures or scores.
We will naturally extend this notation $\reintro*\wShapley{\+C}{\star}$  to other "wealth function" families $\set{\xi^\star_q}_{q \in \+C}$ introduced in this paper, simply by varying the $\star$ superscript.%

\AP The "MS-Shapley" measure belongs to a larger family of responsibility measures based upon aggregating the number and sizes of "minimal supports", introduced  in \cite{ourpods25} as "weighted sums of minimal supports", or ""WSMS"". Indeed, it has been shown in \cite[Proposition~4.3]{ourpods25} that $\Sh(\D, \msscorefun[q], \alpha)$ can be equivalently and more simply defined as being 
the following sum:
\begin{align}\label{eq:msscore}
    \Sh(\D, \msscorefun[q], \alpha) &= 
    \textstyle\sum \, \set{\frac{1}{|S|} : S \text{ "minimal support" of $q$ in $\D$ s.t. } \alpha \in S }.
\end{align}
Other "WSMS" measures are obtained by %
replacing $\frac{1}{|S|}$ with any positive ""weight function"" $f(|S|)$. %
Every such WSMS measure can be equivalently defined as the Shapley value of some suitably chosen wealth function \cite[Proposition 4.4]{ourpods25}. 

For the sake of readability, we shall phrase our definitions and results using the "drastic@drastic-Shapley" and "MS-Shapley" responsibility measures, but we emphasize that this is merely to enhance readability, as all of our results for "MS-Shapley" measures extend to any "WSMS" measure based on a tractable weight function $f$, see \Cref{sec:conclusion}
\ifcameraready%
and \cite[§E.1]{thispaper-arxiv}
\else%
and \Cref{app:wsms}
\fi
for more details. 

In the present work, we will build upon the following known result ensuring tractability of $\msShapley{\+C}$ and other WSMS measures in "combined complexity".
\ifcameraready
    \begin{theorem}[{\cite[Theorem 6.6 with Lemma 5.1]
        {ourpods25}}]\label{Theorem 6.6 and more from PODS25}
        For any class $\+C$ of "CQs" having bounded "generalized hypertree width" and bounded "self-join width", the problem of counting the number of "minimal supports" of a given size is in polynomial time. Further, $\msShapley{\+C}$ is in polynomial time in "combined complexity". 
        These results extend also to "CQneq" queries and other WSMS measures based upon tractable weight functions.
    \end{theorem}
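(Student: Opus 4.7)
The plan is to derive both statements from the formula~\eqref{eq:msscore}, which reduces the computation of $\Sh(\D,\msscorefun[q],\alpha)$ to enumerating, for each $k$, the number $N_k(\alpha)$ of "minimal supports" of $q$ in $\D$ of size $k$ that contain $\alpha$, after which one simply outputs $\sum_k \tfrac{1}{k} N_k(\alpha)$ (or more generally $\sum_k f(k) N_k(\alpha)$ for a "tractable@wf" "weight function"). Hence the whole problem reduces to the counting statement, with only a polynomial overhead coming from the fact that $k \leq |\D|$.

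For the counting problem, I would first compute a generalized hypertree decomposition of $q$ in polynomial time, using the fact that $q$ has bounded "generalized hypertree width". Over this decomposition I would run a dynamic programming procedure in the spirit of the standard tractable "CQ" evaluation algorithms, but bookkeeping two extra pieces of information per partial assignment: (i) the sub-multiset of "atoms" of $q$ that have already been ``used'' in the support being built, and (ii) the set of "facts" in $\D$ that constitute the image of the partial homomorphism so far (restricted to the current bag). The product $\ghw \cdot \mathit{sjw}$ being bounded ensures the state space at each bag remains polynomial in $|q|+|\D|$: "mergeable" "atoms" can, in principle, collapse onto the same "fact", and the "self-join width" is precisely what bounds the number of "terms" on which such collapses need to be tracked coherently across bags. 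Taking a union at the root yields, for each candidate support size $k$, a polynomial-time count of homomorphisms whose image has size $k$.

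The delicate step, and in my view the main obstacle, is going from counting homomorphisms with image of size $k$ to counting genuine "minimal supports" of size $k$. A set $S \subseteq \D$ may be the image of many homomorphisms, and conversely some subsets that arise as images are not "minimal" (a proper subset of $S$ may already satisfy $q$). To handle this I would combine two ingredients already available in the bounded-"self-join-width" setting: first, a Möbius-style inclusion–exclusion over the lattice of sub-multisets of $q$'s atoms, to convert ``some homomorphism lands in $S$'' into ``a homomorphism uses exactly the atoms in some chosen pattern'', which gives exact images; second, a ``minimality check'' integrated into the dynamic program, realized by checking, for each candidate homomorphism pattern, that removing any single "fact" from the image breaks satisfaction — and this check itself amounts to a further "CQ"-evaluation task of the same complexity class, hence polynomial. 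The bounded "self-join width" assumption is what keeps the number of distinct atom-to-fact collapse patterns polynomial, so that both the inclusion–exclusion and the minimality check remain tractable.

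Finally, the extension to "CQneq" is immediate: "inequality atoms" only restrict the admissible homomorphisms at each bag and add no new relational structure, so the dynamic program absorbs them as local filters without affecting the width parameters. The extension to other "WSMS" measures based on a "tractable@wf" "weight function" $f$ is also immediate from~\eqref{eq:msscore}: having the counts $N_k(\alpha)$ in hand, one just outputs $\sum_k f(k) N_k(\alpha)$, which is computable in polynomial time by tractability of $f$. Together these yield both polynomial-time counting of "minimal supports" of a given size and polynomial-time "combined complexity" for $\msShapley{\+C}$ and, more generally, for any "WSMS"-based responsibility measure over $\+C$.
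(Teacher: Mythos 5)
The paper does not actually reprove this statement: its proof is a two-line citation. The size-stratified counting of minimal supports is imported wholesale from Theorem~6.6 of \cite{ourpods25}; the only content added here is (i) the reduction of $\CQneq$ to plain CQs by materializing the $\neq$-relation over the active domain, and (ii) the observation, via Lemma~5.1 of \cite{ourpods25} and \eqref{eq:msscore}, that once the size-stratified counts $N_k(\alpha)$ are available, the MS-Shapley value and any other WSMS measure with a tractable weight function are polynomial-time post-processing. Your handling of (i), (ii), and of the reduction from $\msShapley{\+C}$ to the counting problem matches the paper's and is fine.

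The gap is in your reconstruction of the counting result itself, in the passage from counting homomorphisms to counting minimal supports. First, the M\"obius inversion you propose ranges over the lattice of sub-multisets of the atoms of $q$, which has size exponential in $|q|$, so as stated it is not polynomial in combined complexity. What actually makes the argument go through is that a collapse pattern of atoms is induced by an equivalence relation on the terms occurring in mergeable atoms, of which there are at most $w$ (the self-join width), so only $\mathrm{Bell}(w)=O(1)$ quotient queries need to be considered; you gesture at this, but the object you sum over is the wrong one. Second, and more seriously, the minimality filter is not well-defined as you describe it: whether $h(q)$ is a minimal support is a property of the particular image set, not of the homomorphism's collapse pattern. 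For $q=\exists xyz\, R(x,y)\wedge R(y,z)$, the uncollapsed image $\{R(a,a),R(a,b)\}$ is a support but not a minimal one (since $\{R(a,a)\}$ already satisfies $q$), while other uncollapsed images of the very same pattern are minimal. Hence one cannot check minimality once per pattern, and checking it once per homomorphism would require enumerating exponentially many homomorphisms, defeating the purpose of the dynamic program. Designing a family $\mathrm{P}_q$ of quotient queries so that every minimal support is the exact image of exactly one member, all of whose realizations have the same size, is precisely the technical core of the cited Theorem~6.6, and your sketch leaves it unproved.
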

\else
    \begin{theoremrep}[{\cite[Theorem 6.6 with Lemma 5.1]
        {ourpods25}}]\label{Theorem 6.6 and more from PODS25}
        For any class $\+C$ of "CQs" having bounded "generalized hypertree width" and bounded "self-join width", the problem of counting the number of "minimal supports" of a given size is in polynomial time. Further, $\msShapley{\+C}$ is in polynomial time in "combined complexity". 
        These results extend also to "CQneq" queries and other WSMS measures based upon tractable weight functions.
    \end{theoremrep}
    \begin{proof}
        \cite[Theorem 6.6]{ourpods25} shows that counting the number of "minimal supports" by size is tractable. While the statement is technically on plain "conjunctive queries", for each "inequality atom" $\alpha = (t \neq t')$ one can pre-compute the $\neq$-relation over the "active domain" and replace $\alpha$ with some plain relation $R_\alpha$.
        Further, $\msscorefun[q]$ is polynomial-time computable, via the alternative definition of summing $\frac{1}{|S|}$ over all "minimal supports" $S\subseteq \D$ of $q$ containing $\alpha$. 
        For $\msscorefun[q]$ and other tractable weight functions,
        \cite[Lemma 5.1]{ourpods25} shows that the tractability for counting "minimal supports" by size implies tractability of $\msShapley{\+C}$.
    \end{proof}
\fi

\begin{remark}[Exogenous facts]\label{rk:exogenous}
    Prior work on database responsibility measures has considered a more complex setting in which the input database is partitioned into `exogenous' and `endogenous' facts, where the exogenous facts are treated as given and the responsibility is distributed only among endogenous facts. To keep the focus on the handling of negation, we decided to keep definitions to their most basic form by working with plain (unpartitioned) databases (i.e. treating all database facts as endogenous). It is important to note that while tractability results obtained for partitioned databases also apply to the simpler setting without exogenous facts, this is not the case for intractability results. 
    Indeed, all hardness results concerning UCQs that have been proven for "drastic-Shapley" computation crucially rely upon the presence of exogenous facts, and it is an interesting open problem whether %
    there exists a "CQ" $q$ for which $\drShapley{q}$ is "shP"-hard in the purely endogenous setting. 
\end{remark}

\section{Responsibility Measures via Signed Supports}\ifcameraready\nosectionappendix\fi
\label{sec:signed}
In this section, we formalize the `signed fact approach' sketched in the introduction and 
analyze the data and combined complexity of the resulting responsibility measures. 

\ifcameraready
    \begin{toappendix}
        \section{Selected Omitted Proofs}
        The remaining proofs not included here can be found in the long version \cite{thispaper-arxiv}. 
    \end{toappendix}
\fi

\subsection{Relevance and Responsibility of Signed Facts}
\AP
Given a "schema" $\Sigma$, let $\AP\intro*\Sigmapm$ be the "schema" having $\intro*\spos R$ and $\intro*\sneg R$ for every $R$ in $\Sigma$ with the same "arity" as $R$. For every "atom" $R(\bar t)$ in the schema $\Sigma$, let $\spos R(\bar t)$ and $\sneg R(\bar t)$ denote the corresponding atoms over the "schema" $\Sigmapm$, which we shall call ""positive@positive atom"" and ""negative atoms"" (not to confuse with "negated atoms").\footnote{In particular \AP""positive@positive fact"" and ""negative facts"" are "facts" over $\spos$ and $\sneg$ relations, respectively.} We generally use the term ""signed@signed fact"" to refer to such  "atoms" or "facts", and we will extend this notation to sets of facts~$S$ by letting $\reintro*\spos S \eqdef \set{\spos \alpha : \alpha \in S}$
and $\reintro*\sneg S \eqdef \set{\sneg \alpha : \alpha \in S}$. 
Given a "database" $\D$ over a "schema" $\Sigma$, we let $\Dpm$ be the "database" over $\Sigmapm$ defined as follows:
\AP
\[
    \intro*\Dpm \eqdef \spos \D \cup \set{\sneg R(\bar c) : R \in \Sigma, \arity(R)=|\bar c|, \const(\bar c) \subseteq \const(\D) \text{, and } R(\bar c) \not\in \D}.
\]
\AP
For any set $S$ of "signed facts", let $\intro*\Dpos[S]$ and $\intro*\Dneg[S]$ be the sets of (unsigned) "facts" which appear "positively@positive fact" and "negatively@negative fact" in $S$, respectively, where we shall write $\Dneg$ instead of $\Dneg[(\Dpm)]$ for brevity (note that $\Dpos[(\Dpm)]$ is simply $\D$).  
Observe that all constants occurring in %
$\Dpm$ %
belong to the "active domain" of $\D$, that is, we view negation with a `closed-world' point of view, and in line with the restriction on "safe negations".

Given a "Boolean" "UCQneg"  query $q$ over $\Sigma$, let $\AP\intro*\querypm$ be the "UCQneq" over $\Sigmapm$ resulting from replacing in $q$ each "negated atom" $\lnot R(\bar t)$ with the (positive) "atom" $\sneg R(\bar t)$ over $\Sigmapm$, and each "positive atom" $R(\bar t)$ with $\spos R(\bar t)$.\footnote{Recall that $q$ may also have inequalities $x \neq y$, in which case $\querypm$ will also contain them.} Observe that
\[
    \D \models q
        \quad \Leftrightarrow \quad 
    \Dpm \models \querypm
\]
This naturally suggests the following notion of support for signed facts: 
a set $S  \subseteq \Dpm$ of "signed facts" is a \AP""signed support"" of $q$ in $\D$ if $S \models \querypm$, which is equivalent to requiring the set $S$ to be a "support" of the monotone query $\querypm$ in $\Dpm$. We argue that %
a "minimal" "signed support" is a sensible definition for a qualitative explanation for why $q$ holds in $\D$, since it contains the minimal information on which facts should be present and absent to make the query true in the context of $\D$. This in turn leads us to define the corresponding notion of relevance:
a "signed fact" is \AP""signed-relevant"" for $q$ in $\D$ if it belongs to a "minimal" "signed support" of $q$ in $\D$. We illustrate these notions on an example:

\begin{example}\label{ex:signed-example}
Consider the "CQneg" $q$ of \Cref{fig:signed-example} over directed graphs (with edge relation $E$), and the database $\D$ given by the depicted graph. Under the signed facts approach, there are two possible explanations for $q$ given by the "minimal" "signed supports"  $S_1=\{\spos E(a,b), \spos E(b,c), \sneg E(c,a)\}$ and $S_2=\{\spos E(b,c), \spos E(c,c), \sneg E(c,b) \}$. It follows that the "signed-relevant" facts are $\spos E(a,b), \spos E(b,c), \spos E(c,c), \sneg E(c,a), \sneg E(c,b)$, and the other "signed facts" in $\Dpm$, like $\spos E(b,a)$ and $\sneg E(b,b)$, would be deemed irrelevant under this approach.\qedhere
\end{example}
\begin{figure}
    \begin{center}
        \includegraphics[width=\textwidth]{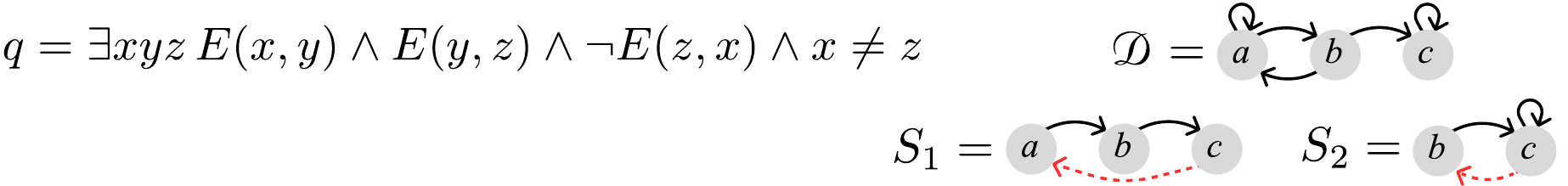}
    \end{center}
    \caption{Query and database from \Cref{ex:signed-example}, together with the "minimal" "signed supports" $S_1$ and $S_2$.  
    Solid black arrows represent "positive facts" and dotted red arrows "negative facts".}
    \label{fig:signed-example}
\end{figure}

The following lemma clarifies how "signed supports" relate to finite entailment.

\begin{lemmarep}\label{cqneg-alt-signed}
For every "UCQneg" $q$, database $\D$, and set $S  \subseteq \Dpm$, 
if $S$ is a "signed support" of $q$ in $\D$, then $\D' \models q$ for every database $\D'$ with $\Dpos[S] \subseteq \D'$ and $\D' \cap \Dneg[S] = \emptyset$ (equivalently, $\Dpos[S] \cup \{\neg R(\bar t) : R(\bar t) \in \Dneg[S]\}  \models^\mathsf{fin} q$). However, the converse does not hold even for "CQneg"s.
\end{lemmarep}
\begin{proof}
    If $S$ is a "signed support", we have $S \models \querypm$ via some "satisfying assignment" $\nu : \vars(\querypm) \to \const(S)$. Observe that the same mapping $\nu$ is a "satisfying assignment" witnessing $\D' \models q$ as soon as $\D'$ contains all "positive facts" of $S$ and no "negative fact" of $S$.

    The fact that the converse implication does not hold means that there exist a "CQneg" $q$, database $\D$, and set $S  \subseteq \Dpm$ such that: (1) $S$ is not a "signed support" and yet (2) we have that $\D' \models q$ for every database $\D'$ with $\Dpos[S] \subseteq \D'$ and $\D' \cap \Dneg[S] = \emptyset$. Such $S,\D,q$ are exhibited in \Cref{ex:alternative-entailment-semantics-signed}.
\end{proof}

To see why the converse statement in Lemma \ref{cqneg-alt-signed} fails and why an alternative definition of minimal support via finite entailment can yield unexpected results, it is instructive to consider the following example.

\begin{example}\label{ex:alternative-entailment-semantics-signed}
Consider the "CQneg" $q=\exists x,y ~ A(x) \land R(x,y) \land \lnot A(y)$ and database $\D = \{A(b), R(b,c), A(c), R(c,d)\}$. The only "minimal" "signed support" is $\set{\spos A(c), \spos R(c,d), \sneg A(d)}$. However, if we were to adopt the \AP""entailment semantics"" that defines minimal supports using finite entailment (cf.\ Lemma \ref{cqneg-alt-signed}), then there would be a further minimal support $S = \set{\spos A(b), \spos R(b,c), \spos R(c,d), \sneg A(d)}$. %
Indeed, for any database $\D'$ with $\Dpos[S] \subseteq \D'$ and $\D \cap \Dneg[S] = \emptyset$, one of two things must happen: either (i) $A(c)\in \D'$,  in which case $\D'\models q$ by $(x,y)\mapsto (c,d)$, or (ii) $A(c)\notin \D'$, in which case we still have $\D'\models q$, this time by $(x,y)\mapsto (b,c)$.
This is at odds with the fact that since $A(c) \in \D$ there is no satisfying valuation for $q$ that involves the constant $b$
and hence $A(b)$ is not useful for obtaining $q$ in the considered database $\D$.
\end{example}

In addition to yielding an arguably unnatural notion of relevance, 
the "entailment semantics" has another disadvantage, namely, it can yield  minimal supports of unbounded size:

\begin{example}\label{ex:signed-alt-ent-unbounded}
Reconsider the  "CQneg" $q=\exists x,y ~ A(x) \land R(x,y) \land \lnot A(y)$, and for each $n \geq 1$, define the database
$\D_n = \set{A(c_i) : 0 \leq i < n } \cup \{R(c_i,c_{i+1}) : 0 \leq i < n\}$ 
(note that 
$\D_n$ does not contain $A(c_n)$). 
Then 
for every $n\geq 1$, the "entailment semantics" would give rise to the minimal support 
$S_n = \set{\spos A(c_0), \sneg A(c_n)} \cup \{\spos R(c_i,c_{i+1}) : 0 \leq i < n\}$. 
\end{example}
We find it both unintuitive and computationally complex to use a notion of explanation for "UCQneg"s
that produces explanations whose number of atoms do not depend on the size of the query.
 For this reason, we will not consider this alternative notion further and instead retain our original definition of "signed support" defined in terms of minimal subsets of $\Dpm$ that satisfy (rather than entail) $q$.

With suitable notions of "signed-relevance" and "signed supports" at hand, we can now introduce responsibility measures for signed facts,
by simply applying existing responsibility measures for %
monotone queries
 to the signed version of the query ($\querypm$) and input database ($\Dpm$).  
In particular, we will be interested in the following Shapley-value-based measures:
\AP 
\begin{itemize}
\item ""MS-Shapley measure for signed facts"": $\Sh(\Dpm, \msscorefun[\querypm], \alpha)$
\item ""Drastic-Shapley measure for signed facts"": $\Sh(\Dpm, \drscorefun[\querypm], \alpha)$
\end{itemize}
Observe that since $\querypm$ is monotone, these measures always return non-negative numbers. 

Moreover, we can show that these measures allow us to decide %
 "signed-relevance":

\begin{lemmarep}\label{signed-rel-nonzero}
For every "database" $\D$, "signed fact" $\alpha \in \Dpm$, and "UCQneg" $q$: 
 $\alpha$ is "signed-relevant" for $q$ in $\D$ iff $\Sh(\Dpm, \msscorefun[\querypm], \alpha)> 0$ iff $\Sh(\Dpm, \drscorefun[\querypm], \alpha) >0$.
\end{lemmarep}
\begin{proof}
    Observe first that since $\querypm$ is "monotone", both scores are defined as sums of non-negative numbers. If $S$ is a "minimal" "signed support" of $q$ "wrt" $\D$ containing $\alpha$, then $\drscorefun[\querypm](S) = 1$, $\msscorefun[\querypm](S) = 1$, $\drscorefun[\querypm](S \setminus\set\alpha) = 0$ and $\msscorefun[\querypm](S\setminus\set\alpha) = 0$, yielding a strictly positive number both for $\Sh(\Dpm, \msscorefun[\querypm], \alpha)$ and $\Sh(\Dpm, \drscorefun[\querypm], \alpha)$.
    If, on the other hand, $\alpha$ is in no "minimal" "signed support", then for every set $S \subseteq \Dpm$ we have $\msscorefun[\querypm](S) = \msscorefun[\querypm](S \setminus \set\alpha)$ and
    $\drscorefun[\querypm](S) = \drscorefun[\querypm](S\setminus\set\alpha)$. This in turn implies that $\Sh(\Dpm, \msscorefun[\querypm], \alpha) = \Sh(\Dpm, \drscorefun[\querypm], \alpha) = 0$.
\end{proof}

\begin{example}\label{ex:signed-example2}
Reconsider the "CQneg" $q$ and database $\D$ from \Cref{fig:signed-example}.  
Under the "MS-Shapley" measure for signed facts, the positive fact $\spos E(b, c)$ would get value $2/3 = 1/3 + 1/3$ according to \Cref{eq:msscore} since it participates in two "minimal" "signed supports", both of size 3. Each of the positive facts $\spos E(a, b)$ and $\spos  E(c,c)$ and the negative facts $\sneg E(c,a)$ and $\sneg E(c,b)$ receives a value $1/3$,
corresponding to a participation in a single "minimal" "signed support" of size 3.  %
The remaining "signed facts" (which are not "signed-relevant") all receive a null score.\qedhere
\end{example}

The remainder of this section will be devoted to studying the complexity of computing responsibility scores for "signed facts". %
Formally, 
for a class $\+C$ of "Boolean" "UCQneg" queries, we denote by 
\AP
$\intro*\ShapleypmMinSup{\+C}$ the problem of computing $\Sh(\Dpm, \msscorefun[\querypm], \alpha)$ given $\D$, $\alpha \in \Dpm$ and $q \in \+C$;
and similarly for $\intro*\ShapleypmDrastic{\+C}$ with $\drscorefun[\querypm]$.
We shall investigate these problems both in ""data@data complexity"" and ""combined complexity"", depending on whether the query is considered to be fixed or not.

We first observe that we can restrict to having only "negative facts" of relations which appear under a negation in the query.
Let \AP$\intro*\NegRelsq$ be the set of relations %
    that appear in "negated atoms" in $q$, and let $\AP\intro*\Dpmq \subseteq \Dpm$ be $\Dpm \setminus \set{\sneg R(\bar c) : R \not\in \NegRelsq}$.
\begin{lemmarep}\label{lem:Dpm-restriction-to-negative-atoms}
     Every "minimal" "signed support" of $\D,q$ is in $\Dpmq$.
\end{lemmarep}
\begin{proof}
    By means of contradiction, suppose a "signed support" $S$ of $q$ contains some $\alpha = \sneg R(\bar c)$ for some $R \not\in \NegRelsq$. Let $\nu$ be a "satisfying assignment" of $\querypm$ in $S$. Note that $\nu$ is still a "satisfying assignment" of $\querypm$ in $S \setminus \set{\alpha}$ since $\querypm$ contains no atom over the `$\sneg R$' relation of $\Sigmapm$. Hence, $S$ cannot be a "minimal" "signed support".
\end{proof}

We say that a class $\+C$ of "UCQneg" has \AP""bounded negative arity"" if there is a bound $N$ such that $\arity(R) \leq N$ for all $q \in \+C$ and $R \in \NegRelsq$. The following lemma formalizes a simple and useful (many-one) reduction to queries without negation:
\begin{lemma}\label{lem:redux:mssigned-to-msnormal}
    For any class $\+C \subseteq \text{"UCQneg"}$ of "bounded negative arity", there are polynomial-time reductions from $\ShapleypmMinSup{\+C}$ to $\msShapley{\+C^\pm}$ and from 
    $\ShapleypmDrastic{\+C}$ to  $\drShapley{\+C^\pm}$, where $\+C^\pm = \set{\querypm : q \in \+C}$.
\end{lemma}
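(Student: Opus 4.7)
The plan is to exhibit, in both cases, a polynomial-time Turing reduction that given an input $(q,\D,\alpha)$ with $q\in\+C$ and $\alpha\in\Dpm$, makes (at most) one oracle call on a suitably chosen instance over the monotone query $\querypm$. The natural candidate is the instance $(\querypm,\Dpm,\alpha)$: indeed, by the very definition of $\querypm$ and $\Dpm$, $\msscorefun[\querypm]$ and $\drscorefun[\querypm]$ are exactly the wealth functions appearing in the definitions of $\ShapleypmMinSup{\+C}$ and $\ShapleypmDrastic{\+C}$. The only issue is that $\Dpm$ can be exponentially larger than $\D$, since it contains all negative facts $\sneg R(\bar c)$ for every relation $R$ of $q$'s schema; the whole point of the bounded negative arity assumption is to allow us to replace $\Dpm$ by the polynomial-size subset $\Dpmq$.

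First, I would observe that $\Dpmq$ can be constructed in polynomial time under the bounded negative arity assumption: only the relations in $\NegRelsq$ contribute negative facts, and each such relation has arity bounded by a constant, so the total number of negative facts is polynomial in $|\D|$. Second, by \Cref{lem:Dpm-restriction-to-negative-atoms}, every minimal signed support of $q$ in $\D$ lies entirely in $\Dpmq$. Consequently, for every $S\subseteq\Dpm$ we have $\msscorefun[\querypm](S)=\msscorefun[\querypm](S\cap\Dpmq)$ and $\drscorefun[\querypm](S)=\drscorefun[\querypm](S\cap\Dpmq)$, since the minimal supports of $\querypm$ inside $S$ are exactly those inside $S\cap\Dpmq$, and $S\models\querypm$ iff $S\cap\Dpmq\models\querypm$. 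In particular, every fact of $\Dpm\setminus\Dpmq$ is a null player for both wealth functions.

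Finally, I would conclude from the null-player property that $\Sh(\Dpm,\xi,\alpha)=\Sh(\Dpmq,\xi,\alpha)$ for every $\alpha\in\Dpmq$ and both choices $\xi\in\{\msscorefun[\querypm],\drscorefun[\querypm]\}$: for the MS case this is immediate from \Cref{eq:msscore}, since the sum on the right ranges over minimal supports, which all live in $\Dpmq$; for the drastic case it follows from the standard fact that adding null players does not change the Shapley values of the remaining players (which is a direct consequence of the marginal-contribution formula, because every marginal contribution involving such a player is $0$). The reduction thus proceeds as follows: compute $\Dpmq$ and $\querypm$; if $\alpha\in\Dpm\setminus\Dpmq$ (so in particular $\alpha$ is a null player), output $0$; otherwise, call the oracle on $(\querypm,\Dpmq,\alpha)$ and return its value. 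Since $\querypm\in\+C^\pm$ and $|\Dpmq|$ is polynomial in $|\D|+|q|$, this is a polynomial-time Turing reduction, which is the same statement (for both $\ShapleypmMinSup{\+C}\to\msShapley{\+C^\pm}$ and $\ShapleypmDrastic{\+C}\to\drShapley{\+C^\pm}$) that the lemma asserts. I do not expect any genuine obstacle; the only point deserving care is checking that the null-player argument indeed preserves Shapley values of the remaining players, which is routine.
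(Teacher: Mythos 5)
Your proposal is correct and follows essentially the same route as the paper's proof: build $\Dpmq$ in polynomial time using the bounded negative arity assumption, invoke \Cref{lem:Dpm-restriction-to-negative-atoms} to conclude that $\Sh(\Dpm,\xi,\alpha)=\Sh(\Dpmq,\xi,\alpha)$ for both wealth functions, and hand $(\querypm,\Dpmq,\alpha)$ to the oracle. Your added null-player justification and the explicit handling of $\alpha\in\Dpm\setminus\Dpmq$ are merely a more detailed spelling-out of what the paper states tersely.
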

\begin{proof}
    Given a "UCQneg" $q \in \+C$, a "database" $\+D$, and "signed fact" $\alpha \in \Dpm$, we need to compute $\Sh(\Dpm, \msscorefun[\querypm], \alpha)$. First observe that since $\+C$ is assumed to have "bounded negative arity", we can build $\Dpmq$ in polynomial time.
    Moreover, by \Cref{lem:Dpm-restriction-to-negative-atoms}, we can restrict our attention to $\Dpmq$ instead of $\Dpm$, "ie",
    $\Sh(\Dpm, \msscorefun[\querypm], \alpha) = \Sh(\Dpmq, \msscorefun[\querypm], \alpha)$ and $\Sh(\Dpm, \drscorefun[\querypm], \alpha) = \Sh(\Dpmq, \drscorefun[\querypm], \alpha)$, to obtain the desired many-one reduction to the instance $(\Dpmq,\querypm,\alpha)$ of $\msShapley{\+C^\pm}$ /  $\drShapley{\+C^\pm}$.
\end{proof}

\subsection{Computing MS-Shapley Values for Signed Facts}\label{ss:ms-signed}

We shall now study the complexity of computing "MS-Shapley" values for "signed facts". 

\subparagraph*{Data Complexity}
Regarding data complexity, it was shown in \cite{ourpods25} that for every "UCQ" $q$, the problem %
$\msShapley{q}$ is tractable. We first observe that this result immediately lifts to "UCQneg", by applying the reduction in Lemma \ref{lem:redux:mssigned-to-msnormal}. 

\begin{propositionrep}\label{prop:UCQneg-poly-MS-data:signed}
    $\ShapleypmMinSup{\text{"UCQneg"}}$ is in polynomial time in data complexity.
\end{propositionrep}
\begin{proof}
    Given $q$, $\D$, $\alpha$, first observe that  since we work in "data complexity", the arity of "negated atoms" in $q$ is fixed.
    We can then apply \Cref{lem:redux:mssigned-to-msnormal} to reduce to $\msShapley{\text{"UCQneq"}}$. By \cite[Theorem 5.2]{ourpods25}, and using the facts  that $\msscorefun[]$ is tractable and that "UCQneq"s are bounded, monotone and tractable,
    the "data complexity" of $\msShapley{\text{"UCQneq"}}$ is in polynomial time.
\end{proof}

As an immediate corollary of the preceding proposition and Lemma \ref{signed-rel-nonzero}, 
we can further conclude that deciding "signed-relevance" is tractable. In fact, with a bit more work, 
we can show that this problem enjoys the lowest possible data complexity (namely, "AC0"). 

\begin{proposition}\label{prop:signed-relevance-AC0}
It can be decided in "AC0" data complexity whether a "signed fact" $\alpha \in \Dpm$ is "signed-relevant" w.r.t.\ a given "UCQneg" and database $\D$. 
\end{proposition}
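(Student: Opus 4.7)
The plan is to express "signed-relevance" by a first-order formula whose size depends only on $|q|$; since FO queries over a fixed vocabulary have "AC0" data complexity, this suffices. Concretely, for each fixed "UCQneg" $q$ and each sign/relation pair $\pm R$ of arity $r$, I will construct a formula $\varphi^{\pm R}_q(x_1, \dots, x_r)$ over the schema of $\D$ such that $\varphi^{\pm R}_q(\bar c)$ holds in $\D$ iff $\pm R(\bar c) \in \Dpm$ is "signed-relevant" for $q$ in $\D$.

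Membership in $\Dpm$ is itself FO-expressible: $\spos R'(\bar t) \in \Dpm$ reduces to $R'(\bar t)$, while $\sneg R'(\bar t) \in \Dpm$ reduces to $\lnot R'(\bar t)$ together with active-domain guards on the components of $\bar t$. Consequently, for each disjunct $q_i$ of $q$ and each assignment $\nu : \vars(q_i^\pm) \to \const(\D)$, the condition that $\nu$ satisfies $q_i^\pm$ in $\Dpm$ is FO-expressible, with size $O(|q|)$. The formula $\varphi^{\pm R}_q$ then existentially guesses (i) a disjunct $q_i$, (ii) an atom $\beta$ of $q_i^\pm$ whose sign and relation match $\pm R$, and (iii) an assignment $\nu$, asserting that $\nu$ satisfies $q_i^\pm$ in $\Dpm$, that $\nu(\beta) = \pm R(\bar c)$, and that the image $T = \nu(\mathrm{atoms}(q_i^\pm))$ is a "minimal" "signed support".

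The minimality of $T$ is encoded by universally quantifying over every disjunct $q_{i'}$ of $q$ and every assignment $\mu : \vars(q_{i'}^\pm) \to \const(\D)$ satisfying $q_{i'}^\pm$ in $\Dpm$, and asserting that $\mu(\mathrm{atoms}(q_{i'}^\pm))$ is not strictly contained in $T$. Since both images contain at most $|q|$ atoms, set-inclusion and strict containment between them reduce to a bounded Boolean combination of term equalities, hence are FO-expressible. All quantifications over $\nu$ and $\mu$ range over a bounded number of elements of $\const(\D)$, so the overall formula has size $\poly(|q|)$ and constant quantifier depth in the data, placing its evaluation in "AC0".

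The main technical subtlety is the minimality condition: ``$S$ is a minimal support'' a priori quantifies over all subsets of $\Dpm$, which is not FO-expressible in general. I exploit the fact that every "minimal" "signed support" must equal the image of a single satisfying assignment of some CQ disjunct of $\querypm$, so its size is bounded by $|q|$; this reduces the subset-containment check to a constant-size Boolean combination of term equalities, which is precisely what enables the FO encoding and hence the "AC0" upper bound.
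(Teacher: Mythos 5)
Your proof is correct and follows essentially the same route as the paper: both reduce signed-relevance to evaluating a first-order sentence over $\D$ itself, using the key observation that every minimal signed support is the image of a single satisfying assignment of some disjunct of $\querypm$ (hence of size at most $|q|$), so that minimality becomes a bounded check against images of other satisfying assignments. The only cosmetic difference is that the paper enumerates variable-identification patterns $\mu:\vars(p)\to\vars(q)$ and tests the strict-containment of atom sets syntactically, whereas you express containment semantically via a bounded Boolean combination of term equalities inside the formula; both yield a fixed FO sentence and hence the "AC0" bound.
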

\begin{proof}
    Consider a  "UCQneg" $q$. We will write ``$p \in \querypm$'' to denote that $p$ is a "CQneg" disjunct of~$\querypm$. 
    To test whether a given "signed fact" $\alpha$ is "signed-relevant" w.r.t.\ $q$ and the input database, 
    we want to express the following three-part condition
    \begin{enumerate}
        \item \label{prop:signed-relevance-AC0:1} 
        There is a "signed support" (for some disjunct $p \in \querypm$)...
        \item \label{prop:signed-relevance-AC0:2} 
        ...that contains $\alpha$, such that...
        \item \label{prop:signed-relevance-AC0:3} 
        ...it does not strictly contain a "signed support" (for some disjunct $p' \in \querypm$).
    \end{enumerate}
 by means of a first-order sentence that can be evaluated on $\D$ (as opposed to $\Dpm$), and which can be constructed from $q$ alone.

    Let $\AP\intro*\Atsubp p$ be the set of all relational "atoms" in a "CQneg" $p \in\querypm$.
    We will write $\beta = \beta'$ for a "signed fact" $\beta$  and a signed atom $\beta'$ (containing only variables as terms) to denote that the relation name and sign are the same, and moreover $\bigwedge_i x_i = c_i$ holds, where $i$ is the arity of their shared relation and $x_i$ (resp.\ $c_i$) is the $i$th variable (resp.\ constant) %
    of $\beta'$ (resp.\ $\beta$). %
    For $\mu : \vars(q) \to \vars(q)$ and $p \in \querypm$, let
    $\AP\intro*\psubmu p \mu$ denote the Boolean query $\hat p \land \bigwedge_{x,y \in Im(\mu)} x \neq y$, where $\hat p$ is obtained by replacing each variable $x$ with $\mu(x)$ in the disjunct $\check p$ of $q$ corresponding to $p$ (note that both $\check p$ and $\hat p$ may contain "negated atoms"). 
    For a set $At$ of signed atoms, let $\mu(At)$ denote the result of replacing each variable $x$  in $At$ with $\mu(x)$.

    The above condition can then be captured with the following first-order sentence:    \[
    \underbrace{\bigvee_{p \in \querypm} \bigvee_{\mu: \vars(p) \to \vars(q)}  \Big( \psubmu p \mu}_{\text{\Cref{prop:signed-relevance-AC0:1}}} {} \land {}
    \underbrace{\bigvee_{\beta \in \Atsubp p} \alpha = \mu(\beta)}_{\text{\Cref{prop:signed-relevance-AC0:2}}}
    {} \land {}
    \underbrace{\lnot \bigvee_{p' \in \querypm} \bigvee_{\mu' : \vars(p') \to \vars(q) \text{ s.t. } \mu'(\Atsubp{p'}) \subsetneq \mu(\Atsubp p)} \psubmu{p'}{\mu'} \Big)}_{\text{\Cref{prop:signed-relevance-AC0:3}}}
    \]
    Observe that this formula holds in $\D$ "iff" $\alpha$ is "signed-relevant" "wrt" $q$ and $\D$. 
    Since evaluating first-order formulas is in "AC0" for data complexity, the statement follows.
\end{proof}

\subparagraph*{Combined Complexity}
It is known from prior work that %
the combined complexity of computing "MS-Shapley" scores for "UCQs" without "negated atoms" is "shP"-hard, even under severe syntactic restrictions (see \cite[Proposition 13]{ourKR25}). This is why we focus %
on "CQneg"s.

Enumerating all "minimal" "signed supports" for $q$ in $\D$ as done in \Cref{prop:UCQneg-poly-MS-data:signed} is not an option for obtaining tractable "combined complexity" since the number of supports can be %
of the order of $O(|\D|^{|q|})$.
The recent work \cite{ourpods25}, rephrased in \Cref{Theorem 6.6 and more from PODS25}, provides  conditions for tractability of counting "minimal supports" for "CQ"s. %
Since counting "minimal" "signed supports" for a "CQneg" $q$ corresponds to counting "minimal supports" of $\querypm$ on $\Dpm$, we can obtain a similar condition for classes $\+C$ of "CQneg"s  ensuring tractability of $\ShapleypmMinSup{\+C}$ by reusing the same algorithm developed in \cite[Theorem 6.6]{ourpods25}.
The only potential issue is that the "database" $\Dpm$ which we need to consider can be substantially larger than $\D$, and for this reason, we need to bound the arity of the "negated atoms" of $q$ %
to ensure $\Dpm$ remains polynomial in size. 

We say that a class $\+C$ of "CQneg" has bounded "generalized hypertree width" (or bounded "self-join width") if the corresponding class of "CQneq"s $\{\querypm : q \in \+C\}$ enjoys this property. We are now ready to state our tractability result:

\begin{propositionrep}\label{prop:signed-ms}
    For every class $\+C$ of "CQneg"  queries having "bounded negative arity", bounded "generalized hypertree width" and bounded "self-join width",
    $\ShapleypmMinSup{\+C}$ is in polynomial time in "combined complexity", as a corollary of \Cref{Theorem 6.6 and more from PODS25,lem:redux:mssigned-to-msnormal}.
\end{propositionrep}
\begin{proof}
    By \Cref{lem:redux:mssigned-to-msnormal} we can reduce to $\msShapley{\+C^\pm}$, where $\+C^\pm = \set{\querypm : q \in \+C} \subseteq \text{"CQneq"}$. Since the "self-join width" and "generalized hypertree width" of $\+C$ and $\+C^\pm$ coincide by definition, we can conclude by \Cref{Theorem 6.6 and more from PODS25}.
\end{proof}

\subsection{Computing Drastic-Shapley Values for Signed Facts}\label{ss:dr-signed}
We now turn to the problem of computing drastic-Shapley values for "signed facts". 
From the existing literature on "drastic-Shapley"-based measures for "monotone queries", %
the largest known class of "UCQ"s for which SVC is tractable in data complexity 
is the class of ""safe UCQs"" \cite[Corollary 3.2]{deutchComputingShapleyValue2022a}.
We refer readers to \cite[Definition 4.10]{dalviDichotomyProbabilisticInference2012} for a formal definition but 
point out that the "safe UCQ"s have been shown to characterize the "UCQ"s which can be tractably evaluated on probabilistic databases (assuming $\FP\neq\sP$).
The following result shows that this tractability result can be lifted to "UCQneg"s. In the statement, %
we shall say that a "UCQneg" is ""safe@@UCQneg"" when  $\querypm$ is a  "safe UCQ", where the %
`$\neq$' relation is treated like any binary relation.

\begin{proposition}\label{prop:signed-dr-tractability}
For every "safe@@UCQneg" "UCQneg" query $q$, $\ShapleypmDrastic{q} \in \FP$. 
\end{proposition}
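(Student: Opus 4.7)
The plan is to combine \Cref{lem:redux:mssigned-to-msnormal} with the tractability of "drastic-Shapley" for "safe UCQs" established in \cite[Corollary 3.2]{deutchComputingShapleyValue2022a}.

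Since $q$ is fixed, its "negated atoms" have bounded arity and so $\Dpm$ is of polynomial size in $|\D|$. By \Cref{lem:redux:mssigned-to-msnormal}, computing $\ShapleypmDrastic{q}$ on input $(\D, \alpha)$ reduces in polynomial time to computing $\Sh(\Dpm, \drscorefun[\querypm], \alpha)$; by the "safe@@UCQneg" assumption on $q$, the query $\querypm$ is a "safe UCQ" once `$\neq$' is treated as an ordinary binary relation.

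To match the partitioned (endogenous/exogenous) setting used in \cite{deutchComputingShapleyValue2022a}, I would materialize the inequalities as a fresh exogenous relation. Let $R_{\neq}$ be a fresh binary relation symbol, set $E \eqdef \set{R_{\neq}(a,b) : a,b \in \adom(\Dpm), a \neq b}$, and let $\querypm'$ be obtained from $\querypm$ by replacing each "inequality atom" $t \neq t'$ by $R_{\neq}(t,t')$. Then $\querypm'$ is a plain "safe UCQ" (without inequalities), and for every $S \subseteq \Dpm$ we have $S \models \querypm$ iff $S \cup E \models \querypm'$. Consequently, the drastic-Shapley value of $\alpha$ for $\querypm'$ on the partitioned database $(\Dpm, E)$ with $E$ declared exogenous equals $\Sh(\Dpm, \drscorefun[\querypm], \alpha)$. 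Applying \cite[Corollary 3.2]{deutchComputingShapleyValue2022a} to this instance then yields the desired polynomial-time algorithm.

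The only subtlety is declaring the auxiliary $R_{\neq}$ facts as exogenous when interfacing with \cite{deutchComputingShapleyValue2022a}: otherwise they would enter the random linear orderings that define the Shapley value and distort the result. As exogenous facts they are present in every wealth-function evaluation but do not participate in the permutations, which is exactly what is needed; the final value lies in our unpartitioned setting as noted in \Cref{rk:exogenous}.
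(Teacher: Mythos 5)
Your proposal is correct and follows essentially the same route as the paper: reduce via \Cref{lem:redux:mssigned-to-msnormal} (the fixed query trivially has "bounded negative arity"), materialize the inequality relation over the active domain, and invoke \cite[Corollary 3.2]{deutchComputingShapleyValue2022a} on the resulting "safe UCQ". Your explicit remark that the materialized $R_{\neq}$ facts must be declared exogenous so they do not enter the Shapley permutations is a welcome precision that the paper's own proof only states implicitly ("using the materialized $\neq$-relation to evaluate inequality atoms").
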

\begin{proof}
    By \Cref{lem:redux:mssigned-to-msnormal} we can reduce to $\drShapley{\querypm}$. 
    In order to compute $\Sh(\D, \drscorefun[\querypm], \alpha)$, it suffices to first materialize $\D_{\neq} = \D \cup \{{\neq}(c,d) : c,d \in \const(\D) \cup \const(q), c \neq d \}$, which we can do in polynomial time,
    and then to compute  $\Sh(\D_{\neq}, \drscorefun[\querypm], \alpha)$ using the materialized $\neq$-relation to evaluate "inequality atoms" in $\querypm$. As $q$ is assumed to be a "safe@@UCQneg" "UCQneg", it follows that $\querypm$ is a "safe UCQ", so the latter task can be performed in $\FP$ due to \cite[Corollary 3.2]{deutchComputingShapleyValue2022a}. 
\end{proof}

While some "shP"-hardness results and $\FP$-"shP" dichotomies have been proven for various subclasses of "UCQ"s \cite{livshitsShapleyValueTuples2021,ourpods24}, all of the existing hardness proofs %
crucially rely on the so-called ``exogenous'' facts, which we do not consider in this work ("cf" \Cref{rk:exogenous}). 
Moreover, even if new hardness results were to be proven for the purely endogenous setting, 
 it is not at all obvious how one would reduce $\drShapley{\querypm}$
to $\ShapleypmDrastic{q}$. This is because while the former problem considers arbitrary databases over the signed schema $\Sigmapm$,
the problem $\ShapleypmDrastic{\querypm}$ is defined for input databases of the restricted form $\Dpm$.

\section{Responsibility Measures via Positive Supports}\ifcameraready\nosectionappendix\fi
\label{sec:global-monotone}
The "signed facts" approach put forth in the previous section may not always be suitable, %
either because one may prefer to attribute responsibility to more tangible elements which are explicitly present in the database instead of `watering down' the responsibility score among `absent' facts, or simply because the number of "negative facts" is just too large, especially if one considers "schemas" with relations of high arity (in particular, note that \Cref{prop:signed-ms} assumes "bounded negative arity"). This motivates us to explore how to define notions of relevance and responsibility measures for positive facts. 

\subsection{Relevance and Responsibility of Positive Facts}

One needs to be careful when defining supports based solely on positive facts. %
As already discussed in the introduction in relation to the example in \Cref{fig:ex-recipe}, we should avoid considering $\set{I(\textit{mp},\textit{fish})}$ as a valid explanation, because the satisfaction of $q$ crucially relies on the absence of $I(\textit{mp},\textit{meat})$, which is not justified in the context of the database $\D$, which contains the latter fact.
With this in mind, in order to attribute responsibility only to (positive) database facts, we shall consider that a set $S \subseteq \D$ is a \AP""positive support"" for $q$ in $\D$ if $\spos S \cup \sneg \Dneg[\D] \models \querypm$. 
Observe how the database context is still represented by $\Dneg[\D]$: we cannot assume the negation of a fact just because it is not in $S$, it must be also not in $\D$.
\AP We say that a database fact is ""positive-relevant"" if it appears in some "minimal" "positive support". 

These "positive supports" are `monotone' in the following sense. Let us call a set $S$ of "facts" a \AP""$\D$-monotone support"" for a query $q$ if $S \subseteq \D$ and further $S' \models q$ for every $S \subseteq S' \subseteq \D$.
\ifcameraready
   \begin{lemma}\label{lem:possup->DmonSup}
      If $S$ is a "positive support" of $q$ in $\D$, then $S$ is a "$\D$-monotone support".
   \end{lemma}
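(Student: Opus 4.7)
The plan is a routine unwinding of the definitions: I would show that a single satisfying assignment witnessing $\spos S \cup \sneg \Dneg[\D] \models \querypm$ also witnesses $S' \models q$ for every $S'$ with $S \subseteq S' \subseteq \D$. Concretely, fix a satisfying assignment $\nu : \vars(q) \to \const(\D)$ for $\querypm$ in $\spos S \cup \sneg \Dneg[\D]$, which exists by the definition of positive support, and pick any disjunct $p$ of $q$ whose signed version $p^\pm$ is satisfied by $\nu$ (in the UCQneg case; for CQneg there is only one choice). Then fix an arbitrary $S'$ with $S \subseteq S' \subseteq \D$ and verify atom-by-atom that $\nu$ satisfies $p$ in $S'$.

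The atom-by-atom check splits into three cases. For a positive atom $R(\bar t)$ of $p$, the corresponding atom in $p^\pm$ is $\spos R(\bar t)$, so $R(\nu(\bar t)) \in S$ by choice of $\nu$, hence $R(\nu(\bar t)) \in S'$ since $S \subseteq S'$. For a negated atom $\lnot R(\bar t)$ of $p$, the corresponding atom in $p^\pm$ is $\sneg R(\bar t)$, so $\nu(\sneg R(\bar t)) \in \sneg \Dneg[\D]$, which by the definition of $\Dneg[\D]$ means $R(\nu(\bar t)) \notin \D$; since $S' \subseteq \D$, we conclude $R(\nu(\bar t)) \notin S'$ and the negated atom is satisfied in $S'$. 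An inequality atom $t \neq t'$ depends only on $\nu$ and not on the underlying database, so it is satisfied in $S'$ iff it was satisfied before. Thus $S' \models q$ via $\nu$, which establishes that $S$ is a $\D$-monotone support.

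I do not expect a real obstacle: the lemma is an immediate consequence of the closed-world choice of $\Dpm$. The only conceptual point worth isolating is that $\sneg \Dneg[\D]$ records the \emph{complete} collection of facts absent from $\D$ (relative to the active domain), which is exactly what guarantees that a negated atom certified absent through $\nu$ remains absent in every intermediate set $S'$ between $S$ and $\D$. It is also worth noting that monotonicity is \emph{two-sided} here, since $S' \subseteq \D$ is what ensures the negated atoms remain absent; without the upper bound $S' \subseteq \D$, the statement would fail, which explains why the lemma is about $\D$-monotone supports rather than genuine monotonicity.
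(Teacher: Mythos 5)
Your proof is correct and follows essentially the same route as the paper, which simply invokes monotonicity of $\querypm$ to get $\spos S' \cup \sneg \Dneg[\D] \models \querypm$ for every $S \subseteq S' \subseteq \D$. Your atom-by-atom check just makes explicit the final step (that a satisfying assignment verified against $\sneg\Dneg[\D]$ still witnesses $S' \models q$ because $S' \subseteq \D$) which the paper leaves implicit, and your closing remark correctly identifies why the upper bound $S' \subseteq \D$ is essential.
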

\else
   \begin{lemmarep}\label{lem:possup->DmonSup}
      If $S$ is a "positive support" of $q$ in $\D$, then $S$ is a "$\D$-monotone support".
   \end{lemmarep}
   \begin{proof}
      Since $\querypm$ is monotone (it is a "UCQneq"), if $\spos S \cup \sneg \Dneg[\D] \models \querypm$ and $S \subseteq S'$, then $\spos S' \cup \sneg \Dneg[\D] \models \querypm$.
   \end{proof}
\fi
However, "$\D$-monotone supports" do not in general coincide with "positive supports", nor do "minimal" "$\D$-monotone supports" yield a good notion of relevance, as the next examples show.

\begin{example}\label{ex:Dmon-/->PosSup:1}
   Consider the query $q=(\exists x ~ A(x) \land \lnot B(x)) \lor (\exists x ~ B(x) \land C(x))$ and $\D = \set{A(c), B(c), C(c)}$ from \Cref{ex:alternative-entailment-semantics-signed}. We have that $\set{A(c),C(c)}$ is a ("minimal") "$\D$-monotone support" since $\set{A(c),C(c)} \models q$ and $\set{A(c),B(c),C(c)} \models q$, but it is not a "positive support". 
   As discussed before, $\set{A(c),C(c)}$ should not be deemed a valid explanation for $q$ in the context of $\D$. Thus, we argue that the seemingly simpler notion of considering supports which are monotone does not yield a reasonable notion of explanation. 
\end{example}

Observe that the previous example uses disjunction in a non-trivial way, as $\set{A(c)}$ and $\set{A(c),B(c)}$ satisfy the query due to different disjuncts of $q$ being made true. A natural question is then whether "$\D$-monotone supports" and "positive supports" coincide for "CQneg" queries. The answer, again, is negative.
\begin{example}\label{ex:Dmon-/->PosSup:2}
   Consider the "CQneg"  $q = \exists xyzu ~ R(x,y,y) \land R(y,z,u) \land \lnot R(u,x,x)$, the "database" $\D = \set{R(a,b,b), R(b,c,d), R(d,a,a)}$, and the set of "facts" $S = \set{R(a,b,b), R(b,c,d)}$. 
   First observe  that $S$ is a "$\D$-monotone support" since we have $S \models q$ via the "satisfying assignment" $\nu = (x,y,z,u) \mapsto (a,b,c,d)$, and $\D \models q$ via $ \nu' = (x,y,z,u) \mapsto (d,a,b,b)$.
   Moreover, $S$ is a "subset-minimal@minimal" "$\D$-monotone support", since removing any set of facts from $S$ results in a non-"support".
   
   On the other hand, %
   $S$ is not a "positive support" since the sole "satisfying assignment" for the positive part of $q$ is $\nu$, sending the negated atom $R(u,x,x)$ to an existing atom $R(d,a,a) \in \D$. 
   Further, the only "minimal" "positive support" is $S' = \set{R(b,c,d), R(d,a,a)}$.
   For this reason, we would obtain that $R(a,b,b)$ is `monotone-relevant' in the sense that it belongs to some "minimal" "$\D$-monotone support" while it is not "positive-relevant".
   Analogously as in previous examples, we consider that any notion making $R(a,b,b)$ relevant is not a sensible choice for responsibility attribution, since $R(a,b,b)$ does not effectively participate in any "satisfying assignment" making the query $q$ true in $\D$.
\end{example}

An additional drawback of the notion of "$\D$-monotone supports" is that it can lead to "minimal" supports of unbounded size.

\begin{figure}
   \begin{center}
   \includegraphics[width=.9\textwidth]{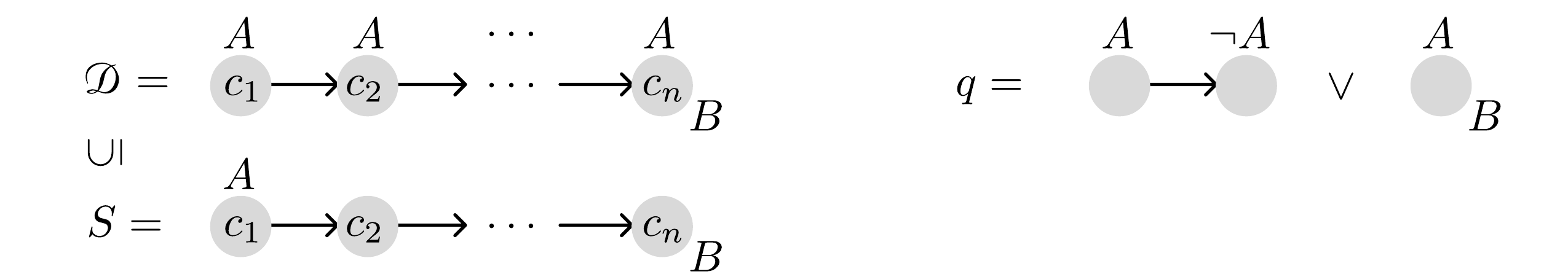}
   \end{center}
   \caption{Construction of "minimal" "$\D$-monotone supports" of unbounded size. %
   }
   \label{fig:unbounded-minimal-supports}
\end{figure}
\begin{example}
\label{ex:unbounded-minimal-supports}
   Consider, as depicted in \Cref{fig:unbounded-minimal-supports}, the database $\D_n = \set{R(c_i,c_{i+1}) : i < n} \cup \set{A(c_i) : i \leq n} \cup \set{B(c_n)}$ and the "UCQneg" 
$
         q = 
            (\exists xy ~ A(x) \land R(x,y) \land \lnot A(y)) 
         ~~\lor~~ 
            (\exists x ~ A(x) \land B(x)),
$
   adapted from \cite{DBLP:conf/kr/BienvenuB23}. 
    We can observe that $S = \set{R(c_i,c_{i+1}) : i < n} \cup \set{A(c_1)} \cup \set{B(c_n)}$ is a "minimal" "$\D$-monotone support". 
\end{example}

Summing up what we have learned from the preceding examples: %
\begin{lemmarep}
   Every "positive support" of $q$ in $\D$ is "$\D$-monotone@$\D$-monotone support" for $q$, but not all "$\D$-monotone supports" are "positive supports", even when restricted to $q$ being a "CQneg". Further, "$\D$-monotone supports" may be of unbounded size w.r.t.\ the size of the query. 
\end{lemmarep}
\begin{proof}
   The first statement is \Cref{lem:possup->DmonSup}, the second statement is given by \Cref{ex:Dmon-/->PosSup:1,ex:Dmon-/->PosSup:2}, and the third statement by \Cref{ex:unbounded-minimal-supports}. 
   
For the third statement, we provide additional details here. Reconsider the database $\D_n = \set{R(c_i,c_{i+1}) : i < n} \cup \set{A(c_i) : i \leq n} \cup \set{B(c_n)}$ and the "UCQneg" 
\ifcameraready
   $q = 
            (\exists xy ~ A(x) \land R(x,y) \land \lnot A(x)) 
         \lor
            (\exists x ~ A(x) \land B(x))$,
\else
   \[
         q = 
            (\exists xy ~ A(x) \land R(x,y) \land \lnot A(x)) 
         ~~\lor~~ 
            (\exists x ~ A(x) \land B(x)),
      \] 
\fi
      which is depicted in \Cref{fig:unbounded-minimal-supports}.
    We observe that $S = \set{R(c_i,c_{i+1}) : i < n} \cup \set{A(c_1)} \cup \set{B(c_n)}$ is a "minimal" "$\D$-monotone support". Concretely:
   \begin{enumerate}[(a)]
      \item $S \models q$, via the assignment $\set{x \mapsto c_1, y \mapsto c_2}$ which makes the first disjunct hold,
      \item $S' \models q$ for all $S \subseteq S' \subseteq \D$, since there is always a `$A \xrightarrow{R} \lnot A$' pattern unless all $A$-facts are present, in which case $\exists x ~ A(x) \land B(x)$ holds (and thus so does $q$), and
      \item $S$ is "subset-minimal@minimal" with respect to (a) and (b), since:
      \ifcameraready
            (i) removing any set of facts containing $A(c_1)$ leads to a set of facts which does not satisfy $q$,
            (ii) removing any set of facts containing $B(c_n)$ leads to a set $S'$ of facts which is not monotone since $S' \cup \set{A(c_j) : 1 \leq j \leq n} \not\models q$,
            (iii) removing any set of facts containing $R(c_i,c_{i+1})$ leads to a set $S'$ of facts which is not monotone since $S' \cup \set{A(c_j) : 1 \leq j \leq i} \not\models q$.\qedhere
      \else
         \begin{enumerate}[(i)]
            \item 
            removing any set of facts containing $A(c_1)$ leads to a set of facts which does not satisfy $q$,
            \item 
            removing any set of facts containing $B(c_n)$ leads to a set $S'$ of facts which is not monotone since $S' \cup \set{A(c_j) : 1 \leq j \leq n} \not\models q$,
            \item 
            removing any set of facts containing $R(c_i,c_{i+1})$ leads to a set $S'$ of facts which is not monotone since $S' \cup \set{A(c_j) : 1 \leq j \leq i} \not\models q$.\qedhere
         \end{enumerate}
      \fi
      
   \end{enumerate}
\end{proof}

In light of the preceding examples and results, we shall not study the notion of "$\D$-monotone supports" further and will focus instead on "positive supports" for the remainder of the section. 
Observe that the definition of "positive supports" has an obvious relation to that of "signed support", and it can be seen that every "minimal" "positive support" must be part of the positive part of a "signed support". However, the notion of relevance over (positive) facts is in general more restrictive than that of "signed-relevance", as we show next.

\begin{lemmarep}\label{lem:mon-relevance-vs-signed-relevance}
   Every "positive-relevant" fact is "signed-relevant" as a "positive fact" for all "UCQneg"s. However, not every "signed-relevant" "positive fact" is "positive-relevant", even for "CQneg"s.
\end{lemmarep}
\begin{proofsketch}
For the second statement, we consider the database $\D = \set{R(a,b), R(a,c), B(b)}$ and the "CQneg" $q = \exists x y z ~ R(x,y) \land R(x,z) \land \lnot A(y) \land \lnot B(z)$. It is easy to see that $S = \set{\spos R(a,b), \spos R(a,c), \sneg A(b), \sneg B(c)}$ is a "minimal" "signed support" and thus that $\spos R(a,b)$ is "signed-relevant". However, the only "minimal" "positive support" is $S' = \set{R(a,c)}$, meaning that $R(a,b)$ is not "positive-relevant". Indeed, observe that $\set{R(a,b)}$ is not a "positive support" since it would require having $\sneg B(b)$ in $\sneg \Dneg$.
\end{proofsketch}
\begin{proof}
   \proofcase{First statement.}
   Consider a database $\D$ and "UCQneg" $q$, and let $\alpha \in \D$ be "positive-relevant". It follows that there exists a "minimal" "positive support" $S$ such that $\alpha \in S$. In particular, $\spos S \cup \sneg\Dneg \models \querypm$, and hence $\spos S \cup \sneg \Dneg$ is a "signed support" of $q$.
 If $\spos S \cup \sneg \Dneg$ is a "minimal" "signed support", then we are done, as $\spos S \cup \sneg \Dneg$  witnesses that $\spos \alpha$ is "signed-relevant".
 Otherwise, take any $S' \subseteq S$ and $D' \subseteq \Dneg$ such that $+S' \cup \sneg D'$ is a "minimal" "signed support". 
 We must have $S'=S$, otherwise $S'$ would also be a "positive support", contradicting the minimality of $S$.
 Hence, $+S \cup \sneg D'$ is a "minimal" "signed support" that contains $+\alpha$, so $\spos \alpha$ is "signed-relevant".
   \medskip 

   \proofcase{Second statement.} Consider the database $\D = \set{R(a,b), R(a,c), B(b)}$ and the "CQneg" $q = \exists x y z ~ R(x,y) \land R(x,z) \land \lnot A(y) \land \lnot B(z)$, see \Cref{fig:mon-relevance-vs-signed-relevance} for a depiction.
   \begin{figure}
      \includegraphics[width=\textwidth]{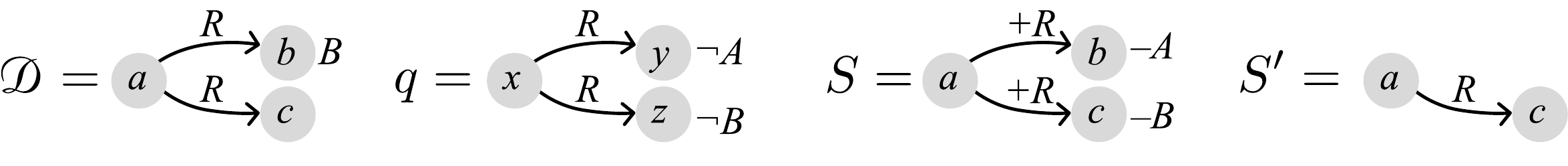}
      \caption{Counterexample for proof of \Cref{lem:mon-relevance-vs-signed-relevance}.}
      \label{fig:mon-relevance-vs-signed-relevance}
   \end{figure}
   It is easy to see that $S = \set{\spos R(a,b), \spos R(a,c), \sneg A(b), \sneg B(c)}$ is a "minimal" "signed support" and thus that $\spos R(a,b)$ is "signed-relevant". However, the only "minimal" "positive support" is $S' = \set{R(a,c)}$, meaning that $R(a,b)$ is not "positive-relevant". Indeed, observe that $\set{R(a,b)}$ is not a "positive support" since it would require having $\sneg B(b)$ in $\sneg \Dneg$.
\end{proof}

However, when a "CQneg" has no "mergeable atoms" (which holds, in particular, if it is "self-join free"), then the two notions coincide.
\begin{lemmarep}\label{lem:no-mergeable-atoms-implies-bijection}
   Let $q$ be a "CQneg" with no "mergeable atoms", $\D$ a database and $S \subseteq \D$. Then, $S$ is a "minimal" "positive support" if, and only if, $\hat S$ is a "minimal" "signed support" for some $\hat S$ such that $\Dpos[\hat S] = S$. Further, if $S$ is a "minimal" "positive support", then there exists exactly one "minimal" "signed support" $\hat S$ such that $\Dpos[\hat S] = S$.
\end{lemmarep}
\begin{proof}
   First observe that:
   \begin{enumerate}
      \item \label{it:atleast-pos-atoms}
      If $\hat S \models \querypm$, then the size of  $\Dpos[\hat S]$ must be, at least, the number $k$ of "positive atoms" of $q$, since otherwise we would have two "mergeable atoms" in $q$.
      \item \label{it:positive-determines-negative}
      If $\hat S$ is a "minimal" "signed support", then there exists exactly one "homomorphism" $\querypm \homto \hat S$. Indeed, there is only one homomorphism from the "positive atoms" to the "positive facts" since otherwise there would be two "mergeable atoms"; and the variable assignment of the unique homomorphism for the "positive atoms" also fixes where the variables in the "negative atoms" are mapped since all negations in $q$ are "safe@safe negation".
   \end{enumerate}

   \proofcase{First statement.} 
   The left-to-right direction is shown in the proof of the first statement of \Cref{lem:mon-relevance-vs-signed-relevance}. 
   For the right-to-left direction, 
    suppose now that $\hat S \subseteq \Dpm$ is a "minimal" "signed support" -- in particular $\hat S \models \querypm$. Note that by "minimality" the size of $\Dpos[\hat S]$ must be exactly $k$ %
    by \Cref{it:atleast-pos-atoms}.
   By "monotonicity" of $\querypm$ (it is a "CQneq"), we have $\spos \Dpos[\hat S] \cup \sneg \Dneg \models \querypm$ and thus $\Dpos[\hat S]$ is a "positive support". If $\Dpos[\hat S]$ was not "minimal", we would have $\spos S' \cup \sneg \Dneg \models \querypm$ for some $S' \subsetneq \Dpos[\hat S]$, contradicting that the number of "positive atoms" of $\spos S' \cup \sneg \Dneg$ is at least $k$. %

   \medskip    

   \proofcase{Second statement.}
   For any given "minimal" "signed support" $\hat S$, there is only one homomorphism from the positive atoms of $\querypm$ to $\spos \Dpos[\hat S]$, as explained in \Cref{it:positive-determines-negative}, and this homomorphism fixes the "negative facts" of $\hat S$. Hence, there cannot be two "minimal" "signed supports" sharing the same "positive atoms".
\end{proof}
\begin{corollary}%
   For any "CQneg" with no "mergeable atoms", a fact $\alpha$ is "positive-relevant" "iff" $\spos\alpha$ is "signed-relevant".
\end{corollary}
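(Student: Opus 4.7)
The plan is to derive the corollary directly from \Cref{lem:no-mergeable-atoms-implies-bijection}, which already establishes the key bijection between "minimal" "positive supports" $S$ and "minimal" "signed supports" $\hat{S}$ with $\Dpos[\hat{S}] = S$. Both notions of relevance are just ``belonging to some minimal support of the appropriate kind,'' so once the supports are in bijection via the positive-part operator, relevance of the corresponding facts follows.

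For the forward direction, I would assume $\alpha$ is "positive-relevant", meaning there is a "minimal" "positive support" $S \subseteq \D$ with $\alpha \in S$. Applying the left-to-right direction of \Cref{lem:no-mergeable-atoms-implies-bijection}, I obtain a "minimal" "signed support" $\hat{S}$ with $\Dpos[\hat{S}] = S$. Since $\alpha \in S = \Dpos[\hat{S}]$, the "signed fact" $\spos\alpha$ lies in $\hat{S}$, witnessing that $\spos\alpha$ is "signed-relevant".

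For the backward direction, I would assume $\spos\alpha$ is "signed-relevant", so there exists a "minimal" "signed support" $\hat{S}$ containing $\spos\alpha$. Setting $S \defeq \Dpos[\hat{S}]$, the right-to-left direction of \Cref{lem:no-mergeable-atoms-implies-bijection} gives that $S$ is a "minimal" "positive support", and by construction $\alpha \in S$, so $\alpha$ is "positive-relevant".

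Both directions are essentially bookkeeping on top of the lemma, so I do not expect any real obstacle; the only point meriting a brief mention is that the hypothesis of no "mergeable atoms" is precisely what is required to invoke \Cref{lem:no-mergeable-atoms-implies-bijection}, and that safeness of negations (already assumed throughout the paper) ensures that $\Dpos[\hat{S}]$ is well-defined as a subset of $\D$ in the second direction.
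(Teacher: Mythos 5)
Your proposal is correct and matches the paper's intent exactly: the corollary is stated as an immediate consequence of \Cref{lem:no-mergeable-atoms-implies-bijection}, and your two-direction bookkeeping (using that $\alpha \in \Dpos[\hat S]$ iff $\spos\alpha \in \hat S$) is precisely the intended derivation. The only superfluous remark is the appeal to safeness of negations at the end: $\Dpos[\hat S] \subseteq \D$ holds automatically because $\hat S \subseteq \Dpm$ and $\Dpos[(\Dpm)] = \D$.
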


\AP
Utilizing the notion of "positive supports", we may define Shapley-value-based responsibility measures analogously to what has been done for the class of all "supports" for "monotone" queries in \cite{ourpods25}. 
Concretely, for a database $\D$ and Boolean query $q$, we define the ""positive-drastic-Shapley"" value of a fact $\alpha \in \D$ as being $\Sh(\D,\monDRscorefun[q],\alpha)$, where for every $S\subseteq \D$, we define
 $\intro*\monDRscorefun[q](S)=1$ if $S$ contains a "positive support" for $q$ and $\monDRscorefun[q](S)=0$ otherwise. 
For a class $\+C$ of queries, we denote by $\intro*\monDRShapley{\+C}$ the task of given $q \in \+C$ a database $\+D$ and a fact $\alpha \in \+D$ computing $\Sh(\D,\monDRscorefun[q],\alpha)$.\footnote{The `$\mondrasticindex$' superscript stands for \ul{p}ositive support \ul{dr}astic value.}
\AP
Analogously, the ""MPS-Shapley"" value of a fact $\alpha \in \D$ is $\Sh(D,\monMSscorefun[q],\alpha)$ where
 $\intro*\monMSscorefun[q](B)$ is the number of "minimal" "positive supports" of $q$ inside $S$, 
 and we denote by $\intro*\monMSShapley{\+C}$ the task of computing it for a class $\+C$ of queries.\footnote{The `$\minmonsupindex$' superscript stands for \ul{m}inimal \ul{p}ositive \ul{s}upports.}

\subsection{Computing MPS-Shapley Values for Positive Facts}
\label{sec:global-wsms}
We start our study of the complexity of computing responsibility scores for positive facts 
by considering the task $\monMSShapley{}$. %
First we remark that %
 $\monMSscorefun[q]$  can be equivalently defined via a sum of inverse support sizes:
\begin{lemmarep}\label{lem:frac-equation-for-positive-MS}
$\Sh(\D,\monMSscorefun[q],\alpha)$ is equal to the sum, over all "minimal" "positive supports" $S$ of $q$ in $\D$ containing $\alpha$, of $\frac{1}{|S|}$.
\end{lemmarep}
\begin{proof}
   This proof is analogous to that of \cite[Proposition 4.3]{ourpods25}.
   Consider some "minimal" "positive support" $S$, and let $q_S$ be the query whose only "minimal" positive support is $S$. 
   By the so-called ``Shapley axioms'', which $\Sh(\D,\monMSscorefun[{q_S}],\alpha)$ verifies by definition \cite{shapley1953value}, we have by axiom (Null) that $\Sh(\D,\monMSscorefun[q_S],\alpha) = 0$ for any fact $\alpha \in \D \setminus S$. By axiom (Sym), the remaining values must all have the same value, and their sum is fixed to be $\monMSscorefun[q_S](\D)$, that is, the number of "minimal" "positive supports" in $\D$, which is 1. This means that $\Sh(\D,\monMSscorefun[q_S],\alpha) = \frac{1}{|S|}$ if $\alpha \in S$ or $0$ otherwise. Finally, by axiom (Lin) we can sum the contributions of all positive minimal supports to obtain the desired formula.
\end{proof}

\subparagraph*{Data Complexity}

Regarding data complexity, as was the case for the signed facts approach, we can establish tractable data complexity
for all  "UCQneg"s. 

\begin{proposition}\label{prop:UCQneg-poly-MS-data}
    $\monMSShapley{\text{"UCQneg"}}$ is in polynomial time in data complexity.
\end{proposition}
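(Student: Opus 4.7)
My plan is to build directly on \Cref{lem:frac-equation-for-positive-MS}, which tells us that
\[
   \Sh(\D,\monMSscorefun[q],\alpha) \;=\; \sum_{\substack{S \text{ min.\ positive support} \\ \alpha \in S}} \frac{1}{|S|}.
\]
So it suffices to enumerate all minimal positive supports containing $\alpha$ and sum up the weights. The key observation is that in data complexity, $|q|$ is a constant, and the size of any positive support is bounded by the number $k$ of positive atoms in $q$: indeed, if $\spos S \cup \sneg\Dneg[\D] \models \querypm$, then some satisfying assignment $\nu$ of $\querypm$ maps each of the $k$ positive atoms of $\querypm$ into $\spos S$, so $|S| \leq k$ is a constant.

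First, I would enumerate all subsets $S \subseteq \D$ with $|S| \leq k$; there are $O(|\D|^k)$ of these, a polynomial number in data complexity. For each such candidate $S$, I would test whether $S$ is a positive support by deciding whether there is a satisfying assignment $\nu : \vars(q) \to \const(\D)$ such that $\nu$ maps every positive atom of $q$ into $S$ and, for every negated atom $\neg R(\bar t)$ of $q$, we have $R(\nu(\bar t)) \notin \D$ (together with checking inequality atoms). This is just restricted query evaluation over $\D$ and can be done by brute-force enumeration over valuations in time $O(|\D|^{|\vars(q)|})$, polynomial in $|\D|$. Equivalently, this is evaluation of $\querypm$ on $\spos S \cup \sneg\Dneg[\D]$, which is also polynomial by standard UCQ evaluation (noting that $|\Dneg[\D]|$ is polynomial in $|\D|$ since arities are fixed).

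Second, for each candidate that is a positive support, I would check minimality by testing, for each $\beta \in S$, whether $S \setminus \{\beta\}$ is still a positive support using the same procedure; minimality holds iff every such removal fails. This adds only a polynomial overhead per candidate. Finally, for each minimal positive support $S$ containing $\alpha$, I add $\frac{1}{|S|}$ to a running total. The overall running time is polynomial in $|\D|$, since $|q|$ and $k$ are constants.

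There is no real obstacle here beyond the boundedness argument: the fact that $|q|$ is fixed in data complexity immediately yields both a polynomial enumeration space and polynomial-time checks, so the result follows essentially from \Cref{lem:frac-equation-for-positive-MS} combined with tractability of (signed) UCQ evaluation in data complexity. The only subtle point worth highlighting is that, in contrast to the signed facts setting, we do not need any reduction to a monotone problem: the bounded size of minimal positive supports comes for free from the safety condition on negation, which guarantees that the supporting facts are pinned down by the positive part of the query.
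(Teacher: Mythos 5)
Your proposal is correct and follows essentially the same route as the paper's own proof: bound the size of minimal positive supports by a constant depending only on $q$, enumerate all polynomially many candidate subsets, test the positive-support and minimality conditions by evaluating $\querypm$ over $\spos S \cup \sneg\Dneg$, and sum $\frac{1}{|S|}$ via \Cref{lem:frac-equation-for-positive-MS}. One wording slip: it is not true that \emph{every} positive support has size at most $k$ (the whole database is a positive support), only that every \emph{minimal} one does, since the image of the positive atoms under a satisfying assignment is itself a positive support of size at most $k$ contained in $S$ --- which is all your enumeration actually needs.
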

\begin{proof}
    For every fixed "UCQneg" query $q$ and "database" $\D$, observe that: (1) the size of a "minimal" "positive support" is bounded by the maximum size $N$ of a "CQneg" in $q$, and (2) testing if a set $S$ is a "positive support" is in polynomial time (indeed, it amounts to evaluating $\querypm$ over the database $\spos S \cup \sneg\Dneg$).
    Hence, given $\alpha \in \D$, %
    we can enumerate, in polynomial time in the size of $\D$, all polynomially-many %
    subsets $S$ of $\D$ of size $\leq N$ that contain $\alpha$ and check, for each such subset, whether it is a "minimal" "positive support", adding $\frac{1}{|S|}$ to the current value whenever the check succeeds. %
    The correctness of the result is assured by \Cref{lem:frac-equation-for-positive-MS}.
\end{proof}

The preceding result directly yields tractability of deciding "positive-relevance". As in the case of "signed-relevance", 
the upper bound can be improved to "AC0". 

\ifcameraready
   \begin{proposition}
   \label{prop:pos-relevance-AC0}
      It can be decided in "AC0" "data complexity" whether a "fact" $\alpha \in \D$ is "positive-relevant" "wrt" a given "UCQneg" and "database" $\D$. 
   \end{proposition}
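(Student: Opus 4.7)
The plan is to follow the blueprint of the proof of \Cref{prop:signed-relevance-AC0}, encoding positive-relevance of $\alpha$ as a first-order sentence evaluated directly on $\D$ (rather than on $\Dpm$). Since $q$ is fixed in data complexity, its disjuncts, variables, and atoms can be enumerated syntactically inside the formula, yielding a sentence of constant size, which can then be evaluated in AC0.

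For each disjunct $p$ of $q$ and each map $\mu : \vars(p) \to \vars(q)$, I would define $p_\mu$ analogously to the construction $\psubmu{\cdot}{\cdot}$ used in the proof of \Cref{prop:signed-relevance-AC0}, except that $p$ now ranges over the disjuncts of $q$ (and not of $\querypm$), so that $p_\mu$ is itself a Boolean CQneg. By the standard negation-as-failure semantics of CQnegs, $\D \models p_\mu$ witnesses exactly a satisfying assignment of the corresponding disjunct of $q$ on $\D$ with the image pattern specified by $\mu$, and the set of images of the positive atoms of $p$ under that assignment is then a positive support of $q$ in $\D$. The full sentence $\phi_\alpha$ will assert: (i) there exist $p$ and $\mu$ with $\D \models p_\mu$, (ii) $\alpha$ coincides with the image under $\mu$ of some positive atom of $p$, and (iii) no pair $(p', \mu')$ with $\mu'(\Atsubpplus{p'}) \subsetneq \mu(\Atsubpplus{p})$ satisfies $\D \models p'_{\mu'}$, where $\Atsubpplus{p}$ denotes the set of positive atoms of $p$. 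A direct unfolding shows that $\D \models \phi_\alpha$ if and only if $\alpha$ is positive-relevant for $q$ in $\D$.

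The main subtlety lies in the minimality clause: we must compare against strictly smaller subsets that are themselves positive supports in the same $\D$. This is automatic here because $p'_{\mu'}$ is a CQneg evaluated on $\D$, so its negated atoms are checked against the full database, exactly matching the definition of positive support; no explicit manipulation of $\Dneg[\D]$ is needed. The inequality constraints built into $p_\mu$ ensure that $\mathrm{Im}(\mu)$ has the intended cardinality, so strict containment of positive-atom images can be expanded into a finite syntactic disjunction over atoms of $p$ and $p'$. Since all enumerations range over syntactic objects of the fixed query $q$, the sentence $\phi_\alpha$ has constant size, placing the problem in AC0 data complexity.
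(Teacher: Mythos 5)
Your proposal is correct and follows essentially the same route as the paper: the paper's proof likewise adapts the first-order sentence from Proposition~\ref{prop:signed-relevance-AC0}, evaluating the $\psubmu{p}{\mu}$ queries (built from the original disjuncts of $q$, hence with negated atoms checked against the full $\D$) and restricting the containment test in the minimality clause to the images of the \emph{positive} atoms $\Atsubpplus{p}$. Your observation that the negation-as-failure semantics of the CQneg disjuncts automatically encodes the $\sneg\Dneg[\D]$ context is exactly the point the paper relies on.
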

\else
   \begin{propositionrep}
      \label{prop:pos-relevance-AC0}
      It can be decided in "AC0" "data complexity" whether a "fact" $\alpha \in \D$ is "positive-relevant" "wrt" a given "UCQneg" and "database" $\D$. 
   \end{propositionrep}
   \begin{proof}
      The proof is essentially the same as that of \Cref{prop:signed-relevance-AC0}. The condition is now:
      \begin{enumerate}[A.]
         \item \label{prop:pos-relevance-AC0:1} 
         There is a "positive support" (for some disjunct $p \in \querypm$)...
         \item \label{prop:pos-relevance-AC0:2} 
         ...that contains $\alpha$, such that...
         \item \label{prop:pos-relevance-AC0:3} 
         ...it does not strictly contain a "positive support" (for some disjunct $p' \in \querypm$).
      \end{enumerate}
      It can be expressed via the first-order sentence:
      \[
      \underbrace{\bigvee_{p \in \querypm} \bigvee_{\mu: \vars(p) \to \vars(q)}  \Big( \psubmu p \mu}_{\text{\Cref{prop:pos-relevance-AC0:1}}} {} \land {}
      \underbrace{\bigvee_{\beta \in \Atsubpplus p} \alpha = \mu(\beta)}_{\text{\Cref{prop:pos-relevance-AC0:2}}}
      {} \land {}
      \underbrace{\lnot \bigvee_{p' \in \querypm} \bigvee_{\mu' : \vars(p') \to \vars(q) \text{ s.t. } \mu'(\Atsubpplus{p'}) \subsetneq \mu(\Atsubpplus p)} \psubmu{p'}{\mu'} \Big)}_{\text{\Cref{prop:pos-relevance-AC0:3}}}
      \]
      where $\AP\intro*\Atsubpplus p$ denotes the set of \emph{positive} relational atoms of the form $\spos R(\bar t)$ in $p \in \querypm$.
   \end{proof}
\fi

\subparagraph*{Combined Complexity}
We cannot proceed by reduction to the monotone case as was done in \Cref{prop:signed-ms} for computing $\monMSShapley{}$. This is because while the 
"minimal" "signed support"s of $q$ in $\D$ coincide with the "minimal support"s of $\querypm$ in $\Dpm$, in general there is no bijection between the "positive support"s of $q$ in $\D$ 
and the latter supports. Moreover, we need to know the number of "positive support"s per size, and
 even in cases where a "positive support" $S$ is induced from a unique "minimal" "(signed) support@signed support" $S'$, %
 the sizes of $S$ and $S'$ will differ as soon as $S'$ contains "negative facts" of the `$\sneg P$' relations. %
However, when the query has no "mergeable atoms", the "minimal" "signed supports" are in bijection with the "minimal" "positive supports" (\Cref{lem:no-mergeable-atoms-implies-bijection}). We exploit this property
 to establish the following tractability result:

\ifcameraready
   \begin{proposition}\label{prop:gmsbarity}
      For any class $\+C$ of "CQneg"s having bounded "generalized hypertree width", no "mergeable atoms", and "bounded negative arity", $\monMSShapley{\+C}$ is in polynomial time.
   \end{proposition}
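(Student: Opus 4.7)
The plan is to reduce $\monMSShapley{\+C}$ to the tractable signed case $\ShapleypmMinSup{\+C}$ (from \Cref{prop:signed-ms}) by exploiting the tight correspondence between "minimal" "positive supports" and "minimal" "signed supports" that holds when $q$ has no "mergeable atoms" (\Cref{lem:no-mergeable-atoms-implies-bijection}). The key observation is that under this assumption both kinds of supports have query-determined size, so the two Shapley values differ only by a multiplicative query-dependent constant.

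Concretely, I would first apply \Cref{lem:frac-equation-for-positive-MS} to rewrite
\[
\Sh(\D, \monMSscorefun[q], \alpha) = \sum \left\{ \tfrac{1}{|S|} : S \text{ a "minimal" "positive support" of $q$ in $\D$ containing } \alpha \right\}.
\]
Since $q$ has no "mergeable atoms", the argument used inside the proof of \Cref{lem:no-mergeable-atoms-implies-bijection} shows that each "minimal" "signed support" $\hat S$ contains exactly $k$ "positive facts" (one image per "positive atom" of $q$) and exactly $m$ "negative facts" (one image per "negated atom"), so every "minimal" "positive support" has size exactly $k$ and every "minimal" "signed support" has size exactly $k+m$. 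Combining this with the explicit bijection $S \mapsto \hat S$ of \Cref{lem:no-mergeable-atoms-implies-bijection}, which matches "minimal" "positive supports" $S \ni \alpha$ with "minimal" "signed supports" $\hat S \ni {\spos\alpha}$, I get
\[
\Sh(\D, \monMSscorefun[q], \alpha) = \frac{k+m}{k}\cdot \Sh(\Dpm, \msscorefun[\querypm], \spos \alpha).
\]
The right-hand side is $\tfrac{k+m}{k}$ times the "MS-Shapley measure for signed facts", which is in polynomial time by \Cref{prop:signed-ms}, provided $\+C$ has "bounded negative arity", bounded "generalized hypertree width", and bounded "self-join width". The first two hold by hypothesis, and "self-join width" $0$ follows directly from the absence of "mergeable atoms".

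The main potential obstacle is checking that the absence of "mergeable atoms" in $q$ really yields the conclusions above about $\querypm$, and in particular that it suffices to ensure the "self-join width" assumption of \Cref{prop:signed-ms}: "mergeability" between "positive atoms" and between "negated atoms" transfers verbatim to $\spos R$- and $\sneg R$-atoms of $\querypm$, while a "positive atom" and a "negated atom" of $q$ become atoms over \emph{different} relation names of $\Sigmapm$ and therefore cannot be "mergeable" in $\querypm$. The remaining edge case is when $q$ has no "positive atoms" (so $k=0$); but by "safe negations" this forces $q$ to be atom-free, and the score is trivially $0$, so the rescaling formula is only applied when $k \ge 1$.
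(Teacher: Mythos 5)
Your reduction is a genuinely different route from the paper's. The paper does not invoke \Cref{prop:signed-ms} as a black box: it opens up the counting algorithm behind \Cref{Theorem 6.6 and more from PODS25}, which partitions the minimal supports of $\querypm$ in $\Dpmq$ according to a family of quotient queries $q_E$, and changes the ``size label'' attached to each $q_E$ from its total number of atoms to its number of positive atoms, so that minimal positive supports are weighted by their own size rather than by the size of the corresponding signed support. Your observation that, in the absence of mergeable atoms, \emph{all} minimal positive supports have size $k$ and \emph{all} minimal signed supports have size $k+m$ collapses this bookkeeping into the single query-dependent factor $\tfrac{k+m}{k}$, which is cleaner and lets you reuse \Cref{prop:signed-ms} wholesale. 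The containment correspondence ($\alpha\in S$ iff $\spos\alpha\in\hat S$, via \Cref{lem:no-mergeable-atoms-implies-bijection}) and the derivation of self-join width $0$ for $\querypm$ are handled correctly.

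The one step you must shore up is the claim that every minimal signed support contains \emph{exactly} $m$ negative facts. You attribute it to ``the argument used inside the proof of \Cref{lem:no-mergeable-atoms-implies-bijection}'', but that argument only establishes the positive half (exactly $k$ positive facts); nothing in that lemma prevents two negated atoms from being sent to the same negative fact. Your formula therefore silently requires that the $\sneg R$-atoms of $\querypm$ be pairwise non-mergeable. If ``no mergeable atoms'' is read as a condition on $\querypm$ --- consistent with how the paper defines self-join width for queries with negation, and in any case what you need in order to invoke \Cref{prop:signed-ms} --- this holds, and your proof is complete once you add the one-line injectivity argument for the negated atoms. Under the weaker reading (non-mergeability of positive atoms only, which is all that \Cref{lem:no-mergeable-atoms-implies-bijection} actually uses), the formula fails: for $q=\exists x\,y~ R(x,y)\wedge\lnot A(x)\wedge\lnot A(y)$ and a database containing both a loop $R(c,c)$ and a non-loop $R(a,b)$ with $A$ empty, the two minimal signed supports have sizes $2$ and $3$ while both minimal positive supports have size $1$, so no single multiplicative constant relates the two Shapley values. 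The paper's per-$q_E$ relabeling is robust to minimal supports of varying sizes; your rescaling is not, so you should state explicitly which reading you use. Two minor points: for $k=0$, safety forces $q$ to be variable-free rather than atom-free (ground negated atoms are allowed), though the score is indeed identically $0$ there; and $m$ must count only negated relational atoms, not inequality atoms, since the latter contribute no facts to signed supports.
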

   \begin{proofsketch}
      Consider a class $\+C$ of "CQneg"s satisfying the stated conditions, and take some input instance $q,\D,\alpha$.
      The "minimal" "positive supports" are in bijection with the "minimal" "signed supports" of the previous section by \Cref{lem:no-mergeable-atoms-implies-bijection}. In particular, we can transform every "minimal" "signed support" into a "minimal" "positive support" by simply removing the "negative facts".
      Leveraging on this bijection, it suffices to build $\Dpmq$ by \Cref{lem:Dpm-restriction-to-negative-atoms}. We then have that the "minimal" "positive supports" of $q$ in $\D$ are in bijection with %
      the "minimal supports" for $\querypm$ in $\Dpmq$. However, we need to count them for every size and the bijection does not preserve sizes (since it adds/removes "negative facts"). This issue is fairly minor, but it nevertheless requires delving into the proof of \cite[Theorem 6.6]{ourpods25}, see appendix for details.
   \end{proofsketch}
\else
   \begin{propositionrep}\label{prop:gmsbarity}
      For any class $\+C$ of "CQneg"s having bounded "generalized hypertree width", no "mergeable atoms", and "bounded negative arity", $\monMSShapley{\+C}$ is in polynomial time.
   \end{propositionrep}
   \begin{proofsketch}
      Consider a class $\+C$ of "CQneg"s satisfying the stated conditions, and take some input instance $q,\D,\alpha$.
      The "minimal" "positive supports" are in bijection with the "minimal" "signed supports" of the previous section by \Cref{lem:no-mergeable-atoms-implies-bijection}. In particular, we can transform every "minimal" "signed support" into a "minimal" "positive support" by simply removing the "negative facts".
      Leveraging on this bijection, it suffices to build $\Dpmq$ by \Cref{lem:Dpm-restriction-to-negative-atoms}. We then have that the "minimal" "positive supports" of $q$ in $\D$ are in bijection with %
      the "minimal supports" for $\querypm$ in $\Dpmq$. However, we need to count them for every size and the bijection does not preserve sizes (since it adds/removes "negative facts"). This issue is fairly minor, but it nevertheless requires delving into the proof of \cite[Theorem 6.6]{ourpods25}, see appendix for details.
   \end{proofsketch}
   \begin{proof}
      Consider a class $\+C$ of "CQneg"s satisfying the stated conditions, and take some input instance $q,\D,\alpha$.
      The "minimal" "positive supports" are in bijection with the "minimal" "signed supports" of Section \ref{sec:signed} by \Cref{lem:no-mergeable-atoms-implies-bijection}. In particular, we can transform every "minimal" "signed support" into a "minimal" "positive support" by simply removing the "negative facts".
      Leveraging on this bijection, it suffices to build $\Dpmq$ by \Cref{lem:Dpm-restriction-to-negative-atoms}. We then have that the "minimal" "positive supports" of $q$ in $\D$ are in bijection with %
      the "minimal supports" for $\querypm$ in $\Dpmq$. However, we need to count them for every size and the bijection does not preserve sizes (since it adds/removes "negative facts"). This issue is fairly minor, but it nevertheless requires delving into the proof of \cite[Theorem 6.6]{ourpods25}.

      Intuitively, the algorithm in \cite[Theorem 6.6]{ourpods25} builds a collection $\mathrm{P}_q$ of queries, designed so that every "minimal support" for $q$ in $\D$ is the image of one, and only one, $q_E \in \mathrm{P}_q$ \cite[Claim E.11]{ourpods25arxiv}, and that all "minimal supports" associated with a given $q_E$ have the same size (the number of relational atoms in $q_E$ as it turns out) \cite[Claim E.8]{ourpods25arxiv}. %
      The algorithm then proceeds to count the "minimal supports" associated with each $q_E \in \mathrm{P}_q$ that contain the input fact $\alpha$ and then ``labels'' that count with the correct size, so they can finally be all aggregated. %

      Because of the bijection, everything in the above paragraph still applies when replacing "minimal supports" (for $\querypm$ in $\Dpmq$) with "minimal" "positive supports" (for $q$ and $\D$) except the common size of all "minimal" "positive supports" associated with a given $q_E$ is now the number of \emph{positive} atoms in $q_E$. Therefore, the only thing we need to change in the algorithm is to replace the `label' applied to the count of $q_E$ from the total number of atoms to the number of positive atoms only.
   \end{proof}
\fi

We further show that we can drop the restriction to queries of "bounded negative arity" %
 by instead imposing that negations are `guarded'.
\AP
A negation $\lnot R(\bar x)$ in a "CQneg" $q$ is ""guarded@guarded negation"" %
 if there exists a positive atom $S(\bar y)$ of $q$ such that $\bar y$ contains all variables of $\bar x$. A "CQneg" is "guarded@guarded negation" if all negations are "guarded@guarded negation", and we denote by ""gCQneg"" the class of all "guarded@guarded negation" "CQneg"s.
\begin{propositionrep}\label{prop:gmsguard}
   For any class $\+C$ of "gCQneg"s having bounded "generalized hypertree width" and no "mergeable atoms", $\monMSShapley{\+C} \in \FP$.
\end{propositionrep}
\begin{proof} 
   Consider such a class $\+C$, and an input instance $q,\D, \alpha$.
   Let us divide $q$ into its non-"negated@negated atom" and "negated atoms" $q = q_+ \land q_-$, where we put "inequality atoms" in $q_-$.
   \AP
   Remember that, by "guardedness@guarded negation", for every "negated atom" $\lnot\gamma$ of $q_-$ there is a "positive atom" $\intro*\myCheck\gamma$ of $q_+$ such that $\vars(\gamma) \subseteq \vars(\myCheck \gamma)$. 
   Further, by "non-mergeability@mergeable atom", for every fact $\beta$ of $\D$, there is at most one atom $\intro*\myHat\beta$ of $q_+$ such that $\myHat\beta \homto \beta$.
   Let $\D' \subseteq \D$ be the set of all facts $\beta \in \D$ such that $\myHat \beta \homto \beta$ for some $\myHat\beta \in q_+$, and observe that any "minimal" "positive support" must be inside $\D'$.
   Note that $\D'$ can be computed in polynomial time since it amounts to testing whether a fact is in the result of an atomic "CQ".
   Let $\D''\subseteq \D'$ be the result of further removing any fact $\beta$ from $\D'$ if it is not part of the evaluation of $q_\beta = \myHat \beta \land \bigwedge_{\myCheck \gamma = \myHat \beta} \lnot\gamma$ on $\D$.
   Notice that $\D''$ can be computed in polynomial time since each $q_\beta$ is polynomial-time tractable.%
   \ifcameraready
      \ We can further show that a set $S\subseteq \D$ is a "minimal" "positive support" of $q$ in $\D$ if, and only if, it is a minimal support of $q_+$ on $\D''$ \cite[Claim~C.1]{thispaper-arxiv}.
   \else
      \begin{claim}
         A set $S\subseteq \D$ is a "minimal" "positive support" of $q$ in $\D$ if, and only if, it is a minimal support of $q_+$ on $\D''$.
      \end{claim}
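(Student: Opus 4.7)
The plan is to establish both directions through a single strengthened statement: for any $S\subseteq\D''$, the set $S$ is a support of $q_+$ on $\D''$ if and only if it is a positive support of $q$ in $\D$. Both minimality transfers then follow from this equivalence by a short argument on strict subsets.

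For the ``$q_+$-support implies positive support'' direction, I would take a homomorphism $h : q_+\to S\subseteq\D''$ and promote it to a satisfying assignment for $\querypm$ in $\spos S\cup\sneg\Dneg$. Fix a negated atom $\lnot\gamma$ of $q_-$. By "guarded negation", $\vars(\gamma)\subseteq\vars(\myCheck\gamma)$. Setting $\beta=h(\myCheck\gamma)\in\D''$, the very definition of $\D''$ guarantees an assignment $\nu_\beta$ satisfying $q_\beta$ on $\D$ with $\nu_\beta(\myHat\beta)=\beta$. Since $\myCheck\gamma=\myHat\beta$ and both $h$ and $\nu_\beta$ send this atom to the tuple $\beta$, they must coincide on $\vars(\myCheck\gamma)$ and therefore on $\vars(\gamma)$. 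Hence $h(\gamma)=\nu_\beta(\gamma)\notin\D$, i.e.\ $\sneg h(\gamma)\in\sneg\Dneg$, and $h$ satisfies $\lnot\gamma$. Inequality atoms in $q_-$ are handled the same way (their variables lie in $\vars(q_+)$ by "safe negation"). The converse direction of the strengthened statement is the easier one: given a witnessing assignment $\nu$ for $S$ being a positive support, set $T=\nu(q_+)\subseteq\D$. By "non-mergeability@mergeable atom", each $\beta\in T$ has a unique positive-atom preimage $\myHat\beta$ under $\nu$, and $\nu$ itself certifies $\beta$ as lying in the evaluation of $q_\beta$ on $\D$, so $T\subseteq\D''$ and $\nu$ exhibits $T$ as a support of $q_+$ on $\D''$.

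With this equivalence in hand, the claim follows in both directions. Assuming $S$ is a minimal positive support, the set $T=\nu(q_+)\subseteq S$ from the converse direction is itself a positive support of $q$ in $\D$ by the forward direction, so minimality forces $S=T\subseteq\D''$, giving that $S$ is a support of $q_+$ on $\D''$; any strict subset forming a $q_+$-support on $\D''$ would by the forward direction also be a positive support, contradicting minimality of $S$. Assuming $S$ is a minimal support of $q_+$ on $\D''$, the forward direction gives that $S$ is a positive support of $q$ in $\D$; and any strict subset $S'\subsetneq S$ that were a positive support would, via the converse direction, contain a $q_+$-support on $\D''$ strictly smaller than $S$, again contradicting minimality.

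The main delicate point I anticipate is the first bullet: carefully verifying that the assignment $\nu_\beta$ supplied by the construction of $\D''$ really agrees with $h$ on the variables of each guarded negated atom. This is what makes the guardedness hypothesis essential — without the inclusion $\vars(\gamma)\subseteq\vars(\myCheck\gamma)$, the values $\nu_\beta(\gamma)$ and $h(\gamma)$ need not match, and the argument collapses. Non-mergeability is also crucial here, as it is what guarantees that $\myHat\beta$ is well-defined and unique, so the local witness $\nu_\beta$ is unambiguously attached to each $\beta\in S$.
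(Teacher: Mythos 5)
Your argument for the negated\hyp{}atom part is correct and is essentially the paper's own proof: the same two ingredients do all the work --- guardedness ($\vars(\gamma)\subseteq\vars(\myCheck\gamma)$) so that $h(\myCheck\gamma)=\beta\in \D''$ pins down $h(\gamma)=\nu_\beta(\gamma)\notin\D$, and non\hyp{}mergeability so that $\myHat\beta$ is unique and hence $\myCheck\gamma=\myHat\beta$ --- and you merely repackage the paper's two three\hyp{}item verifications as a single support\hyp{}level biconditional followed by a routine minimality transfer, which is a slightly cleaner organization but not a different route.

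The one step that does not hold up is the parenthetical assertion that inequality atoms are ``handled the same way.'' The query $q_\beta=\myHat\beta\land\bigwedge_{\myCheck\gamma=\myHat\beta}\lnot\gamma$ used to define $\D''$ collects only the guarded negated relational atoms attached to $\myHat\beta$; inequality atoms are not required to be guarded, do not appear in any $q_\beta$, and therefore impose no constraint on $\D''$ at all. Safety of negation only gives that their variables are in $\vars(q_+)$, which makes $h$ defined on them but says nothing about whether $h$ satisfies them. Concretely, for $q=\exists xy\,R(x)\land R'(y)\land x\neq y$ (vacuously guarded, no mergeable atoms) and $\D=\set{R(a),R'(a),R'(b)}$, one gets $\D''=\D$, and $\set{R(a),R'(a)}$ is a minimal support of $q_+$ on $\D''$ but not a positive support of $q$. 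To be fair, the paper's own proof of this claim is equally silent on inequality atoms, so this is an inherited omission rather than one you introduced --- but you should flag that the mechanism genuinely does not cover them rather than assert that it does.
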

      \begin{nestedproof}
         From left to right, if $S$ is a "minimal" "positive support", let us show that (1) $S \subseteq \D''$, (2) $S$ is a "(positive) support@support" of $q_+$, and (3) $S$ is a "minimal" support of %
         $q_+$.

         \begin{enumerate}[(1)]
            \item Consider any "satisfying assignment" $\nu : \vars(q) \to \const(\D)$ such that $\nu(q_+) = S$. It is clear that all atoms of $S$ must be in $\D'$ by definition. Further, if there was one atom $\beta \in S$ which is not in $\D''$, it must be because $\beta \not\in q_\beta(\D)$. Since $\myHat \beta$ is the only positive atom of $q$ that can be mapped to $\beta$, we would obtain that $\nu(\myHat \beta) = \beta$ and further that $\nu(\gamma) \not\in \D$ for every $\gamma$ such that $\myCheck \gamma = \myHat \beta$ (since $\nu$ is a "satisfying assignment"). Hence, by definition of $q_\beta$, $\beta$ must belong to $q_\beta(\D)$, a contradiction with the previous statement.
            \item As $S$ is a "positive support" of $q$, we have $\spos S \cup \sneg \Dneg[\D] \models \querypm$, which means that $S$ satisfies all positive atoms in $q$, "ie", $S \models q_+$.
            \item By means of contradiction, if $S' \models q_+$ for some $S' \subsetneq S$, consider $h : q_+ \homto S'$, and let us show that $h$ is a "satisfying assignment" 
            for $\querypm$ on $\spos S' \cup \sneg \Dneg$ contradicting that $S$ is a "minimal" "positive support".
            Since $\D'' \subseteq \D$, it suffices to show that $h(\gamma) \not\in \D$ for every $\lnot\gamma$ in $q_-$.
            Take any such $\lnot\gamma$, and let $\beta = h(\myCheck \gamma) \in \D''$. 
            By definition of $\D''$, we must have that $\beta \in q_\beta(\D)$, and in particular this means that $h(\gamma) \not\in\D$ since $\vars(\gamma) \subseteq \vars(\myCheck \gamma)$.
         \end{enumerate}

         From right to left, if $S$ is a "minimal support" of $q_+$ in $\D''$, let us show that (1) $S \subseteq \D$, (2) $S$ is a "positive support" in $\D$ for $q$, and (3) $S$ is a "minimal" "positive support" for $q$  in $\D$.
         \begin{enumerate}[(1)]
            \item Trivial since $S \subseteq \D'' \subseteq \D$.
            \item Let $h: q_+ \homto S$. Since $h(\myCheck \gamma) \in \D''$ for every $\lnot\gamma$ of $q_-$, this means that $h(\myCheck \gamma) \in q_{h(\myCheck \gamma)} (\D)$ by definition of $\D''$ and thus $h(\gamma) \not\in \D$.
            Hence, $h$ is a "satisfying assignment" for $\querypm$ on $\spos S \cup \sneg \Dneg[\D'']$.
            \item If there was $S' \subsetneq S$ such that $S' \models q_+$, by the argument of the preceding item we would obtain $\spos S' \cup \sneg \Dneg[\D''] \models \querypm$.\qedhere
         \end{enumerate}
      \end{nestedproof}
      \smallskip
   \fi
   Hence, \ifcameraready\else by the previous claim,\fi \ there is a polynomial-time reduction from $\monMSShapley{\+C}$ to  $\msShapley{\+C_+}$, for $\+C_+ = \set{q_+ : q \in \+C}$. Since the queries of $\+C_+$ have no "mergeable atoms" and the same "generalized hypertree width" bound as $\+C$, we can apply \Cref{Theorem 6.6 and more from PODS25}
   to conclude that $\msShapley{\+C_+}$, and thus $\monMSShapley{\+C}$, are in polynomial time.
\end{proof}

\subsection{Computing Drastic-Shapley Values for Positive Facts}
\label{sec:global-drshapley}
\AP
The only existing tractability result for computing a drastic-Shapley value for "CQneg"s concerns the class of "self-join free" queries \cite{ReshefKL20}, and so we shall also restrict our attention to the class of "sjfCQneg"s. To transfer this tractability result to our setting, we rely on a slight adaptation of the notion of ``non-hierarchical paths'' introduced in \cite[§4.1]{ReshefKL20}. 
The definition makes use of the ""Gaifman graph"" $\intro*\Gaifman q$ of a query $q$, whose set of vertices is $\vars(q)$ and which contains an edge between $u$ and $v$ if some atom of $q$ contains both $u$ and $v$.
We say a "CQneg" $q$ has a ""non-hierarchical neg-path""\footnote{This definition coincides with that of ``non-hierarchical path'' from \cite{{ReshefKL20}} except that in Item (1), we forbid relations in "negated atoms", rather than excluding atoms using exogenous relations.} if there are two atoms $\alpha_x$, $\alpha_y$ whose relations are $R_x$ and $R_y$ and two variables $x,y$ such that:
(1) the relations $R_x$ and 
$R_y$ do not occur in "negated atoms", 
(2) variable $x$ appears in $\alpha_x$ but not $\alpha_y$, while variable $y$ appears  in $\alpha_y$ but not $\alpha_x$, 
and (3) the graph obtained from $\Gaifman q$ by removing every vertex corresponding to a variable in $\alpha_x$ or $\alpha_y$ (excepting $x$ and $y$) contains a path between $x$ and $y$.

\ifcameraready
   \begin{proposition}\label{prop:pos-dr-tractability}
      If a "sjfCQneg" $q$ has no "non-hierarchical neg-path", then $\monDRShapley{q} \in \FP$.
   \end{proposition}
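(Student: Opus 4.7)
The plan is to adapt the tractable algorithm of \cite[Section~4]{ReshefKL20}---which uses the same no-"non-hierarchical neg-path" hypothesis but for their impact-based drastic Shapley measure---to the measure $\monDRscorefun[q]$. The first step is to identify null players: let $\D^{*} \subseteq \D$ be the set of facts $\beta$ for which some "satisfying assignment" $\nu: \vars(q) \to \const(\D)$ maps a "positive atom" of $q$ to $\beta$. Since $q$ is fixed, $\D^{*}$ can be computed in polynomial time by evaluating, for each positive atom $R(\bar t)$ of $q$ and each $R$-fact $\beta \in \D$, a "Boolean" "CQneg" that pins $R(\bar t)$ to $\beta$. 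Because every positive slot lives inside $\D^{*}$, the function $\monDRscorefun[q](S)$ depends on $S$ only through $S \cap \D^{*}$, and so a standard argument on Shapley values gives $\Sh(\D, \monDRscorefun[q], \alpha) = 0$ for $\alpha \in \D \setminus \D^{*}$ and $\Sh(\D, \monDRscorefun[q], \alpha) = \Sh(\D^{*}, \monDRscorefun[q], \alpha)$ for $\alpha \in \D^{*}$.

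Second, because $q$ is "self-join free", each "satisfying assignment" $\nu$ injectively maps the positive atoms of $q$ to $|q_+|$ distinct facts of $\D^{*}$, producing a ``positive slot'' $\nu(q_+) \subseteq \D^{*}$. These positive slots are exactly the minimal winning coalitions of $\monDRscorefun[q]$, so $\Sh(\D^{*}, \monDRscorefun[q], \alpha)$ decomposes as $\sum_{k} F_{\alpha}(k) \cdot w(k)$ with $w(k) = k!\,(|\D^{*}|-k-1)!/|\D^{*}|!$, where $F_{\alpha}(k)$ counts the subsets $S \subseteq \D^{*} \setminus \set\alpha$ of size $k$ that contain no positive slot while $S \cup \set\alpha$ does. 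The no-"non-hierarchical neg-path" hypothesis ensures that $q_+$ is hierarchical and, crucially, that no "negated atom" of $q$ connects variables lying in distinct branches of this hierarchy (otherwise a forbidden "non-hierarchical neg-path" would appear in $\Gaifman q$). This opens the way to a recursive computation of each $F_{\alpha}(k)$ following \cite[Section~4]{ReshefKL20}: branch over the values $c \in \const(\D)$ of a root variable of $q_+$, prune positive atoms and facts incompatible with the choice $c$, ``hardwire'' the truth of the relevant negated atoms against $\D$, and aggregate the resulting independent sub-problems by convolving their size-indexed counts.

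The main obstacle is the careful handling of negated atoms during the recursion. Unlike the impact measure of \cite{ReshefKL20}, our $\monDRscorefun[q]$ encodes a \emph{global} consistency between positive-atom images (which must lie in $S$) and the absence of specific facts from $\D$; during the recursion this global constraint must be localised to each subtree. The no-"non-hierarchical neg-path" hypothesis is precisely what guarantees that every negated atom's variables stay inside a single subtree of the hierarchical decomposition at each step, so the local computations remain independent and the overall algorithm runs in polynomial time in $|\D|$.
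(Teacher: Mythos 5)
Your overall strategy rests on the same two observations as the paper's proof: (i) for the wealth function $\monDRscorefun[q]$ the negated atoms (and inequalities) are \emph{static} with respect to the coalition $S$ --- they are checked against the fixed database $\D$, not against $S$ --- and (ii) the absence of a non-hierarchical neg-path is exactly the no-non-hierarchical-path hypothesis of \cite[Theorem~4.3]{ReshefKL20} once the negated relations are frozen. Where you diverge is in execution: the paper does not re-derive or adapt the algorithm of \cite{ReshefKL20} but performs a syntactic reduction to it --- replace each negated relation $N_j$ by its complement $\bar N_j$ over the active domain (and each inequality by a materialized binary relation $\bar E_k$), declare these new relations \emph{exogenous}, and invoke \cite[Theorem~4.3]{ReshefKL20} as a black box on the resulting self-join-free query, which is positive and has no non-hierarchical path. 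Your preliminary step (pruning to $\D^{*}$ and discarding null players) is correct but is subsumed by this reduction, since the facts outside $\D^{*}$ either live in exogenous relations or receive value $0$ anyway.

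The gap in your write-up is precisely the part you flag as ``the main obstacle'': you assert that the recursion of \cite[\S 4]{ReshefKL20} can be adapted so as to ``hardwire'' the negated atoms against $\D$ and that the no-non-hierarchical-neg-path hypothesis keeps the resulting subproblems independent, but you never specify the adapted recursion nor verify that the size-indexed counts $F_{\alpha}(k)$ it produces are correct in the presence of the frozen negative constraints. That verification is the entire technical content of the proposition, so as written this is a claim rather than a proof. (You also silently drop the inequality atoms, which belong to the query class considered and must be handled; the paper materializes them as exogenous relations alongside the complemented $\bar N_j$.) The cleanest way to close the gap is the reduction sketched above: it converts every negated and inequality constraint into a positive atom over an exogenous relation whose extension is computable in polynomial time from $\D$, after which $\Sh(\D,\monDRscorefun[q],\alpha)$ coincides with the exogenous drastic-Shapley value already analyzed in \cite{ReshefKL20}, and no new algorithm needs to be designed or re-proved correct.
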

\else
   \begin{propositionrep}\label{prop:pos-dr-tractability}
      If a "sjfCQneg" $q$ has no "non-hierarchical neg-path", then $\monDRShapley{q} \in \FP$.
   \end{propositionrep}
   \begin{proof}
   We %
   establish tractability by reducing the considered problem to a tractable setting for a different drastic-Shapley-based responsibility measure from \cite{{ReshefKL20}}. In the latter work, it is assumed that the input database $\D$ is partitioned into an exogenous part $\D_\mathsf{x}$
   and endogenous part $\D_\mathsf{n}$, with only the endogenous facts receiving values. Furthermore, one may specify a subset $X$ of the relations as being purely exogenous, "ie", these relations can only contain exogenous facts (the other relations may have both exogenous and endogenous facts). 
   The task studied in \cite[§4]{{ReshefKL20}} is to compute, for a fixed "CQneg" $q$, "schema" $\Sigma$, and set $X$ of exogenous relations, the value $\Sh(\D_\mathsf{n}, \xi^\mathsf{dr, x}_{q,\D}, \alpha)$, where the input is a partitioned database $\D=(\D_\mathsf{x}, \D_\mathsf{n})$ and fact $\alpha \in \D_\mathsf{n}$. The "wealth function" $\xi^\mathsf{dr, x}_{q,\D}: 2^{\D_\mathsf{n}} \rightarrow  \Reals$ used in %
   \cite{{ReshefKL20}} is as follows:  $\xi^\mathsf{dr, x}_{q,\D}(B) = 1$ if $B \cup  \D_\mathsf{x} \models q$ and $\D_\mathsf{x} \not \models q$, 
   $\xi^\mathsf{dr, x}_{q,\D}(B) = -1$ if $B \cup  \D_\mathsf{x} \not \models q$ and $\D_\mathsf{x} \models q$, 
   else $\xi^\mathsf{dr, x}_{q,\D}(B) =0$. 

   Let us consider a (Boolean) "sjfCQneg" $q\defeq \exists \bar{x} ~ \bigwedge_{i=1}^{p} P_i(\bar{t_i}) \land \bigwedge_{j=1}^{n} \lnot N_j(\bar{t_j}) \wedge \bigwedge_{\ell=1}^m t_{k_\ell}^1 \neq t_{k_\ell}^2$ over the "schema" $\Sigma=\{P_1, \ldots, P_n, N_1, \ldots, N_n\}$.
   As $q$ is "self-join free", we further know that every $P_i$ and $N_j$ is a distinct relation.  
   In order to reuse results from \cite{{ReshefKL20}}, we will modify the "schema" and query as follows. We take the "schema" $\Sigma' = \{P_1,\ldots, P_p, \bar{N}_1, \ldots, \bar{N}_n, \bar{E}_1, \ldots, \bar{E}_k  \}$, where intuitively $\bar{N}_j$ will contain the complement of $N_j$ and each $\bar{E}_k$ will contain a separate copy of the binary inequality relation. 
   Translating the original query into the new "schema", we obtain:
   $$q'\defeq \exists \bar{x} ~ \bigwedge_{i=1}^{p} P_i(\bar{t_i}) \land \bigwedge_{j=1}^{n} \bar{N}_j(\bar{t_j}) \wedge \bigwedge_{\ell=1}^m \bar{E}_k(x_{k_\ell}^1, x_{k_\ell}^2)$$ We can also translate any $\Sigma$-database $\D$ into a $\Sigma'$-database in the expected way:
   \begin{align*}
   \D' = & \{P_i(\bar{c}) : 1 \leq i \leq p, P_i(\bar{c}) \in \D\} \cup \{\bar{N}_j(\bar{c}) : 1 \leq j \leq n, \const(\bar c) \subseteq \const(\D), N_j(\bar{c}) \not \in \D\}\\ & \cup \{\bar{E}_k(c,d) : 1 \leq k \leq \ell, c,d \in \const(\D) \cup \const(q), c \neq d\}
   \end{align*}
   We will fix $X=\{\bar{N}_1, \ldots, \bar{N}_n, \bar{E}_1, \ldots, \bar{E}_k  \}$ as the exogenous relations, and define $\D'_\mathsf{x}$ (resp.\ $\D'_\mathsf{n}$) as those facts in $\D'$ whose relation belongs to (resp.\ does not belong to) $X$. 

   It can be verified that for every $\alpha \in \D$: $\Sh(\D,\xi_{q,\D}^{\mondrasticindex},\alpha)\neq 0$ iff $\alpha \in \D \cap \D' = \D'_\mathsf{n}$ ("ie", has relation $P_i$). Moreover, one can show that $\Sh(\D,\xi_{q,\D}^{\mondrasticindex},\alpha)= \Sh(\D'_\mathsf{n},\xi^\mathsf{dr, x}_{q',\D'},\alpha)$. 
   Intuitively this holds because we ensure that the negated and inequality facts are always present by treating them as exogenous. 
   Thus, it suffices to first construct $q'$ and $\D'$, then 
   compute $\Sh(\D'_\mathsf{n},\xi_{q,\D}^{\mondrasticindex},\alpha)$. Importantly, as $q$ is "self-join free" and has no "non-hierarchical neg-path", the query $q'$ will also be "self-join free" and without any non-hierarchical path, and thus the latter value can be computed in  $\FP$ due to \cite[Theorem 4.3]{{ReshefKL20}}. Membership in $\FP$ then follows from the fact that $q'$, $\Sigma'$, and $X$ can be constructed independently of $\D$, and
   $\D'$ can be constructed in $\FP$ in data complexity from $\D$.
   \end{proof}
\fi

The hardness side of \cite[Theorem 4.3]{ReshefKL20} cannot be used to show the hardness of the remaining "sjfCQneg" with "non-hierarchical neg-path"s because it heavily relies on "exogenous" facts which we do not consider here. However, %
by using relations of "negated atoms" to `encode' exogenous relations we can sometimes transfer intractability results to our setting: %

\ifcameraready
   \begin{proposition}%
   Consider the "CQneg" $q_{\mathsf{R\lnot ST}} \defeq \exists x,y. R(x) \land \lnot S(x,y) \land T(y)$. Then $\mathrm{SVC}^{\mondrasticindex}_{q_{\mathsf{R\lnot ST}}}$ is $\sP$-hard.
   \end{proposition}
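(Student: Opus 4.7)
The plan is to reduce from the computation of the classical drastic-Shapley measure with exogenous facts for $q_{\mathsf{R\lnot ST}}$, which is $\sP$-hard: since $q_{\mathsf{R\lnot ST}}$ has a "non-hierarchical neg-path" (witnessed by the atoms $R(x)$ and $T(y)$, joined by the edge $x$--$y$ in $\Gaifman{q_{\mathsf{R\lnot ST}}}$ that remains after removing the other variables of $R(x)$ and $T(y)$), the hardness direction of the dichotomy of \cite[Theorem~4.3]{ReshefKL20} yields $\sP$-hardness already on instances whose exogenous part $\D_\mathsf{x}$ consists only of $S$-facts and whose endogenous part $\D_\mathsf{n}$ consists only of $R$- and $T$-facts.

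Given such an input $(\D_\mathsf{n}, \D_\mathsf{x}, \alpha)$ with $\alpha \in \D_\mathsf{n}$, the reduction simply forms the plain database $\D \defeq \D_\mathsf{n} \cup \D_\mathsf{x}$ and queries the oracle for $\Sh(\D, \monDRscorefun[q_{\mathsf{R\lnot ST}}], \alpha)$. Two observations drive the correctness. First, a set $B \subseteq \D_\mathsf{n}$ contains a "positive support" of $q_{\mathsf{R\lnot ST}}$ in $\D$ iff it contains a pair $R(a), T(b)$ with $S(a,b) \notin \D_\mathsf{x}$, which is exactly the condition $B \cup \D_\mathsf{x} \models q_{\mathsf{R\lnot ST}}$; since $\D_\mathsf{x}$ carries no $R$- or $T$-fact, we have $\D_\mathsf{x} \not\models q_{\mathsf{R\lnot ST}}$ and the $-1$ case of the classical exogenous drastic wealth function vanishes, so $\monDRscorefun[q_{\mathsf{R\lnot ST}}]$ and the wealth function used by \cite{ReshefKL20} agree on every $B \subseteq \D_\mathsf{n}$.

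Second, every $S$-fact in $\D$ is a null player for $\monDRscorefun[q_{\mathsf{R\lnot ST}}]$: the only atom of $q_{\mathsf{R\lnot ST}}$ over $S$ is negated, so adding or removing an $S$-fact from any $B$ never changes whether $B$ contains a pair as above. A standard averaging argument — grouping the $|\D|!$ orderings of $\D$ by their restriction to $\D_\mathsf{n}$ and exploiting that the marginal contribution of $\alpha$ is independent of where the $S$-facts are inserted — yields
\[
\Sh(\D, \monDRscorefun[q_{\mathsf{R\lnot ST}}], \alpha) \;=\; \Sh(\D_\mathsf{n}, \monDRscorefun[q_{\mathsf{R\lnot ST}}], \alpha),
\]
where on the right we still evaluate the wealth function via the positive supports computed with respect to the full $\D$. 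Combined with the first observation, this equals the $\sP$-hard classical value, so a single polynomial-time reduction solves it.

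The main subtlety is precisely this null-player manipulation: because $\monDRscorefun[q_{\mathsf{R\lnot ST}}]$ is parameterized by $\D$, one must distinguish carefully between \emph{restricting the ground set of the Shapley average to $\D_\mathsf{n}$} (which is what the argument does) and \emph{recomputing the positive supports relative to $\D_\mathsf{n}$ alone} (which would introduce spurious supports $\{R(a), T(b)\}$ with $S(a,b) \in \D_\mathsf{x}$, and thus change the answer). Once this bookkeeping is performed, the rest of the proof is a routine transcription.
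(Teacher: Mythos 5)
Your reduction is internally sound but takes a genuinely different route from the paper's. You reduce from the exogenous drastic-Shapley problem for $q_{\mathsf{R\lnot ST}}$ itself, invoking the hardness side of \cite[Theorem 4.3]{ReshefKL20}; the paper instead goes back to the hardness result for the \emph{monotone} query $q_{\mathsf{RST}} \defeq \exists x,y.\, R(x)\land S(x,y)\land T(y)$ from \cite[Proposition 4.6]{livshitsShapleyValueTuples2021} and complements the $S$ relation over the active domain, so that the negated atom $\lnot S(x,y)$ evaluated on the complemented database simulates the positive atom $S(x,y)$ on the original. Apart from this choice of source problem (and the resulting presence or absence of the complementation step), the two arguments share the same two ingredients, both of which you identify and justify correctly: (i) on sets $B$ of $R$- and $T$-facts, the wealth function $\monDRscorefun[q_{\mathsf{R\lnot ST}}]$ coincides with the exogenous drastic wealth function $\xi^{\mathsf{dr,x}}$, the $-1$ branch vanishing because the $S$-facts alone do not satisfy the query; and (ii) the positive $S$-facts are null players for $\monDRscorefun[q_{\mathsf{R\lnot ST}}]$, so they can be deleted from the ground set of the Shapley average without changing $\alpha$'s score. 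Your closing warning about not recomputing positive supports relative to $\D_\mathsf{n}$ alone is exactly the right piece of bookkeeping.

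The one step you cannot take for granted is the premise that \cite[Theorem 4.3]{ReshefKL20} is already $\sP$-hard on instances whose exogenous part consists \emph{only} of $S$-facts and whose endogenous part consists only of $R$- and $T$-facts. The theorem statement gives no such guarantee: in that framework only the designated relations are forced to be purely exogenous, and the remaining relations may still contain exogenous facts. If the hard instances of their proof contained even one exogenous $R$- or $T$-fact, your construction would promote it to an ordinary fact of $\D$, it would then participate in the Shapley average, and the two values would no longer agree. So this is a claim about the internals of a cited proof that must be checked, not derived from the statement. The paper faces the same kind of obligation but discharges it against the much more explicit construction of \cite[Proposition 4.6]{livshitsShapleyValueTuples2021}, where it is immediate that all $S$-facts are exogenous and all $R$- and $T$-facts endogenous. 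To close your argument you should either verify the corresponding structural property of the hardness construction in \cite{ReshefKL20}, or swap your source problem for the monotone one and add the $S$-complementation step as the paper does.
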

\else
   \begin{propositionrep}%
   Consider the "CQneg" $q_{\mathsf{R\lnot ST}} \defeq \exists x,y. R(x) \land \lnot S(x,y) \land T(y)$. Then $\mathrm{SVC}^{\mondrasticindex}_{q_{\mathsf{R\lnot ST}}}$ is $\sP$-hard.
   \end{propositionrep}
   \begin{proof}
   We can observe that in the proof showing $\sP$-hardness of the "CQ" $q_{\mathsf{RST}} \defeq \exists x,y. R(x) \land S(x,y) \land T(y)$ for the drastic-Shapley measure in \cite[Proposition 4.6]{livshitsShapleyValueTuples2021}, all of the considered partitioned databases
   designate all tuples in $R$ and $T$ as "endogenous", and all tuples in $S$ as "exogenous". 
   Given any database $\D$ over the query "schema" $\{R,S,T\}$, let us denote by $\D_{\bar S}$ the database obtained from $\D$ by leaving the $R$ and $T$ relations untouched and including the fact $S(c,d)$ in $\D_{\bar S}$ iff $S(c,d) \not \in \D$ and $c,d \in \const(\D)$ ("ie", we replace $S$ by its complement over the "active domain").
   By examining the definition of $\xi^\mathsf{pdr}$, %
   one can show that for every partitioned database $\D=(\D_\mathsf{n}, \D_\mathsf{x})$ over "schema" $\{R,S,T\}$ such that all $R$- and $T$-facts are in $\D_\mathsf{n}$
   and all $S$-facts in $\D_\mathsf{x}$,
   the following holds for every $\alpha \in \D_\mathsf{n}$: 
   $$\Sh(\D_\mathsf{n}, \xi^\mathsf{dr,x}_{q_{\mathsf{RST}}}, \alpha) = \Sh(\D_{\bar S}, \xi^\mathsf{pdr}_{q_{\mathsf{R\lnot ST}}}, \alpha)$$
   where the `$\mathsf{x}$' in the superscript of $\xi^\mathsf{dr,x}_{q_{\mathsf{RST}}}$ indicates that we are taking into account the exogenous facts (see proof of Proposition \ref{prop:pos-dr-tractability} for details). This follows from the facts that (i) for every $B \subseteq \D_\mathsf{n}$: $\xi^\mathsf{dr,x}_{q_{\mathsf{RST}}}(B) = \xi^\mathsf{pdr}_{q_{\mathsf{R\lnot ST}}}(B)$, and (ii) each "positive $S$-fact@positive fact" $\beta \in \D_{\bar S}$ is not "positive-relevant", hence $\Sh(\D_{\bar S}, \xi^\mathsf{pdr}_{q_{\mathsf{R\lnot ST}}}, \beta)=0$ (so the placement of such elements in orderings of $\D_{\bar S}$ can be ignored). 
   It follows that we may carry out essentially the same reduction as in \cite[Proposition 4.6]{livshitsShapleyValueTuples2021}, 
   but using instead the query $q_{\mathsf{R\lnot ST}}$, the "wealth function" $\xi^\mathsf{pdr}_{q_{\mathsf{R\lnot ST}}}$, and the database $\D_{\bar S}$,
   thereby establishing $\sP$-hardness of $\mathrm{SVC}^{\mondrasticindex}_{q_{\mathsf{R\lnot ST}}}$. 
   \end{proof}
\fi

\section{Related Work on Shapley Values for Queries with Negation}\ifcameraready\nosectionappendix\fi %
\label{sec:local-monotone}
Another notion of relevance that has been put forth is based on the `impact' that the addition of a given fact may have on the satisfaction of the query on a sub-database. More concretely, for a given Boolean query $q$ and a database $D$, a fact $\alpha$ has a \AP""positive impact"" ("resp" \AP""negative impact"") on $D' \subseteq D$ if we have $D' \not\models q$ and $D' \cup \set\alpha \models q$ ("resp" $D'  \models q$ and $D' \cup \set\alpha \not\models q$). 
A fact is then \AP""impact-relevant"" -- called `relevant' in \cite{ReshefKL20} -- if it has some \AP""impact"" (positive, negative, or both) on some sub-database.%
    \footnote{
        Observe that, for monotone queries, it is equivalent whether a fact is "impact-relevant", has "positive impact", is part of a "minimal support", or is an actual cause in the framework of causal responsibility \cite{DBLP:journals/pvldb/MeliouGMS11}. 
    }

The article \cite{ReshefKL20} introduces and investigates precisely this notion of relevance, in particular studying the complexity of whether a fact is "impact-relevant" for a given query.
Through this lens, the cited work assigns a positive responsibility score to a fact which has a "positive impact", a negative score to those with a "negative impact", and null score to facts which have no "impact".
But it should be noted that facts may have simultaneously positive and negative "impacts", in which case some relative weight of the `positivity' and `negativity' must be computed, which in particular may output a null score, even though the fact has an "impact" and thus is "impact-relevant" under this view.

Akin to what is done in the case of monotone queries, \cite{ReshefKL20} defines a drastic-Shapley value of a fact $\alpha$ in a "database" $\D$ for a (non-monotone) query $q$ as being proportional to the number of sub-databases on which $\alpha$ has a "positive impact" minus the number of sub-databases on which it has a "negative impact". 
More precisely, they consider the value $\Sh(\D, \drscorefun[q], \alpha)$ for the "drastic@drastic-Shapley" wealth function $\drscorefun[q]$  (defined as $\drscorefun[q](S) = 1$ if $S \models q$  and $\drscorefun[q](S) = 0$ otherwise).\footnote{We stress that here we are adapting \cite{ReshefKL20}'s definition to our simpler setting without  "exogenous facts".}
Observe that while a non-null score implies "impact-relevance", the converse does not always hold (this was already observed by the authors \cite[Example~5.3]{ReshefKL20}, see also \Cref{prp:impact-deter}). 
However, if a fact has "positive impact" but no "negative impact", the score will be strictly positive.
In the example of \Cref{fig:ex-recipe}, the fact $\alpha = I(\textit{mp}, \textit{wine})$ has positive score $\Sh(\D, \drscorefun[q], \alpha)>0$ since: (a) it makes a "positive impact" on the empty sub-database, and (b) it cannot ever make a "negative impact" on a sub-database -- in other words, $\drscorefun[q](S) = 0$ implies $\drscorefun[q](S \setminus \set\alpha) = 0$.
The following results show that "impact-relevance" is orthogonal to the relevance notions we have introduced in our work.

\begin{lemmarep}
    There exist  "impact-relevant" facts that are neither "positive-relevant" nor "signed-relevant" for "CQneg" queries, 
    and facts that are "positive-relevant" and "signed-relevant" but not "impact-relevant" for "UCQneg" queries.
\end{lemmarep}
\begin{proof}
    \proofcase{First statement.} In the example of \Cref{fig:ex-recipe}, the fact $I(\textit{mp}, \textit{wine})$ is "impact-relevant" but not "signed-relevant" nor "positive-relevant".

    \smallskip

    \proofcase{Second statement.} Consider the "UCQneg" 
            $q = (\exists x ~  A(x) \land B(x))
                        \lor       
                (\exists x ~ \lnot A(x) \land B(x))$
    and the database $\D = \set{A(c),B(c)}$.
    It is easy to see that the fact $A(c)$ is not "impact-relevant". 
    However, $\spos A(c)$ is "signed-relevant" since it belongs to the "minimal" "signed support" $\set{\spos A(c), \spos B(c)}$, and $A(c)$ is "positive-relevant" since it belongs to the "minimal"   "positive support" $\set{ A(c), B(c)}$.
\end{proof}

\ifcameraready
    \begin{proposition}\label{prp:impact-deter}
    There exist a database $\D$, a "UCQneg" $q$, a "positive-relevant" and "impact-relevant" fact $\alpha \in \D$ and a non-"positive-relevant" fact $\beta \in \D$ such that  
    $\Sh(\D, \drscorefun[q], \alpha) = 0$ and 
    $\Sh(\D, \drscorefun[q], \beta)\neq 0$. Further, $\spos \alpha$ is "signed-relevant" and $\spos\beta$ is not.%
    \footnote{%
    We currently do not have an example of a "positive-relevant" fact which has null-score for a "CQneg" (instead of a "UCQneg"), and hence we do not know if \Cref{prp:impact-deter} holds when replacing "UCQneg" with "CQneg".}
    \end{proposition}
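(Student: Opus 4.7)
The plan is to exhibit a single small witness and verify each of the seven conditions directly. Concretely, I would take the "UCQneg"
\[
q \;=\; (\exists x.\, A(x) \land B(x)) \;\lor\; (\exists x.\, C(x) \land \lnot A(x))
\]
over a schema of three unary relations, the "database" $\D = \{A(c), B(c), C(c)\}$, and the two "facts" $\alpha = A(c)$ and $\beta = C(c)$. The design idea is that the first disjunct makes $\alpha$ genuinely "positive-relevant" through the "minimal" "positive support" $\{A(c),B(c)\}$, while the second disjunct uses $\lnot A(x)$ to introduce a "negative impact" of $\alpha$ that exactly cancels its positive one in the Shapley sum; $\beta$ in turn will play the role of a fact whose only impacts are positive but that belongs to no minimal positive (or signed) support.

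First I would tabulate which of the eight subsets $S \subseteq \D$ satisfy $q$: one finds exactly $\{C(c)\}$, $\{A(c),B(c)\}$, $\{B(c),C(c)\}$ and $\{A(c),B(c),C(c)\}$. From this table, $\alpha$ has a "positive impact" on $\{B(c)\}$, so it is "impact-relevant". Moreover $\{A(c),B(c)\}$ is the unique "minimal" "positive support" of $q$ in $\D$: it is one via the first disjunct of $\querypm$, while $\{C(c)\}$ fails to be one because $\sneg A(c) \notin \sneg \Dneg[\D]$, preventing the second disjunct of $\querypm$ from being satisfied on $\spos\{C(c)\} \cup \sneg \Dneg[\D]$. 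Hence $\alpha$ is "positive-relevant" and $\beta$ is not.

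Next I would compute the two drastic-Shapley values by summing the $3! = 6$ contributions over linear orderings of $\D$, using the precomputed list of satisfying subsets. A routine case analysis shows that $\alpha$ contributes $+1$ in the ordering $B(c) < A(c) < C(c)$ and $-1$ in the ordering $C(c) < A(c) < B(c)$, with all four remaining contributions being $0$, giving $\Sh(\D, \drscorefun[q], \alpha) = 0$; whereas $\beta$ contributes $+1$ exactly when its prefix is $\emptyset$ or $\{B(c)\}$ and $0$ otherwise, giving $\Sh(\D, \drscorefun[q], \beta) = 1/2 \neq 0$. Finally, since only $A$ appears negated in $q$, \Cref{lem:Dpm-restriction-to-negative-atoms} lets me restrict attention to $\Dpmq = \{\spos A(c), \spos B(c), \spos C(c)\}$; the second disjunct of $\querypm$ cannot be satisfied there (no $\sneg A(c)$ is available), so $\{\spos A(c), \spos B(c)\}$ is the unique "minimal" "signed support", yielding "signed-relevance" of $\spos\alpha$ and non-"signed-relevance" of $\spos\beta$.

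The main obstacle lies in the design rather than the verification. One must find a query that simultaneously (i) grants $\alpha$ both a "positive impact" and a perfectly balancing "negative impact" across orderings, while keeping $\alpha$ in a "minimal" "positive support", and (ii) isolates $\beta$ so that its only impacts are positive (to keep its drastic-Shapley score strictly positive) yet $\beta$ belongs to no "minimal" "positive support" nor "minimal" "signed support". The disjunctive structure, together with the single negated atom $\lnot A(x)$ sharing the variable $x$ with the positive atom $C(x)$ in the second disjunct, is what engineers both tensions in one small example.
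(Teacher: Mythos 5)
Your proposal is correct and follows the same strategy as the paper's proof: exhibit a two-disjunct $\text{UCQ}^{\lnot}$ in which one disjunct keeps $\alpha$ inside a minimal positive (and signed) support while the negated atom in the other disjunct gives $\alpha$ an exactly cancelling negative impact, then verify all seven conditions by direct enumeration. Your witness is in fact smaller than the paper's (a $3$-fact database with $6$ orderings rather than $5$ facts with $120$), and I checked the subset table, the two Shapley computations ($0$ for $\alpha$, $1/2$ for $\beta$), and the relevance claims — they all hold; the only cosmetic difference is that your $\beta$ gets a strictly positive score where the paper's gets a strictly negative one, which is equally sufficient for the stated $\Sh(\D, \drscorefun[q], \beta)\neq 0$.
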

    \begin{proofsketch}
    Building on the example of \Cref{fig:ex-recipe}, with 
    the query $q_2 = \exists x ~ (I(x,\textit{meat}) \land I(x,\textit{wine})) \lor (I(x,\textit{fish}) \land \lnot I(x,\textit{wine}))$, we observe that $\alpha = I(mp,\textit{wine})$ is "positive-@positive-relevant" and "signed-relevant" but $\Sh(\D, \drscorefun[q], \alpha) = 0$ (since the "positive@positive impact" and "negative impacts" cancel out), and $\beta = I(mm,\textit{wine})$ is neither "positive-@positive-relevant" nor "signed-relevant" yet has $\Sh(\D, \drscorefun[q], \beta) < 0$.
    \end{proofsketch}
\else
    \begin{propositionrep}\label{prp:impact-deter}
    There exist a database $\D$, a "UCQneg" $q$, a "positive-relevant" and "impact-relevant" fact $\alpha \in \D$ and a non-"positive-relevant" fact $\beta \in \D$ such that  
    $\Sh(\D, \drscorefun[q], \alpha) = 0$ and 
    $\Sh(\D, \drscorefun[q], \beta)\neq 0$. Further, $\spos \alpha$ is "signed-relevant" and $\spos\beta$ is not.%
    \footnote{%
    We currently do not have an example of a "positive-relevant" fact which has null-score for a "CQneg" (instead of a "UCQneg"), and hence we do not know if \Cref{prp:impact-deter} holds when replacing "UCQneg" with "CQneg".}
    \end{propositionrep}
    \begin{proofsketch}
    Building on the example of \Cref{fig:ex-recipe}, with 
    the query $q_2 = \exists x ~ (I(x,\textit{meat}) \land I(x,\textit{wine})) \lor (I(x,\textit{fish}) \land \lnot I(x,\textit{wine}))$, we observe that $\alpha = I(mp,\textit{wine})$ is "positive-@positive-relevant" and "signed-relevant" but $\Sh(\D, \drscorefun[q], \alpha) = 0$ (since the "positive@positive impact" and "negative impacts" cancel out), and $\beta = I(mm,\textit{wine})$ is neither "positive-@positive-relevant" nor "signed-relevant" yet has $\Sh(\D, \drscorefun[q], \beta) < 0$.
    \end{proofsketch}
    \begin{proof}
    The instance is an adaptation of the example depicted in \Cref{fig:ex-recipe}, with the query $q_2 = \exists x ~ (I(x,\textit{meat}) \land I(x,\textit{wine})) \lor (I(x,\textit{fish}) \land \lnot I(x,\textit{wine}))$.
    Let $\alpha = I(mp,\textit{wine})$ and observe that $\alpha$ is "positive-relevant" since it belongs to the "minimal" "positive support" $\set{I(mp,\textit{meat}),I(mp,\textit{wine})}$ and it is "signed-relevant" since it belongs to the "minimal" "signed support" $\set{\spos I(mp,\textit{meat}),\spos I(mp,\textit{wine})}$. Let us now see the impact that $\alpha$ has on the different orderings.
    For brevity, let us denote the facts $I(mp,\textit{wine})$, $I(mp,\textit{meat})$, $I(mp,\textit{fish})$, $I(mm,\textit{wine})$, $I(mm,\textit{fish})$ of $\D$ using the numbers 1 to 5 (hence, under this nomenclature, $\alpha$ is $1$), and let us consider all linear orderings on $\set{1,2,3,4,5}$. The following are the orderings having a "positive impact" on $\alpha$ ("ie", those where the sub-database of facts before the position where $\alpha$ sits is not a "support" of $q_2$ but when adding $\alpha$ it becomes a "support"): 
    21345, 
    21354, 
    21435, 
    21453, 
    21534, 
    21543, 
    24135, 
    24153, 
    25413, 
    42135, 
    42153, 
    52413, 
    54213. 
    And these are the ones having a "negative impact": 
    31245, 
    31254, 
    31425, 
    31452, 
    31524, 
    31542, 
    34125, 
    34152, 
    35412, 
    43125, 
    43152, 
    53412, 
    54312.
    Observe that these are exactly the same number,\footnote{That is, 13 each one, out of the $5!$ orderings. The remaining $94$ ($=5!-2 {\cdot} 13$) orderings  yield no "impact".} and hence that 
    $\Sh(\D, \drscorefun[q], \alpha) = 0$.

    On the other hand, $\beta = I(mm,\textit{wine})$ cannot be in any "minimal" "positive support": indeed, any "positive support" $S$ containing $\beta$ is not "minimal" since $S$ must also contain $\set{I(mp,\textit{meat}),I(mp,\textit{wine})}$ to satisfy $q$ and thus removing $\beta$ still results in a "positive support". 
    It cannot be that $\spos \beta$ is in a "minimal" "signed support" either, for similar reasons. Hence  $(\spos) \beta$ is neither "positive-relevant" nor "signed-relevant".
    Further, $\beta$ is "negatively impactful": indeed adding $\beta$ to a sub-database cannot change the truth value from false to true, but it can change it from true to false. Therefore, 
    $\Sh(\D, \drscorefun[q], \beta) < 0$.

    It is immediate to see from the definitions that the score will still be $0$ for $\alpha$ and still be strictly negative for $\beta$ if we replace $\Sh$ with the Banzhaf Power Index, or any `Shapley-like' score (as defined in \cite{karmakarExpectedShapleyLikeScores2024}).
    \end{proof}
\fi

In \cite[Theorems 3.1 and 4.3]{ReshefKL20} it was shown that computing $\Sh(\D, \drscorefun[q], \alpha)$  for "self-join free" "CQneg" queries $q$ (without inequalities) is tractable in "data complexity" when the query is hierarchical, or more generally, does not contain any non-hierarchical path. As shown in \Cref{prop:pos-dr-tractability,prop:signed-dr-tractability}, analogous tractability results hold for the drastic-Shapley measures  introduced in our work. In contrast, it is shown in \cite[Proposition 5.5]{ReshefKL20} that testing whether a fact is "impact-relevant" is generally an "NP"-complete problem, while we have tractability for testing for "signed-@signed-relevant" and "positive-relevance" ("cf" \Cref{prop:signed-relevance-AC0,prop:pos-relevance-AC0}).

\section{Concluding Remarks}
\label{sec:conclusion}
Our study on responsibility measures led us to investigate and compare different notions of qualitative explanations (supports) and relevance for queries with negation. We believe that these results are of independent interest and note in particular that both "signed-relevance" and "positive-relevance" are easily computable. These notions moreover provide a solid formal underpinning for the novel responsibility measures we introduce, and our complexity study shows how existing tractability results can be smoothly extended to our proposed measures. In particular, the reduction from Lemma \ref{lem:redux:mssigned-to-msnormal}
means that the measures for signed facts can be straightforwardly implemented using algorithms for monotone queries.

While we have focused on developing measures inspired by the "MS-Shapley" and "drastic-Shapley" measures, %
all of the results for signed and positive variants of "MS-Shapley" -- namely \Cref{Theorem 6.6 and more from PODS25,prop:gmsguard,prop:gmsbarity,prop:signed-ms,prop:UCQneg-poly-MS-data} -- can be trivially extended to other "WSMS" measures based on "tractable weight function"s
\ifcameraready%
   \cite[§E.1]{thispaper-arxiv}.
\else%
   (see \Cref{app:wsms} for more details).
\fi

One can naturally extend the current framework to handle `exogenous facts' (cf. \Cref{rk:exogenous}), while preserving the established upper bounds and enabling the possibility of transferring existing intractability results. What is less clear is how to extend our approaches to other classes of queries with negation, and we leave the investigation of meaningful notions of relevance, explanation, and responsibility for other sorts of non-monotone queries -- such as universal queries -- as an interesting but challenging direction for future work. In particular, it would be relevant to explore what are the desirable properties of responsibility measures for non-monotone queries, in line with a recent study for their monotone counterparts \cite{ourpods25}, which could provide a further axiomatic justification for our proposed measures and their generalizations.

\ifcameraready\else
   \begin{toappendix}
   \subsection{Weighted sums of minimal supports}
   \label{app:wsms}
   While we have focused here in the "MS-Shapley", as remarked in the introduction this is part of a larger family of responsibility-attribution scores based on weighted sums of minimal supports, or ""WSMS"". 

   For each \AP""weight function"" $w: \Nat \to \Reals$ we obtain a responsibility score which assigns, to any "database" $\D$ "fact" $\alpha \in \D$ and (monotone) query $q$ the number\footnote{For the sake of simplicity and clarity, we consider the "weight function" $w$ to take only $|S|$ as a parameter, whereas in the definition of \cite[§4.2]{ourpods25} it may also depend on $|D|$.}
   \begin{equation}\label{eq:wsms}
      \phi_{\mathsf{wsms}}^w(\+D, q, \alpha) \defeq \sum_{\substack{S\in\Minsups q(\D)\\\alpha\in S}} w(|S|)
   \end{equation}
   where $\intro*\Minsups q(\D)$ are the "minimal supports" of $q$ in $\D$.
   As it turns out, $\phi_{\mathsf{wsms}}^w$ is also the Shapley value $\Sh(\D, \wscorefun[q]{w}, \alpha)$ for a suitable "wealth function" $\wscorefun[q]{w}$.
   We say that $w$ is a \AP""tractable weight function"" if it can further be computed in polynomial time.

   One can then obtain analogous score functions by replacing $\Minsups q$ with our notions of explanations, such as "minimal" "positive supports". It can still be seen that the resulting $\phi_{\mathsf{wsmps}}^w(\+D, q, \alpha)$ of weighted sums of "minimal" "positive supports" is still a Shapley score, by essentially the same proof as \cite[Proposition 4.4]{ourpods25}.

   Further, the upper-bound results we have seen for "MS-Shapley" --namely \Cref{Theorem 6.6 and more from PODS25,prop:gmsguard,prop:gmsbarity,prop:signed-ms,prop:UCQneg-poly-MS-data}-- can be naturally extended to their "WSMS" analog using any "tractable weight function". This is immediate from the definition  above --as observed in \cite[Proposition~6.1]{ourpods25} in the case of "monotone queries"-- since all tractable cases consist in efficiently computing the number of `minimal explanations' ("ie" "minimal" "positive supports" or "minimal" "signed supports") of any given size.

   \end{toappendix}
\fi

    \bibliographystyle{plainurl}
\bibliography{long,bib}

\end{document}